\newcommand{\gf}{\varphi} 
\newcommand{\geps}{\varepsilon}
\theoremstyle{definition}
\newtheorem{definition}{Definition}[section]
\theoremstyle{plain}
\newtheorem{lemma}[definition]{Lemma}
\newtheorem{proposition}[definition]{Proposition}
\newtheorem{theorem}[definition]{Theorem}
\begin{document}

\title{\textbf{A Poisson Algebra for Abelian Yang-Mills Fields on Riemannian Manifolds with Boundary}}
\author{%
  Homero G. Díaz-Marín\\
 Facultad de Ciencias F\'\i sico-Matem\'aticas\\
 Universidad Michoacana de San Nicol\'as de Hidalgo\\
 Ciudad Universitaria, C.P.~58060,  Morelia, Michoac\'an, M\'exico\\
 {\it E-mail address}: {\tt hdiaz@umich.mx}
}

\maketitle

\begin{abstract}
We define a family of observables for abelian Yang-Mills fields associated to compact regions $U\subseteq M$ with smooth boundary in Riemannian manifolds. Each observable is parametrized by a first variation of solutions and arises as the integration of gauge invariant conserved current along admissible hypersurfaces contained in the region. The Poisson bracket uses the integration of a canonical presymplectic current.\\[0.2cm]
 
 \textsl{MSC}: {70S10; 70S15; 58Z05; 49S05} \\
 \textsl{Keywords}: {Yang-Mills gauge fields; variational bicomplex; riemannian manifold; conserved currents}
 
\end{abstract}

\label{first}

[]\section{Introduction}\label{hel}

In Classical Covariant Field Theory two desirable conditions are required for a family of observables: In one side we require this function to separate solutions of the Euler-Lagrange equations. On~the other hand, we need the Jacobi identity in order to have a Lie (Poisson) bracket. It is a known problem to characterize those theories accomplishing these two requirements, as~pointed out {in}~\cite{Sternberg-G, Kijowski} and others. There are two main difficulties. On~one hand, under~locality assumptions, Jacobi identity is well established but generically there are few observables associated with conservation laws given by Noether's First Theorem, see for instance~\cite{Deligne-Freed}. On~the other hand, extending to non-locality of variations of solutions, we may provide enough observables, see for {instance}~\cite{Lunev, Anco-Bluman}, nevertheless the Jacobi identity does not necessarily hold, see~\cite{Olver:Ghosts}.
For linear theories there are no such difficulties, and vector fields in the space of solutions can be modeled as in Theorem~\ref{tma:darboux}, {see also}~\cite{Zapata(2019)}. For~instance, in Lorentzian globally hyperbolic spacetimes, Maxwell equations~\cite{AB} exhibit a family of observables, related to the Aharomov-Bohm effect, and~a Poisson bracket constructed with Peierls method for local variables. We provide a similar set of  observables for the abelian Yang-Mills (YM) fields on Riemannian manifolds. This could be mentioned as the novelty introduced in this work, although~our aim is to prepare the scenario for non-abelian (non-linear) YM fields. \mbox{We adopt the} Lagrangian approach of the variational bicomplex formalism, see~\cite{Zuckerman, Vitagliano, Reyes} rather than the Hamiltonian multysimplectic formalism approach to describe non abelian YM fields, see~\cite{Helein:Covariant, Helein:YM}.

We consider regions $U$ with smooth boundary $\partial U$ both contained in a $n$-dimensional Riemannian manifold, usually $n=4$. Here we avoid the complications of corners in $\partial U$ which will be treated elsewhere. For~a principal bundle we take solutions of the Yang-Mills (YM) equations for the abelian $U(1)$ structure group. We are interested in defining a family of observables for YM solutions in $U$, $\eta\in{\cal A}_U$, of~the integral form
\[
	f_\Sigma(\eta)=\int_\Sigma {\sf j}\eta^*F
\]
defined for a $3$-dimensional compact Riemannian {\em admissible} smooth hypersurface $i_\Sigma:\xymatrix{\Sigma
\ar@{^{(}->}[r]& U}$ with volume form $\nu_\Sigma$, where admissibility means $\partial\Sigma \subseteq \partial U$, see~\cite{Forger}.  {\em Observable currents}, are horizontal $(n-1)$-forms, $F\in {\sf \Omega}^{n-1,0}({\sf J}Y\vert_U)$, in~the $\infty-$ jet bundle ${\sf J} Y$ associated to sections of the affine bundle $Y\rightarrow M$ of connections. The~local invariance condition is then assumed by imposing ${\sf d_h}F\vert_U=0$, when restricted to the locus of the YM equations ${\cal E}_L$. ${\sf d_h}$ is the horizontal differential, see the notation of the variational bicomplex formalism in Appendix \ref{VarBiC}. We adapt helicity for hypersurfaces embedded properly in general compact regions $U$, rather than considering cylinder regions with space-like slices, $\Sigma\times[t_1,t_2]$, this is related to the General Boundary Formalism for field theories, see~\cite{DM-O} and references~therein.

The idea is to define the relative helicity from hydrodynamics properly adapted to YM fields as a local observable. In~order to motivate this definition we recall the notion of helicity from magneto-hydrodynamics. For~a divergence (non-autonomous) free vector field, $\xi=\xi(t)\in \frak{X}_{\partial \Sigma}(\Sigma) $ in a three-dimensional Riemannian manifold $\Sigma$ tangent to the boundary $\partial \Sigma$, helicity is defined as
\begin{equation}\label{eqn:hel0}
	\int_\Sigma \overline{g}(v,\xi)\nu_\Sigma
\end{equation}
where one considers the vector field $v=v(t)$, as~a potential in $\Sigma$. Helicity of $\xi$ measures globally the degree of self-linking of its flow. Helicity remains an invariant for every $\nu_\Sigma$-preserving diffeomorphism of $\Sigma$ that carries the boundary $\partial \Sigma$ into itself, where $\nu_\Sigma$ is given by the volume form on $\Sigma$. The~situation can be dually described in terms of $1$-forms. If~$\alpha=\overline{g}(v,\cdot)$ where $\overline{g}$ is the Riemannian metric on $\Sigma$, then under the additional topological condition, $H^2_{\rm dR}(\Sigma)=0$, there exists a potential $\alpha\in {\sf \Omega}^1(\Sigma)$ such that $d\alpha=\iota_\xi\nu_\Sigma$. Here helicity reads as
\begin{equation}\label{eqn:hel-1}
	\int_\Sigma \alpha\wedge d\alpha
\end{equation}

It does depend just on the vorticity $d\alpha$ not on the $1$-form $\alpha$ or the vector field $v$. If we adopt $v$ a divergence-free vector field or $d^{\star_\Sigma}({\alpha})=0, $ respectively, then the property of {\em isovorticity} holds for $v(t)$ for the magnetic potential, as~well as for any solution of the Euler equation of hydrodynamics. This means that $\xi(t_2)$ can be constructed as the image of $\xi(t_1)$ under a diffeomorphism and if we consider a space-time domain $\Sigma\times[t_1,t_2]$, then helicity does not depend on the parameter $t$ of the non-autonomous flow. To~review this concepts see for instance~\cite{Arnold-Khesin, Khesin}.

Under the assumption of simple connectednes of $\Sigma$, then the Lie algebra of divergence-free vector fields, have a bilinear form, {\em relative helicity}, defined as
\[
	[\alpha,\beta]_\Sigma=\int_\Sigma d\alpha \wedge \beta
\]

Notice that helicity is $[\alpha,\alpha]_\Sigma$ and also that $[\cdot,\cdot]_\Sigma$ is a symmetric bilinear form under the assumption of closedness for $\Sigma$.

Considering YM solutions $\eta=\eta_0+\gf\in{\cal A}_U$, where $\eta_0\in{\cal A}_U$ is a fixed connection and $\gf=\eta-\eta_0$ is a 1-form in $M$, we would like to define the {\em field strength helicity} as in (\ref{eqn:hel-1}). Choose a tubular neighborhood $\Sigma_\geps\subseteq U$ of $\Sigma^0:=i_\Sigma(\Sigma,\tau)$ with exponential coordinates $X_\Sigma: \Sigma\times [-\geps,\geps] \rightarrow \Sigma_\geps$, with~embedding $i_\Sigma=X_\Sigma(\cdot,0)$. We take $\tilde{\gf}=\gf+df \vert_{\Sigma_\geps}$ an {\em axial gauge fixing}, that is a $1$-form such that in $\Sigma^0$ has no normal component. In~addition, we may suppose that $\psi^{\Sigma^0}_\eta=i_{\Sigma}^*\tilde{\gf}$, as~well as $\frac{d}{d\tau}\vert_{\tau=0} \psi^{\Sigma^\tau}_\eta$ are divergence-free. See Appendix on the geometry of abelian YM fields in~\cite{DM-O}.
 
Then the helicity for abelian YM fields could be defined as
\[
	[\tilde{\gf},\tilde{\gf}]_\Sigma=\int_\Sigma \psi^{\Sigma^0}_\eta \wedge \star_\Sigma 
	 \left.\frac{d\psi^{\Sigma^\tau}_\eta}{d\tau}\right\vert_{\tau=0},
\]
where $\star_\Sigma$ is the Hodge operator associated to the induced Riemannian metric $\overline{g}$ on $\Sigma$. Hence we could define helicity as in (\ref{eqn:hel0}) for the vector fields $v,\xi$ defined as
$
	\overline{g}(v,\cdot)=\psi^{\Sigma^0}_\eta,
	\quad
 	\overline{g}(\xi,\cdot)=\left.
 	\frac{d\psi^{\Sigma^\tau}_\eta}{d\tau}
	\right\vert_{\tau=0}
	.
$

Nevertheless, this notion of helicity would depend on the gauge fixing choice, therefore {\em cannot be generalized} as a gauge invariant observable. Moreover, we do not get a local ${\sf d_h}$-closedness condition for an observable current: if $U'\subseteq U$ is an open region such that $\partial U'=\Sigma-\Sigma'$, then 
\[
	[\tilde{\gf},\tilde{\gf}]_\Sigma
	=
	[\tilde{\gf},\tilde{\gf}]_{\Sigma'}+ \int_{U'} L({\sf j}\eta)
\]
where $L$ is the Lagrangian density. We will rather try to define the {\em relative helicity} of YM fields. Take $\eta'=\eta_0+\gf'\in {\cal A}_U$ any other solution. Take a first variation of solutions $\gf$, let us define
\[
	[\gf,{\gf}']_\Sigma
	=
	\int_\Sigma
		i_\Sigma^*({\gf}'\wedge \star d{\gf})
		.
\]

Then for gauge translations $\eta'+df$ we would have $[\gf,\gf']_\Sigma=[\gf,{\gf}'+df]_\Sigma$. Moreover, if~$U'\subseteq U$ is an open region such that $\partial U'=\Sigma-\Sigma'$, then 
$
	[\gf,{\gf}']_\Sigma
	=
	[\gf,{\gf}']_{\Sigma'}
	.
$
Thus for every couple $\eta,\gf$ where $\eta\in{\cal A}_U$ and $\gf$ is a first variations of solutions, we consider the {\em antisymmetric component of the relative helicity} or simply $\gf$-{\em helicity},
\begin{equation}\label{eqn:hel2}
	h^\gf_\Sigma(\eta')
	:=
	\frac{1}{2}\left(
		[\gf,{\gf}']_\Sigma
	-
	[\gf',{\gf}]_\Sigma
	\right).
\end{equation}

In Section~\ref{sec:lie} we formalize this construction in the language of the variational bicomplex, see Appendix~\ref{VarBiC}.


\section{Variational Bicomplex Formalism for Abelian YM~Fields}\label{sec:bicomplex}


Along this section we adopt the terminology and notation of the variational bicomplex formalism, for the readers convenience we give a brief presentation and references for this in Appendix~\ref{VarBiC}. Let ${\cal P}\rightarrow M$ be a principal bundle on a Riemannian manifold $(M,g)$ with structure group $G=U(1)$ and $U\subseteq M$ a region with smooth boundary. Let $\pi:Y\rightarrow M$ with $Y={\sf J}^1{\cal P}/G$ be the affine bundle whose sections ${\Gamma}(Y)$ are the $G$-covariant connections on ${\cal P}$.

For abelian YM, the~{\em Lagrangian density} $L={\rm L}\,\nu \in\mathsf{ \Omega}^{n,0}({\sf J}Y)$ is defined by the Lagrangian
\[
	{\rm L}
	=
	\frac{1}{4}
	\sum_{i,j=1}^n\varpi^{ij} \left(A^j_i - A^i_j\right)^2
\]
where this expresion corresponds to local coordinates $(x_1,\dots,x_n;A^1,\dots,A^n;A^i_j)$ in ${\sf J}^1Y$, $\nu=dx^1\wedge\cdots \wedge d x^n$ is a fixed volume form in the base and $\varpi^{ij} =\sqrt{|\det\, g|}\,g^{ii}g^{jj}$, with $g_{ij}$ the Riemannian metric in $U$.

Then $E(L)=\sum_{i=1}^nE_i({\rm L})\vartheta^i\wedge \nu$ denote the Euler-Lagrange equations, where $\vartheta^i=\mathsf{d_v}A^i$ stands for the basis for the vertical $1-$forms in ${\sf J}^1Y$. Thus {\em YM equations} have {\em locus} which is the prolongation $ {\cal E}_L\subseteq {\sf J}Y$ of $\{E(L)=0\}\subseteq {\sf J}^2Y$. In~the local coordinate chart,
\[
	E_i({\rm L})=
		\sum_{j=1}^n\frac{\sf d}{{\sf d}{x^j}}
		\left(
			\varpi^{ij}\left( A^j_i - A^i_j\right)
		\right)=0,
		\quad \forall i=1,\dots,n.
\]

The {\em space of solutions} over $U$ is
\begin{equation}\label{eqn:A_U}
	{\cal A}_U
	=
	\left\{
		\eta\in \Gamma(Y\vert_U)\,:\,
		{\sf j}\eta(x)\in {\cal E}_L
	\right\}
\end{equation}

Thus solutions $\eta$ satisfy ${\sf j}\eta^*E(L)=0$.

The {\em linearized equations} for any (local) {\em evolutionary vector field,} $V\in \mathfrak{Ev}({\sf J} Y),$ are
\begin{equation}\label{eqn:linearized-EL}
	{\sf I}({\sf d_v}\iota_{{\sf j}V}E(L))\vert_{{\cal E}_L}=0
\end{equation}
where ${\sf I}:{\sf \Omega}^{n,k}({\sf J}Y)\rightarrow {\sf \Omega}^{n,k}({\sf J}Y)$ is the {\em integration by parts operator}, see its definition {in}~\cite{Anderson}. In~local coordinates this linearized equation reads as
\[
		\left.\sum_{j=1}^n
		\frac{\sf d}{{\sf d}{x^j}}
		\left(
			\varpi^{ij}\left( \frac{\sf d}{{\sf d}{x^i}}V^j -
			\frac{\sf d}{{\sf d}{x^j}}	V^i\right)
		\right)\right\vert_{{\cal E}_L}=0,
		\quad \forall i=1,\dots,n,
		\, V=\sum_{i=1}^n V^i\frac{\partial }{\partial A^i}.
\]

Let ${\frak F}_U\subseteq \mathfrak{Ev}({\sf J} Y)$ be the Lie subalgebra of those evolutionary vector fields satisfying the linearized Euler-Lagrange equations. The~Lie algebra ${\frak F}_U$ will turn out to be our model for {\em variations of YM solutions}. For~example, the radial evolutionary vector field $R=\sum_aA^a\frac{\partial}{\partial A^a}$ whose prolongation is
\begin{equation}\label{eqn:R}
	{\sf j}R=\sum_aA^a\frac{\partial}{\partial A^a}
	+\sum_iA^a_i\frac{\partial}{\partial A^a_i}+\dots
\end{equation}
is a symmetry of the YM PDE, i.e. $R\in{\frak F}_U$. This is a general constructions of symmetries for linear PDEs, see~\cite{Anco-Bluman}.

The {\em presymplectic current}
\[
	\Omega_L = \sum_{i,j=1}^n
		\varpi^{ij}(\vartheta^i_j- \vartheta^j_i)\wedge\vartheta^j\wedge \nu^i
\]
with $dx^i\wedge \nu^i=\nu$, has the property stated in the following general~Lemma.

\begin{lemma}[Multysimplectic formula]
For every $V,W\in{\frak F}_U$ we have 
\[
	{\sf d_h}\left(
		\iota_{{\sf j}W}		\iota_{{\sf j}V} \Omega_L\right) \vert_{{\cal E}_L}
	= 0.
\]

\end{lemma}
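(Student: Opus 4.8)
The plan is to deduce the statement from the fundamental identity of the variational bicomplex linking the presymplectic current $\Omega_L$ to the Euler--Lagrange form $E(L)$, and then to feed in the linearized equations defining $\mathfrak{F}_U$. First I would recall the first variational formula $\mathsf{d_v} L = E(L) - \mathsf{d_h}\Theta_L$, where $\Theta_L\in{\sf\Omega}^{n-1,1}({\sf J}Y)$ is the presymplectic potential; locally $\Theta_L = \sum_{i,j}\varpi^{ij}(A^j_i - A^i_j)\,\vartheta^j\wedge\nu^i$, so that $\Omega_L = \mathsf{d_v}\Theta_L$ up to an overall sign that will not matter below. Applying $\mathsf{d_v}$ to this formula and using $\mathsf{d_v}^2 = 0$ together with $\mathsf{d_v}\mathsf{d_h} = -\mathsf{d_h}\mathsf{d_v}$ gives the structural identity
\[
	\mathsf{d_h}\Omega_L = -\,\mathsf{d_v} E(L).
\]

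Next I would exploit the fact that ${\sf j}V$ is the prolongation of an evolutionary (hence vertical) vector field: such prolongations preserve the horizontal--contact bigrading, so $\mathcal{L}_{{\sf j}V}$ maps ${\sf\Omega}^{p,q}$ to ${\sf\Omega}^{p,q}$. Decomposing Cartan's formula $\mathcal{L}_{{\sf j}V} = \iota_{{\sf j}V}\mathsf{d} + \mathsf{d}\,\iota_{{\sf j}V}$ by bidegree, the bidegree-shifting part must vanish, which yields the anticommutation $\iota_{{\sf j}V}\mathsf{d_h} = -\mathsf{d_h}\iota_{{\sf j}V}$ (and likewise for ${\sf j}W$). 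Applying this twice and inserting the identity above gives
\[
	\mathsf{d_h}\!\left(\iota_{{\sf j}W}\iota_{{\sf j}V}\Omega_L\right) = \iota_{{\sf j}W}\iota_{{\sf j}V}\,\mathsf{d_h}\Omega_L = -\,\iota_{{\sf j}W}\iota_{{\sf j}V}\,\mathsf{d_v} E(L).
\]

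I would then evaluate the double contraction explicitly. Writing $\mathsf{d_v} E(L) = \sum_i \mathsf{d_v} E_i(L)\wedge\vartheta^i\wedge\nu$ and using $\iota_{{\sf j}V}\nu = 0$, $\iota_{{\sf j}V}\vartheta^i = V^i$, together with $\iota_{{\sf j}V}\mathsf{d_v} E_i(L) = {\sf j}V(E_i(L)) = \sum_j D_j\bigl(\varpi^{ij}(D_i V^j - D_j V^i)\bigr)$ (the prolongation acting through the total derivatives $D_j$ appearing in $E_i$), the contraction collapses to
\[
	\iota_{{\sf j}W}\iota_{{\sf j}V}\,\mathsf{d_v} E(L) = \sum_i\bigl({\sf j}V(E_i(L))\,W^i - V^i\,{\sf j}W(E_i(L))\bigr)\,\nu.
\]

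Finally I would restrict to ${\cal E}_L$: the coefficients ${\sf j}V(E_i(L))$ and ${\sf j}W(E_i(L))$ are exactly the local-coordinate left-hand sides of the linearized Euler--Lagrange equations~\eqref{eqn:linearized-EL}, so they vanish on ${\cal E}_L$ because $V,W\in\mathfrak{F}_U$; hence the displayed sum vanishes and the lemma follows. I expect the main obstacle to be the clean justification of the anticommutation $\iota_{{\sf j}V}\mathsf{d_h} = -\mathsf{d_h}\iota_{{\sf j}V}$, which is precisely where verticality of the prolongation and the bicomplex bigrading are essential, together with the identification of $\iota_{{\sf j}V}\mathsf{d_v} E_i(L)$ with the linearized operator --- this last step implicitly uses the self-adjointness built into the YM Lagrangian, so that the variational characterization ${\sf I}(\mathsf{d_v}\iota_{{\sf j}V}E(L))\vert_{{\cal E}_L}=0$ of $\mathfrak{F}_U$ coincides with the pointwise vanishing ${\sf j}V(E_i(L))\vert_{{\cal E}_L}=0$ used above.
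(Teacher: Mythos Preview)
Your proof is correct. The paper does not supply a proof of this lemma at all; it merely states it as a ``general Lemma'' and moves on. Your argument is the standard one and all the ingredients you invoke are already available in the paper: the identity $\mathsf{d_h}\Omega_L=-\mathsf{d_v}E(L)$ follows by applying $\mathsf{d_v}$ to the first variational formula~(\ref{eq:0}) together with $\Omega_L=-\mathsf{d_v}\Theta_L$, and the anticommutation $\mathsf{d_h}\iota_{{\sf j}V}=-\iota_{{\sf j}V}\mathsf{d_h}$ is exactly the appendix identity~(\ref{eqn:d_h-iota}). Your double-contraction computation is correct, and the final identification ${\sf j}V(E_i(\mathrm{L}))\vert_{\mathcal{E}_L}=0$ with membership in $\mathfrak{F}_U$ does not even require the self-adjointness remark you added as a safeguard: the paper itself writes the linearized equation in local coordinates immediately after~(\ref{eqn:linearized-EL}), and that expression is literally $\sum_j D_j\bigl(\varpi^{ij}(D_iV^j-D_jV^i)\bigr)\vert_{\mathcal{E}_L}=0$, which coincides with your ${\sf j}V(E_i(\mathrm{L}))$ since $E_i(\mathrm{L})$ is linear in the jet coordinates. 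So the potential obstacle you flagged is already absorbed by the paper's own statement of the linearized equations.
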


\begin{definition}[Gauge]   \label{dfn:hat-gauge}
\text{~~} 
\begin{enumerate}
\item Those first variations of solutions $V\in {\frak F}_U$ satisfying
\begin{equation}\label{eqn:LH}
	\iota_{{\sf j}W}\mathscr{L}_{{\sf j}V}\Omega_L\vert_{{\cal E}_L}
	=
		\iota_{{\sf j}W}{\sf d_v}( \iota_{{\sf j}V}\Omega_L)\vert_{{\cal E}_L}
	= \iota_{{\sf j}W}{\sf d_h}\sigma^V,
	\qquad
	\forall W\in{\frak F}_U,
\end{equation}
define the Lie subalgebra  of {\em locally Hamiltonian first variations} as $\hat{\frak F}^{\rm LH}_U\subseteq {\frak F}_U$.

\item We define the Lie algebra $\hat{\frak G}_U$ of {\em gauge first variations}  as those $X\in {\frak F}_U$ satisfying locally the presymplectic degeneracy condition, i.e.,
\begin{equation}\label{eqn:gauge-d_h}
	\iota_{{\sf j} W}\left.\left(
		\iota_{{\sf j} X}\Omega_L
	\right)\right\vert_{{\cal E}_L}
	=
	\iota_{{\sf j} W}
		{\sf d_h} \rho^X,
		\qquad
	\forall W\in{\frak F}_U.
\end{equation}

\end{enumerate}
\end{definition}

For instance, the~radial vector $R\in{\frak F}_U$ defined in (\ref{eqn:R}) is not locally hamiltonian, since it satisfies the Liouville condition $\iota_{{\sf j}R}{\sf d_v}\Omega_L=\Omega_L,$ rather than condition (\ref{eqn:LH}).

In the second part of Definition \ref{dfn:hat-gauge} we may also have adopted $X\in \mathfrak{Ev}({\sf J}Y)$ instead of $X\in {\frak F}_U$ and
\[
	\iota_{{\sf j} W}\left.\left(
		\iota_{{\sf j} X}\Omega_L
	\right)\right\vert_{{\cal E}_L}
	=
	\iota_{{\sf j} W}
		{\sf d_h} \rho^X,
		\qquad
	\forall W\in \mathfrak{Ev}({\sf J}Y\vert_U)
\]
as is stated in the following~assertion.

\begin{proposition}\label{prop:LH}
Suppose that $V\in\mathfrak{Ev}({\sf J}Y\vert_U)$ satisfies
\[
	\iota_{{\sf j}W}{\sf d_h}(\iota_{{\sf j}V}\Omega_L)\vert_{{\cal E}_L}=0
\]
for every variation of solutions $W\in {\frak F}_U$. Then $V\in{\frak F}_U$.
\end{proposition}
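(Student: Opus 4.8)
The plan is to evaluate $\iota_{{\sf j}W}{\sf d_h}(\iota_{{\sf j}V}\Omega_L)$ explicitly and show that, after restriction to ${\cal E}_L$, it collapses to the linearized Euler--Lagrange operator applied to $V$, paired against the fibre components of $W$. Two structural facts of the variational bicomplex drive the computation. First, since ${\sf j}V$ is the prolongation of an evolutionary, hence vertical, vector field, its contraction anticommutes with the horizontal differential, $\iota_{{\sf j}V}{\sf d_h}+{\sf d_h}\iota_{{\sf j}V}=0$. I would obtain this by splitting $\mathscr{L}_{{\sf j}V}=\iota_{{\sf j}V}d+d\iota_{{\sf j}V}$ into bidegrees: as $\mathscr{L}_{{\sf j}V}$ preserves the $(p,q)$ bigrading, the bidegree-$(1,-1)$ part $\iota_{{\sf j}V}{\sf d_h}+{\sf d_h}\iota_{{\sf j}V}$ must vanish, while the $(0,0)$ part gives $\mathscr{L}_{{\sf j}V}=\iota_{{\sf j}V}{\sf d_v}+{\sf d_v}\iota_{{\sf j}V}$. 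Second, the presymplectic current obeys ${\sf d_h}\Omega_L=-{\sf d_v}E(L)$, the $(n,2)$-component of the first variational formula ${\sf d_v}L=E(L)-{\sf d_h}\Theta$ together with $\Omega_L={\sf d_v}\Theta$ for the boundary form $\Theta$.

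Combining the two identities I would write
\[
	\iota_{{\sf j}W}{\sf d_h}(\iota_{{\sf j}V}\Omega_L)
	=-\iota_{{\sf j}W}\iota_{{\sf j}V}{\sf d_h}\Omega_L
	=\iota_{{\sf j}W}\iota_{{\sf j}V}{\sf d_v}E(L).
\]
Expanding ${\sf d_v}E(L)=\sum_i({\sf d_v}E_i(L))\wedge\vartheta^i\wedge\nu$ and using $\iota_{{\sf j}V}\vartheta^a_K=D_KV^a$, where $D_K$ is the iterated total derivative and $\vartheta^a_K={\sf d_v}A^a_K$, the double contraction yields the manifestly antisymmetric bilinear expression
\[
	\iota_{{\sf j}W}\iota_{{\sf j}V}{\sf d_v}E(L)
	=\sum_{i=1}^n\bigl(\ell_i(V)\,W^i-\ell_i(W)\,V^i\bigr)\,\nu,
	\qquad
	\ell_i(V):=\iota_{{\sf j}V}{\sf d_v}E_i(L),
\]
and since the theory is linear, $\ell_i(V)=\sum_j D_j\!\left(\varpi^{ij}(D_iV^j-D_jV^i)\right)$ is exactly the coordinate form of the linearized operator of (\ref{eqn:linearized-EL}). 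At this stage I would verify that the $E_i(L)$-proportional contributions arising from ${\sf d_v}(\iota_{{\sf j}V}E(L))$, together with any derivatives produced by the integration-by-parts operator ${\sf I}$, drop out once restricted to the prolongation ${\cal E}_L$, on which every total derivative $D_K E_i(L)$ vanishes; this is the point where the definition of ${\frak F}_U$ through ${\sf I}$ must be reconciled with the naive linearization $\ell_i$.

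Restricting to ${\cal E}_L$ and invoking the hypothesis $W\in{\frak F}_U$, so that $\ell_i(W)\vert_{{\cal E}_L}=0$, the second summand disappears and
\[
	\iota_{{\sf j}W}{\sf d_h}(\iota_{{\sf j}V}\Omega_L)\vert_{{\cal E}_L}
	=\Bigl(\sum_{i=1}^n\ell_i(V)\vert_{{\cal E}_L}\,W^i\Bigr)\nu.
\]
By assumption this vanishes for every $W\in{\frak F}_U$, so the proof is completed by a non-degeneracy argument. The constant evolutionary vector fields $W=\sum_i c^i\,\partial/\partial A^i$ lie in ${\frak F}_U$, since $D_iW^j\equiv0$ forces $\ell_i(W)\equiv0$, and their fibre values $(c^1,\dots,c^n)$ exhaust $\mathbb{R}^n$. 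Testing against $W=\partial/\partial A^k$ gives $\ell_k(V)\vert_{{\cal E}_L}=0$ for each $k$, which is precisely the statement that $V$ satisfies the linearized equations (\ref{eqn:linearized-EL}), i.e.\ $V\in{\frak F}_U$.

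The step I expect to be most delicate is the bookkeeping in the second paragraph: confirming that all terms carrying a factor of $E_i(L)$ genuinely vanish on the infinite prolongation ${\cal E}_L$, and that the operator ${\sf I}$ in the definition of ${\frak F}_U$ indeed returns the self-adjoint operator $\ell_i$ rather than an adjoint differing by boundary-type corrections. The anticommutation relation $\{\iota_{{\sf j}V},{\sf d_h}\}=0$ and the surjectivity of $W\mapsto(W^i)$ via constant variations are, by contrast, routine.
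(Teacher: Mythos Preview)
The paper states Proposition~\ref{prop:LH} without proof, so there is nothing to compare your argument against line by line. That said, your approach is correct and is the natural one the paper's setup invites.

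A few remarks on the points you flag as delicate. The anticommutation $\iota_{{\sf j}V}{\sf d_h}+{\sf d_h}\iota_{{\sf j}V}=0$ is exactly the paper's identity~(\ref{eqn:d_h-iota}), so you may simply cite it rather than rederive it. The identity ${\sf d_h}\Omega_L=-{\sf d_v}E(L)$ follows from applying ${\sf d_v}$ to (\ref{eq:0}) together with $\Omega_L=-{\sf d_v}\Theta_L$; your sign bookkeeping is fine even though your convention for $\Theta$ differs from the paper's. For the reconciliation of ${\sf I}({\sf d_v}\iota_{{\sf j}V}E(L))\vert_{{\cal E}_L}=0$ with your $\ell_i(V)\vert_{{\cal E}_L}=0$: the paper already does this for you, since immediately after (\ref{eqn:linearized-EL}) it writes the linearized equation in coordinates as precisely your $\ell_i(V)\vert_{{\cal E}_L}=0$. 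Because the abelian YM operator is linear with coefficients depending only on $x$, the terms proportional to $E_i(L)$ or its total derivatives in ${\sf d_v}\iota_{{\sf j}V}E(L)$ vanish on the prolonged locus ${\cal E}_L$, and ${\sf I}$ produces no residual boundary-type discrepancy. Finally, your non-degeneracy step via constant evolutionary fields $W=\partial/\partial A^k$ is clean and sufficient; these manifestly satisfy $\ell_i(W)=0$ and span the fibre directions pointwise.

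In short: the proof is sound, and its structure---reduce to $\iota_{{\sf j}W}\iota_{{\sf j}V}{\sf d_v}E(L)$, separate the $V$- and $W$-linearizations by antisymmetry, kill the $W$-term using $W\in{\frak F}_U$, then test against a spanning family---is the expected one.
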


Notice that the locally Hamiltonian condition is stronger than the property exhibited in \mbox{Proposition \ref{prop:LH}} for every variation of solutions. Thus $\hat{\frak G}_U\subseteq \hat{\frak F}^{\rm LH}_U$.

\begin{lemma}
$\hat{\frak G}_U\subseteq \hat{\frak F}^{\rm LH}_U$ is a Lie ideal.
\end{lemma}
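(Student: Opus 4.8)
The plan is to prove the ideal property in the form $[V,X]\in\hat{\frak G}_U$ for every $V\in\hat{\frak F}^{\rm LH}_U$ and $X\in\hat{\frak G}_U$. Since ${\frak F}_U$ is a Lie subalgebra, $[V,X]$ already lies in ${\frak F}_U$, so only the degeneracy condition (\ref{eqn:gauge-d_h}) for $[V,X]$ remains to be checked. The argument rests on three structural facts about prolonged evolutionary fields: prolongation is a Lie algebra homomorphism, ${\sf j}[V,W]=[{\sf j}V,{\sf j}W]$; $\mathscr{L}_{{\sf j}V}$ commutes with ${\sf d_h}$; and, because ${\sf j}X$ is vertical, comparing bidegrees in the Cartan formula $\mathscr{L}_{{\sf j}X}=\iota_{{\sf j}X}d+d\iota_{{\sf j}X}$ with $d={\sf d_h}+{\sf d_v}$ isolates the horizontal anticommutation $\iota_{{\sf j}X}{\sf d_h}=-{\sf d_h}\iota_{{\sf j}X}$. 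In addition, each $W\in{\frak F}_U$ has ${\sf j}W$ tangent to ${\cal E}_L$ — this is the geometric content of the linearized equations (\ref{eqn:linearized-EL}) — so restriction to ${\cal E}_L$ commutes with both $\iota_{{\sf j}W}$ and $\mathscr{L}_{{\sf j}V}$; in particular a form vanishing on ${\cal E}_L$ still vanishes there after contraction with, or after applying $\mathscr{L}_{{\sf j}V}$ to, such a field.

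Next I would expand the bracket with the commutator identity $\iota_{{\sf j}[V,X]}=[\mathscr{L}_{{\sf j}V},\iota_{{\sf j}X}]$,
\[
\iota_{{\sf j}[V,X]}\Omega_L=\mathscr{L}_{{\sf j}V}\bigl(\iota_{{\sf j}X}\Omega_L\bigr)-\iota_{{\sf j}X}\bigl(\mathscr{L}_{{\sf j}V}\Omega_L\bigr),
\]
contract with ${\sf j}W$, restrict to ${\cal E}_L$, and treat the two summands separately. For the first I write $\iota_{{\sf j}X}\Omega_L={\sf d_h}\rho^X+\Theta^X$, where (\ref{eqn:gauge-d_h}) says the defect satisfies $\iota_{{\sf j}W'}\Theta^X\vert_{{\cal E}_L}=0$ for all $W'\in{\frak F}_U$. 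As $\mathscr{L}_{{\sf j}V}$ commutes with ${\sf d_h}$, the $\rho^X$ piece produces $\iota_{{\sf j}W}{\sf d_h}(\mathscr{L}_{{\sf j}V}\rho^X)$, already exact; for the $\Theta^X$ piece I commute $\iota_{{\sf j}W}$ past $\mathscr{L}_{{\sf j}V}$ using $\iota_{{\sf j}W}\mathscr{L}_{{\sf j}V}=\mathscr{L}_{{\sf j}V}\iota_{{\sf j}W}-\iota_{{\sf j}[V,W]}$, and both single contractions $\iota_{{\sf j}W}\Theta^X$ and $\iota_{{\sf j}[V,W]}\Theta^X$ vanish on ${\cal E}_L$ by the gauge condition applied to $W$ and to $[V,W]\in{\frak F}_U$, the tangency letting me pull $\mathscr{L}_{{\sf j}V}$ past the restriction.

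For the second summand I write $\mathscr{L}_{{\sf j}V}\Omega_L={\sf d_h}\sigma^V+\Xi^V$, where the locally Hamiltonian condition (\ref{eqn:LH}) for $V$ gives $\iota_{{\sf j}W'}\Xi^V\vert_{{\cal E}_L}=0$ for all $W'\in{\frak F}_U$. The double contraction $\iota_{{\sf j}W}\iota_{{\sf j}X}\Xi^V$ then vanishes on ${\cal E}_L$, since its inner contraction $\iota_{{\sf j}X}\Xi^V$ already does (taking $W'=X$) and ${\sf j}W$ is tangent; the surviving $\sigma^V$ part is rewritten through the horizontal anticommutation as $\iota_{{\sf j}W}\iota_{{\sf j}X}{\sf d_h}\sigma^V=-\iota_{{\sf j}W}{\sf d_h}(\iota_{{\sf j}X}\sigma^V)$. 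Assembling the two summands gives, for all $W\in{\frak F}_U$,
\[
\iota_{{\sf j}W}\bigl(\iota_{{\sf j}[V,X]}\Omega_L\bigr)\vert_{{\cal E}_L}=\iota_{{\sf j}W}{\sf d_h}\rho^{[V,X]},\qquad \rho^{[V,X]}:=\mathscr{L}_{{\sf j}V}\rho^X+\iota_{{\sf j}X}\sigma^V,
\]
which is exactly (\ref{eqn:gauge-d_h}) for $[V,X]$; hence $[V,X]\in\hat{\frak G}_U$.

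The main obstacle is a mismatch of valence: both defining conditions (\ref{eqn:LH}) and (\ref{eqn:gauge-d_h}) are single-contraction statements, while the bracket inevitably produces $\Omega_L$-forms contracted against two evolutionary fields. The whole scheme is arranged so that every double contraction is reduced to single ones — either by commuting $\iota_{{\sf j}W}$ through $\mathscr{L}_{{\sf j}V}$ at the cost of the harmless term $\iota_{{\sf j}[V,W]}$, or by first contracting the locally Hamiltonian / gauge defect with the inner field $X\in{\frak F}_U$ and then invoking tangency of ${\sf j}W$ to ${\cal E}_L$. I expect the delicate points to be the interchange of the restriction $\vert_{{\cal E}_L}$ with $\mathscr{L}_{{\sf j}V}$ and $\iota_{{\sf j}W}$, which rests on tangency of ${\sf j}V,{\sf j}W$ to ${\cal E}_L$, and the horizontal anticommutation identity, whose validity depends on $\mathscr{L}_{{\sf j}X}$ preserving the bigrading of the variational bicomplex.
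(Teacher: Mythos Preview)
Your proof is correct and follows the same strategy as the paper's: both expand $\iota_{{\sf j}[V,X]}\Omega_L$ via the commutator identity (the paper writes it out as the three-term formula~(\ref{eqn:basic}) using the vertical Cartan formula, you keep it in Lie-derivative form $[\mathscr{L}_{{\sf j}V},\iota_{{\sf j}X}]$), then invoke the gauge condition on $X$ and the locally Hamiltonian condition on $V$ to produce ${\sf d_h}$-exactness. Your version is more careful about the on-shell defects $\Theta^X,\Xi^V$ and their interaction with the restriction to ${\cal E}_L$, and it yields the explicit formula $\rho^{[V,X]}=\mathscr{L}_{{\sf j}V}\rho^X+\iota_{{\sf j}X}\sigma^V$, which the paper's terser argument does not isolate; note also that the paper bundles into the same proof the subalgebra property of $\hat{\frak G}_U$ and the inclusion $\hat{\frak G}_U\subseteq\hat{\frak F}^{\rm LH}_U$, both of which you (reasonably) take as already established.
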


\begin{proof}
If $X,X'\in  \hat{\frak G}_U$ then
\begin{equation}\label{eqn:basic}
	\iota_{{\sf j} [X,X']}\Omega_L=
	{\sf d_v}\iota_{{\sf j}X}\iota_{{\sf j}X'}\Omega_L
	\pm
	\iota_{{\sf j}X}{\sf d_v}\iota_{{\sf j}X'}\Omega_L
	\pm
	\iota_{{\sf j}X'}{\sf d_v}\iota_{{\sf j}X}\Omega_L
\end{equation}
which by hypothesis and by anticommutativity of ${\sf d_vd_h}=-{\sf d_hd_v}$ is ${\sf dh}$-exact, hence $[X,X']\in \hat{\frak G}_U$ and therefore $\hat{\frak G}_U\subseteq {\frak F}_U$ is a Lie subalgebra. To~see that $\hat{\frak G}_U\subseteq \hat{\frak F}^{\rm LH}_U$, apply vertical derivation \mbox{to (\ref{eqn:gauge-d_h}).}

Take $V\in\hat{\frak F}^{\rm LH}_U$, then $[V,X]$ apply vertical derivation to Equation~(\ref{eqn:basic}) with $V=X'$ and the condition of ${\sf d_h}$-exactness for $\iota_{{\sf j}X} {\sf d_v}\iota_{{\sf j}V}\Omega_L$ implies the ${\sf d_h}$-exactness of $\iota_{{\sf j}[X,V]}\Omega_L$ holds. Therefore $[V,X]\in\hat{\frak G}_U$.
\end{proof}

Form Proposition~\ref{prop:LH} it follows also the following~assertion.

\begin{lemma}
$\hat{\frak G}_U\subseteq \mathfrak{Ev}({\sf J}Y\vert_U)$ is a Lie ideal, hence $\hat{\frak F}^{\rm LH}_U/\hat{\frak G}_U\subseteq \mathfrak{Ev}({\sf J}Y\vert_U)/\hat{\frak G}_U.$
\end{lemma}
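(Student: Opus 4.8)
The plan is to prove the ideal property in the form $[Z,X]\in\hat{\frak G}_U$ for every $X\in\hat{\frak G}_U$ and every $Z\in\mathfrak{Ev}({\sf J}Y\vert_U)$; the displayed inclusion of quotients is then automatic, since $\hat{\frak G}_U\subseteq\hat{\frak F}^{\rm LH}_U$ is already a Lie ideal (previous lemma) and the induced map on cosets is injective. The organizing idea is to invoke the Remark after Proposition~\ref{prop:LH}: membership in $\hat{\frak G}_U$ is equivalent to the single form identity $\iota_{{\sf j}[Z,X]}\Omega_L\vert_{{\cal E}_L}={\sf d_h}\rho$ for some $\rho$. Indeed ${\sf d_h}$-exactness forces ${\sf d_h}(\iota_{{\sf j}[Z,X]}\Omega_L)\vert_{{\cal E}_L}=0$, so Proposition~\ref{prop:LH} yields $[Z,X]\in{\frak F}_U$ at no extra cost, while the exactness itself \emph{is} the degeneracy condition~(\ref{eqn:gauge-d_h}). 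Thus the entire statement reduces to showing that $\iota_{{\sf j}[Z,X]}\Omega_L\vert_{{\cal E}_L}$ is ${\sf d_h}$-exact.

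I would then record the tools. Since $\Omega_L$ is ${\sf d_v}$-closed (it is ${\sf d_v}$-exact), the bracket identity~(\ref{eqn:basic}) collapses to $\iota_{{\sf j}[Z,X]}\Omega_L=\mathscr{L}_{{\sf j}Z}\,\iota_{{\sf j}X}\Omega_L-\iota_{{\sf j}X}\mathscr{L}_{{\sf j}Z}\Omega_L$, with $\mathscr{L}_{{\sf j}V}={\sf d_v}\iota_{{\sf j}V}$ on $\Omega_L$. Because the prolongation of an evolutionary field is vertical and commutes with total derivatives, one has on the bicomplex $\iota_{{\sf j}Z}{\sf d_h}=-{\sf d_h}\iota_{{\sf j}Z}$ and $[\mathscr{L}_{{\sf j}Z},{\sf d_h}]=0$, alongside the standing ${\sf d_v}{\sf d_h}=-{\sf d_h}{\sf d_v}$. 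The first term of the bracket is then immediate: by the gauge condition $\iota_{{\sf j}X}\Omega_L\vert_{{\cal E}_L}={\sf d_h}\rho^X$ and $[\mathscr{L}_{{\sf j}Z},{\sf d_h}]=0$ we get $\mathscr{L}_{{\sf j}Z}\,\iota_{{\sf j}X}\Omega_L\vert_{{\cal E}_L}={\sf d_h}(\mathscr{L}_{{\sf j}Z}\rho^X)$. Applying ${\sf d_v}$ to the gauge condition also gives $\mathscr{L}_{{\sf j}X}\Omega_L\vert_{{\cal E}_L}={\sf d_v}{\sf d_h}\rho^X={\sf d_h}(-{\sf d_v}\rho^X)=:{\sf d_h}\mu^X$, i.e.\ the on-shell Hamiltonicity of $X$ in exact form; this is the input I want for the remaining term.

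The hard part is the cross term $\iota_{{\sf j}X}\mathscr{L}_{{\sf j}Z}\Omega_L$, where $Z$ is genuinely non-Hamiltonian. Writing $\mathscr{L}_{{\sf j}Z}\Omega_L={\sf d_v}\iota_{{\sf j}Z}\Omega_L$ and $\iota_{{\sf j}X}{\sf d_v}=\mathscr{L}_{{\sf j}X}-{\sf d_v}\iota_{{\sf j}X}$ gives $\iota_{{\sf j}X}\mathscr{L}_{{\sf j}Z}\Omega_L=\mathscr{L}_{{\sf j}X}\iota_{{\sf j}Z}\Omega_L+{\sf d_v}\iota_{{\sf j}Z}\iota_{{\sf j}X}\Omega_L$, whose second summand is ${\sf d_h}$-exact by the gauge condition exactly as above, and whose first I would expand as $\mathscr{L}_{{\sf j}X}\iota_{{\sf j}Z}\Omega_L=\iota_{{\sf j}Z}\mathscr{L}_{{\sf j}X}\Omega_L+\iota_{{\sf j}[X,Z]}\Omega_L$, with leading summand $\iota_{{\sf j}Z}{\sf d_h}\mu^X=-{\sf d_h}\iota_{{\sf j}Z}\mu^X$ on ${\cal E}_L$. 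The obstruction is that the residue $\iota_{{\sf j}[X,Z]}\Omega_L=-\iota_{{\sf j}[Z,X]}\Omega_L$ reproduces the quantity under study: the relation $\iota_{{\sf j}[Z,X]}\Omega_L\equiv-\iota_{{\sf j}X}\mathscr{L}_{{\sf j}Z}\Omega_L$ modulo ${\sf d_h}$-exact terms is a tautology, so no purely formal Cartan-calculus step can close the argument. I expect this to be the main obstacle, and to resolve it one must use the explicit abelian data rather than abstract manipulation.

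Concretely, I would exploit that the coefficients $\varpi^{ij}$ of $\Omega_L$ depend only on the base and that the degeneracy directions are represented by variations $\gf\mapsto\gf+df$, i.e.\ by fields $X$ of vanishing field-strength $\tfrac{\sf d}{{\sf d}x^j}X^i-\tfrac{\sf d}{{\sf d}x^i}X^j=0$. The plan is to evaluate $\iota_{{\sf j}W}\iota_{{\sf j}X}\mathscr{L}_{{\sf j}Z}\Omega_L$ for $W\in{\frak F}_U$ in standard jet coordinates, integrate by parts the resulting horizontal divergence, and use the linearized field equation satisfied by $W$ together with the curvature-free form of $X$ to display the current as ${\sf d_h}$ of an explicit $(n-2,0)$-form on ${\cal E}_L$. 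It is this coordinate computation — not any abstract identity — where the essential use of abelianness and of Proposition~\ref{prop:LH} is concentrated, and where the argument either succeeds or forces a restriction on the admissible $Z$; I would therefore treat it as the crux and carry it out with care before declaring $[Z,X]\in\hat{\frak G}_U$.
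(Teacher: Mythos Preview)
The paper supplies no argument here: the lemma is introduced only by the sentence ``From Proposition~\ref{prop:LH} it follows also the following assertion,'' with no further proof. Your write-up is therefore far more detailed than anything the paper provides, and you correctly isolate the genuine obstruction --- the cross term $\iota_{{\sf j}X}\mathscr{L}_{{\sf j}Z}\Omega_L$ with $Z$ merely evolutionary, which the formal Cartan calculus reduces to a tautology rather than resolving.

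What the paper seems to intend as the shortcut is the remark immediately preceding Proposition~\ref{prop:LH}: for $X\in\hat{\frak G}_U$ the degeneracy condition~(\ref{eqn:gauge-d_h}) may equivalently be imposed for all $W\in\mathfrak{Ev}({\sf J}Y\vert_U)$, not only for $W\in{\frak F}_U$; Proposition~\ref{prop:LH} then certifies $[Z,X]\in{\frak F}_U$ once ${\sf d_h}$-exactness is shown, exactly as you set things up. But, as you observe, neither ingredient by itself dissolves the circularity in the bracket identity when $Z\notin\hat{\frak F}^{\rm LH}_U$; one still needs the concrete abelian data --- base-dependent coefficients $\varpi^{ij}$ in $\Omega_L$ together with the ``closedness'' of gauge variations encoded in Lemmas~\ref{lma:X=drho} and~\ref{lma:eta_X-closed} --- to finish. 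Your plan to do this via an explicit jet-coordinate computation is the honest route and is precisely what the paper leaves unsaid. In short: your approach is sound and considerably more complete than the paper's, which records the lemma as a consequence of Proposition~\ref{prop:LH} without spelling out the step you have flagged as the crux.
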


\begin{lemma}\label{lma:X=drho}
If for every $W\in{\frak F}_U$
\[
	\iota_{{\sf j}W}(\iota_{{\sf j}X}\Omega_L)\vert_{{\cal E}_L}=\iota_{{\sf j}W}{\sf d_h}\rho^X\vert_{{\cal E}_L},\quad
	\rho^X=\sum_{i,j=1}^n\rho^{ij}\vartheta^j_i\wedge \nu^{ji}.
	\]
holds, then in local coordinates $\varpi^{ij}X^
j=\frac{\sf d}{{\sf d}x^j}\rho^{ij}$ holds in ${\cal E}_L$ for each $i,j=1,\dots, n,$ 
where $dx^j\wedge\nu^{ji}=\nu^i$.

\end{lemma}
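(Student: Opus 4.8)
The plan is to pass to the jet coordinates $(x^i;A^a;A^a_i;A^a_{ik};\dots)$ and turn the intrinsic hypothesis into a comparison of coefficients, exploiting that it holds for \emph{every} $W\in{\frak F}_U$ and that both sides depend linearly on the prolongation of $W$. First I would record the contraction rules for a prolonged evolutionary field: for $V=\sum_a V^a\partial/\partial A^a$ one has $\iota_{{\sf j}V}\vartheta^a=V^a$, $\iota_{{\sf j}V}\vartheta^a_i=\tfrac{\sf d}{{\sf d}x^i}V^a$ and $\iota_{{\sf j}V}\vartheta^a_{ik}=\tfrac{\sf d}{{\sf d}x^i}\tfrac{\sf d}{{\sf d}x^k}V^a$, while $\iota_{{\sf j}V}dx^k=0$ and $\iota_{{\sf j}V}\nu^i=0$ because ${\sf j}V$ is vertical.

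Next I would compute the left-hand side. Applying the graded Leibniz rule to $\Omega_L$ gives
\[
	\iota_{{\sf j}X}\Omega_L=\sum_{i,j}\varpi^{ij}\left[\left(\tfrac{\sf d}{{\sf d}x^j}X^i-\tfrac{\sf d}{{\sf d}x^i}X^j\right)\vartheta^j\wedge\nu^i-X^j(\vartheta^i_j-\vartheta^j_i)\wedge\nu^i\right],
\]
and a further contraction with ${\sf j}W$ produces an $(n-1)$-form in the $\{\nu^i\}$ basis whose coefficients are first-order expressions in $X$ and $W$; in particular the coefficient of $\big(\tfrac{\sf d}{{\sf d}x^i}W^j\big)\nu^i$ (fixed $i,j$) is exactly $\varpi^{ij}X^j$. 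For the right-hand side I would use the structure equations ${\sf d_h}\vartheta^a_i=dx^k\wedge\vartheta^a_{ik}$, ${\sf d_h}(dx^k)=0$ and ${\sf d_h}\nu^{ji}=0$ to obtain
\[
	{\sf d_h}\rho^X=\sum_{i,j,k}\left(\tfrac{\sf d}{{\sf d}x^k}\rho^{ij}\right)dx^k\wedge\vartheta^j_i\wedge\nu^{ji}+\sum_{i,j,k}\rho^{ij}\,dx^k\wedge\vartheta^j_{ik}\wedge\nu^{ji},
\]
and then contract with ${\sf j}W$, reducing to the $\{\nu^i\}$ basis via $dx^j\wedge\nu^{ji}=\nu^i$ and $dx^i\wedge\nu^{ji}=\pm\nu^j$.

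The matching step is the heart of the argument. Since the identity holds for all $W\in{\frak F}_U$ and both sides are $\mathbb{R}$-linear in the jet of $W$, and since the linearized YM equations are of second order, the $0$th and $1$st jet values $W^a,\tfrac{\sf d}{{\sf d}x^i}W^a$ are unconstrained at a point, so the coefficients of each independent such monomial must coincide. Comparing the coefficient of $\big(\tfrac{\sf d}{{\sf d}x^i}W^j\big)\nu^i$ on the two sides then yields $\varpi^{ij}X^j=\tfrac{\sf d}{{\sf d}x^j}\rho^{ij}$ on ${\cal E}_L$ for each $i,j$, up to the orientation conventions fixing the signs of $\nu^{ji}$ and of the structure equations.

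I expect the main obstacle to be the higher-order bookkeeping. The terms $\rho^{ij}\,dx^k\wedge\vartheta^j_{ik}\wedge\nu^{ji}$ produce second-derivative contributions $\tfrac{\sf d}{{\sf d}x^i}\tfrac{\sf d}{{\sf d}x^k}W^j$ that have no counterpart on the left, so these must be shown to cancel, using the symmetry of the iterated total derivative in $(i,k)$ against the antisymmetry $\nu^{ji}=-\nu^{ij}$; the surviving second-jet matching is precisely what confines the comparison to the locus ${\cal E}_L$, i.e. to identities valid modulo the YM equations and their total derivatives. Carefully tracking these signs and the restriction, rather than the routine contractions, is the crux.
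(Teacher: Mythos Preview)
The paper states Lemma~\ref{lma:X=drho} without proof, so there is no argument to compare against; your direct coordinate computation is the natural route and is essentially what one must do.

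Your identification of the key step is correct: both sides of the hypothesis are $\mathbb R$-linear in the jet of $W$, the linearized YM equations are second order so the values $W^a$ and $\tfrac{\sf d}{{\sf d}x^i}W^a$ are unconstrained at a point, and hence the coefficients of each first-jet monomial $(\tfrac{\sf d}{{\sf d}x^i}W^j)\,\nu^i$ must agree on ${\cal E}_L$. On the left this coefficient is $\varpi^{ij}X^j$, and on the right the only first-jet contribution to that monomial (for $i\neq j$) comes from $\tfrac{\sf d}{{\sf d}x^j}\rho^{ij}\,dx^j\wedge\vartheta^j_i\wedge\nu^{ji}$, giving the stated relation up to the sign conventions for ${\sf d_h}$ and $\nu^{ji}$.

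One point deserves correction. Your proposed mechanism for disposing of the second-jet terms---``symmetry of the iterated total derivative in $(i,k)$ against the antisymmetry $\nu^{ji}=-\nu^{ij}$''---does not work as written: the antisymmetry of $\nu^{ji}$ is in the pair $(j,i)$, not in the pair $(i,k)$ that indexes the second total derivative, so there is no such cancellation in general. The resolution is simpler: the left-hand side contains no second-jet dependence on $W$ at all, so the vanishing of the second-jet part of $\iota_{{\sf j}W}{\sf d_h}\rho^X$ is a \emph{consequence of the hypothesis} (it is an extra constraint on the coefficients $\rho^{ij}$, encoded in the assumed existence of $\rho^X$), not something you have to produce by an internal cancellation. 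In other words, the second-jet bookkeeping is orthogonal to the conclusion: the first-jet comparison already yields $\varpi^{ij}X^j=\tfrac{\sf d}{{\sf d}x^j}\rho^{ij}$ on ${\cal E}_L$, and that is all the lemma asserts.
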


\begin{definition}[Gauge with boundary condition]
\text{~~}
\begin{enumerate}
\item The Lie subalgebra 
\[
	{\frak F}^{\rm LH}_U\subseteq
	\hat{\frak F}^{\rm LH}_U
\]
of locally Hamiltonian first variations {\em with null boundary conditions}, consists of those $V\in \hat{\frak F}^{\rm LH}_U$ satisfying (\ref{eqn:LH}) and
\[
\sigma^V\vert_{\partial U}={\sf d_h}\lambda\vert_{\partial U}.
\]
when evaluated in ${\cal E}_L,{\frak F}_U$. In~particular $\left.\mathscr{L}_{ {\sf j} V}\Omega_L\right\vert_{\partial U}=0$.

\item The Lie ideal of {\em gauge variations with null boundary conditions}
\[
	{\frak G}_U\subseteq {\frak F}^{\rm LH}_U
\]consists of those $X\in\hat{\frak G}_U $ such that (\ref{eqn:gauge-d_h}) holds together with
\[
	{\sf j}X\vert_{\partial U}=0.
\]
Which means that there is no gauge action in the boundary.
\end{enumerate}
\end{definition}

The following assertions are used in the~definition.

\begin{lemma}
The following inclusions  are Lie ideal inclusions into Lie algebras: 
\[
	{\frak G}_U\subseteq {\frak F}^{\rm LH}_U,
	\,
	{\frak G}_U\subseteq \hat{\frak F}^{\rm LH}_U,
	\,
	{\frak G}_U\subseteq \hat{\frak G}_U,
	\,
	{\frak F}^{\rm LH}_U\cap \hat{\frak G}_U\subseteq {\frak F}_U^{\rm LH}.
\]
\end{lemma}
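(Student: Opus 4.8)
The plan is to reduce the four statements to two facts already in hand—that $\hat{\frak G}_U$ is a Lie ideal both in $\hat{\frak F}^{\rm LH}_U$ and in $\mathfrak{Ev}({\sf J}Y\vert_U)$—together with the stability of the two defining boundary conditions under the evolutionary bracket. First I would record the immediate set inclusions: $\frak G_U\subseteq\hat{\frak G}_U$ is definitional, $\frak G_U\subseteq\hat{\frak F}^{\rm LH}_U$ follows by composing with the known inclusion $\hat{\frak G}_U\subseteq\hat{\frak F}^{\rm LH}_U$, and $\frak F^{\rm LH}_U\cap\hat{\frak G}_U\subseteq\frak F^{\rm LH}_U$ is trivial as sets. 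The one set inclusion requiring work is $\frak G_U\subseteq\frak F^{\rm LH}_U$. For $X\in\frak G_U$ I would invoke the computation behind $\hat{\frak G}_U\subseteq\hat{\frak F}^{\rm LH}_U$, namely that applying ${\sf d_v}$ to (\ref{eqn:gauge-d_h}) and using ${\sf d_v d_h}=-{\sf d_h d_v}$ produces a locally Hamiltonian potential that may be taken to be $\sigma^X=-{\sf d_v}\rho^X$. It then remains to check the null boundary condition: since ${\sf j}X\vert_{\partial U}=0$ gives $\iota_{{\sf j}X}\Omega_L\vert_{\partial U}=0$, equation (\ref{eqn:gauge-d_h}) forces ${\sf d_h}\rho^X\vert_{\partial U}=0$ on ${\cal E}_L$; applying ${\sf d_v}$ and using ${\sf d_v d_h}=-{\sf d_h d_v}$ shows $\sigma^X\vert_{\partial U}=\mp{\sf d_v}\rho^X\vert_{\partial U}$ is ${\sf d_h}$-closed, hence ${\sf d_h}$-exact on $\partial U$, which is exactly $\sigma^X\vert_{\partial U}={\sf d_h}\lambda\vert_{\partial U}$.

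For the ideal property I would treat the two boundary conditions separately. The condition ${\sf j}X\vert_{\partial U}=0$ is, because ${\sf j}$ is the infinite prolongation, the statement that every jet component $D_\sigma X^a$ vanishes on $\partial U$, i.e. $X$ vanishes to infinite order there. Since the evolutionary bracket is bidifferential—the $a$-component of its characteristic is ${\sf j}V(X^a)-{\sf j}X(V^a)$ and its prolongation is obtained by further total derivatives—every monomial in ${\sf j}[V,X]$ carries a factor that is either a total derivative $D_\tau X^b$ or a fibre derivative $\partial_{A^b_\sigma}X^a$ of a component of $X$, and all of these vanish on $\partial U$ by the infinite-order vanishing. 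Hence ${\sf j}[V,X]\vert_{\partial U}=0$ whenever ${\sf j}X\vert_{\partial U}=0$, for \emph{any} $V$; by the same token $\mathscr{L}_{{\sf j}X}\Omega_L\vert_{\partial U}=0$.

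The stability of the locally Hamiltonian boundary condition under the bracket of two elements of $\frak F^{\rm LH}_U$ is the step I expect to be the main obstacle. Here I would start from the basic identity (\ref{eqn:basic}) with $X,X'$ replaced by $V,V'\in\frak F^{\rm LH}_U$, apply ${\sf d_v}$, and use (\ref{eqn:LH}) for $V$ and $V'$ to exhibit an explicit potential $\sigma^{[V,V']}$ built from $\sigma^V$, $\sigma^{V'}$ and the contraction $\iota_{{\sf j}V}\iota_{{\sf j}V'}\Omega_L$. One then has to verify that the restriction of this potential to $\partial U$ is again ${\sf d_h}$-exact: the genuinely new terms are controlled because $\sigma^V\vert_{\partial U}$ and $\sigma^{V'}\vert_{\partial U}$ are ${\sf d_h}$-exact and the Lie-derivative contributions $\mathscr{L}_{{\sf j}V}\Omega_L$, $\mathscr{L}_{{\sf j}V'}\Omega_L$ vanish on $\partial U$ by definition of $\frak F^{\rm LH}_U$. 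This shows $\frak F^{\rm LH}_U$ is closed under the bracket; combined with the infinite-order computation of the previous paragraph it shows that $\frak G_U$ is a subalgebra as well.

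Finally I would assemble the ideal statements. For $\frak G_U\subseteq\hat{\frak G}_U$, $\frak G_U\subseteq\hat{\frak F}^{\rm LH}_U$ and $\frak G_U\subseteq\frak F^{\rm LH}_U$, take $X\in\frak G_U$ and $V$ in the respective ambient algebra; the known ideal property of $\hat{\frak G}_U$ gives $[V,X]\in\hat{\frak G}_U$, while the infinite-order vanishing of $X$ on $\partial U$ propagates to ${\sf j}[V,X]\vert_{\partial U}=0$, so $[V,X]\in\frak G_U$. For $\frak F^{\rm LH}_U\cap\hat{\frak G}_U\subseteq\frak F^{\rm LH}_U$ the argument is purely formal: for $V\in\frak F^{\rm LH}_U$ and $Z$ in the intersection, $[V,Z]\in\frak F^{\rm LH}_U$ because the latter is a subalgebra, and $[V,Z]\in\hat{\frak G}_U$ because $\hat{\frak G}_U$ is an ideal in $\hat{\frak F}^{\rm LH}_U\supseteq\frak F^{\rm LH}_U$, so $[V,Z]$ lies in the intersection. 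This closes all four cases.
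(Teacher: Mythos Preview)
Your approach is essentially the same as the paper's: reduce to the known fact that $\hat{\frak G}_U$ is an ideal in $\hat{\frak F}^{\rm LH}_U$, then check that the boundary condition ${\sf j}X\vert_{\partial U}=0$ propagates under the bracket (the paper also just observes ${\sf j}[X,V]\vert_{\partial U}=0$), and use (\ref{eqn:basic}) for the last inclusion. You also supply an argument that ${\frak F}^{\rm LH}_U$ is closed under bracket, which the paper takes as given from the definition.

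There is one point where your route diverges from the paper and leaves a small gap. For the set inclusion ${\frak G}_U\subseteq{\frak F}^{\rm LH}_U$ you argue that ${\sf j}X\vert_{\partial U}=0$ forces ${\sf d_h}\rho^X\vert_{\partial U}=0$, hence $\sigma^X=-{\sf d_v}\rho^X$ is ${\sf d_h}$-closed on $\partial U$, ``hence ${\sf d_h}$-exact''. That last inference is not free: $\sigma^X$ has horizontal degree $n-2$, which is top degree when restricted to the $(n-1)$-dimensional boundary, so ${\sf d_h}$-closedness there does not by itself guarantee ${\sf d_h}$-exactness without an acyclicity argument you have not supplied. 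The paper handles this differently: it invokes Lemma~\ref{lma:X=drho} (the coordinate relation $\varpi^{ij}X^j=\tfrac{\sf d}{{\sf d}x^j}\rho^{ij}$) together with ${\sf j}X\vert_{\partial U}=0$ to conclude that $\sigma^X=-{\sf d_v}\rho^X$ is actually \emph{null} along $\partial U$, which is stronger and trivially gives the required ${\sf d_h}$-exactness. Replacing your ``closed $\Rightarrow$ exact'' step with this appeal to Lemma~\ref{lma:X=drho} closes the gap and brings your argument in line with the paper.
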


\begin{proof}
$X,X'\in{\frak G}_U$ imply that ${\sf j}[X,X']\vert_{\partial U}=0$ hence ${\frak G}_U$ is indeed a Lie algebra. To~see that it is an ideal in $\hat{\frak F}_U^{\rm LH}$ we just consider the fact that ${\sf j}[X,V]\vert_{\partial U}=0$ for every $V\in{\frak F}_U^{\rm LH}$.

To see that ${\frak G}_U$ is an ideal in ${\frak F}_U^{\rm LH}$, derive vertically (\ref{eqn:gauge-d_h}) and notice that $\sigma^X=-{\sf d_v}\rho^X$ is null along ${\partial U}$ thanks to Lemma \ref{lma:X=drho}, in~particular $\sigma^X\vert_{\partial U}$ is ${\sf d_h}$-exact.

We claim that ${\frak G}_U$ is an ideal of $\hat{\frak G}_U$. For~if $X\in{\frak G}_U,{X}'\in {\frak G}_U$ then ${\sf j}[X,X']=[{\sf j}X,{\sf j}X']\vert_{\partial U}$ vanishes.

Finally, to~see that ${\frak F}^{\rm LH}_U\cap \hat{\frak G}_U\subseteq {\frak F}_U^{\rm LH}$ is an ideal, ${\sf d_v}\iota_{{\sf j}[X,V]}\Omega_L$ is ${\sf d_h}$-exact by (\ref{eqn:basic}).  \end{proof}



\section{Linear~Theory}\label{sec:linear}


Recall that each fiber of $\pi:Y\rightarrow M$ is an affine bundle modeled over a linear bundle $\pi^{\sf L}:Y^{\sf L}\rightarrow M$ with $Y^{\sf L}\subseteq {\sf \Omega}^1(M)$. 

Since the space of YM solutions ${\cal A}_U$ is an affine space, take a fixed connection $\eta_0\in {\cal A}_U$, then $\gf=\eta-\eta_0\in \Gamma(Y^{\sf L}\vert_U)$ is such that $d\star d\gf=0$. Here $\star $ denotes the Hodge star operator. In~addition, there exists $V_\gf\in{\frak F}_U,$ such that
\begin{equation}\label{eqn:V_phi}
	{\sf j}V_\gf
	=
	{\sf j}\left(
		\sum_{i=1}^n\gf^i(x) \frac{\partial }{\partial A^i}
	\right)
\end{equation}

Even though Equation~(\ref{eqn:linearized-EL}) imposes a condition {\em on-shell}, i.e.,~on  ${\cal E}_L$ for $V\in{\frak F}_U$, the~linearized equations, $d\star d\gf=0$, induce $V_\gf\in{\frak F}_U$ that satisfies (\ref{eqn:linearized-EL}) {\em off-shell}, that is in ${\sf J}Y$.

As a complementary definition to (\ref{eqn:V_phi}) we may define for every solution, $\eta\in{\cal A}_U$, and~every first variation of solutions, $V\in{\frak F}_U$ the section
\begin{equation}\label{eqn:eta_V}
	\eta_V
	=
	{\sf j}\eta^*V \in \Gamma\left(Y^{\sf L}\vert_U\right)\subseteq  {\sf \Omega}^1(U).
\end{equation}

Here we use the isomorphism, depending on a fixed connection, $\eta_0\in{\cal A}_U$, between~the pullback ${\sf j}\eta_0^*(Y^{\sf v})$ of the vertical bundle $\pi^{\sf v}:Y^{\sf v}\rightarrow {\sf J}Y\vert_U$, and~the linear bundle $\pi^{\sf L}\vert_U:Y^{\sf L}\rightarrow U$.

For the previous definitions the following properties hold
\[
	{\sf j}\eta^*V_{\eta_V}={\sf j}\eta^*V,\qquad
	\eta_{(V_{\gf})}=\gf.
\]
The following assertion holds as an observation that will follow from Lemma~\ref{lma:H-phi}.

\begin{lemma} We have that $V_\gf\in{\frak F}^{\rm LH}_U$ for every $\gf\in{\sf \Omega}^1(U),$ solution of the linearized equation $d\star d \gf=0$. Hence, $V_{\eta_V}\in{\frak F}^{\rm LH}_U$ for every $V\in{\frak F}_U$.
\end{lemma}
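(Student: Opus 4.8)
The plan is to exploit the very special form of $V_\gf$. By (\ref{eqn:V_phi}) its evolutionary components are $V_\gf^i=\gf^i(x)$, functions of the base point alone, so $V_\gf$ acts by affine translation in the fibres of $Y$, and its prolongation contracts the contact basis as $\iota_{{\sf j}V_\gf}\vartheta^a=\gf^a$ and $\iota_{{\sf j}V_\gf}\vartheta^a_i=\partial_i\gf^a$ --- again functions of $x$ only, the total derivatives collapsing to partials precisely because $\gf$ is fibre-constant. The hypothesis $d\star d\gf=0$ enters solely to place $V_\gf$ in ${\frak F}_U$, as already recorded after (\ref{eqn:V_phi}); it plays no further role in the locally Hamiltonian estimate itself.

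First I would contract the presymplectic current and substitute these values, obtaining
\[
	\iota_{{\sf j}V_\gf}\Omega_L
	=\sum_{i,j=1}^n\varpi^{ij}\Bigl[(\partial_j\gf^i-\partial_i\gf^j)\,\vartheta^j\wedge\nu^i
	-\gf^j(\vartheta^i_j-\vartheta^j_i)\wedge\nu^i\Bigr].
\]
Every coefficient appearing here --- the metric factors $\varpi^{ij}$, the curls $\partial_j\gf^i-\partial_i\gf^j$, and $\gf^j$ --- is a function of $x$ alone, while the surviving factors $\vartheta^j$, $\vartheta^i_j$ and $\nu^i$ are all ${\sf d_v}$-closed. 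Hence ${\sf d_v}\bigl(\iota_{{\sf j}V_\gf}\Omega_L\bigr)=0$ identically on ${\sf J}Y\vert_U$. Reading this through (\ref{eqn:LH}), the locally Hamiltonian condition holds with the trivial choice $\sigma^{V_\gf}=0$, so $V_\gf\in\hat{\frak F}^{\rm LH}_U$; and the null boundary condition $\sigma^{V_\gf}\vert_{\partial U}={\sf d_h}\lambda\vert_{\partial U}$ is then met vacuously with $\lambda=0$, which already gives $V_\gf\in{\frak F}^{\rm LH}_U$.

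The one point I would treat with care is that $\sigma^{V_\gf}=0$ is genuinely admissible, i.e. that the full Lie derivative, not merely its vertical contraction, tests trivially against $W$. A priori $\mathscr{L}_{{\sf j}V_\gf}\Omega_L$ carries a horizontal $(n,1)$-component $\bigl({\sf d_h}\iota_{{\sf j}V_\gf}+\iota_{{\sf j}V_\gf}{\sf d_h}\bigr)\Omega_L$; I would discard it exactly as (\ref{eqn:LH}) prescribes, contracting with an arbitrary $W\in{\frak F}_U$ and restricting to ${\cal E}_L$, where the Multisymplectic formula together with the linearized equations satisfied by $W$ forces it to vanish. This is bookkeeping in the variational bicomplex rather than analysis, but it is where one must be attentive to bidegrees and signs.

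Finally, for the ``hence'' clause I would apply the first part with $\gf=\eta_V$, the subtlety being that $V_{\eta_V}$ is the \emph{fibre-constant} field built from $\eta_V={\sf j}\eta^*V$ and is not $V$ itself (they agree only after pullback along $\eta$). It remains to verify that $\eta_V$ solves the linearized equation: pulling the coordinate form of (\ref{eqn:linearized-EL}) back along a solution $\eta\in{\cal A}_U$ converts the total derivatives into ordinary ones and yields $\sum_{j}\partial_j\bigl(\varpi^{ij}(\partial_i\eta_V^j-\partial_j\eta_V^i)\bigr)=0$, which is precisely $d\star d\eta_V=0$. The already-established case then gives $V_{\eta_V}\in{\frak F}^{\rm LH}_U$, completing the argument; I expect this pullback step, rather than the presymplectic computation, to be the main thing to get right.
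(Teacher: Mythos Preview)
Your argument is correct and rests on the same observation as the paper's: because the components of $V_\gf$ depend on $x$ alone, $\iota_{{\sf j}V_\gf}\Omega_L$ is ${\sf d_v}$-closed, so $\sigma^{V_\gf}=0$ works and the boundary condition is vacuous. The paper packages this slightly differently, deducing the lemma from Lemma~\ref{lma:H-phi}: there one checks the off-shell identity ${\sf d_v}F^\gf=-\iota_{{\sf j}V_\gf}\Omega_L$ for the helicity current $F^\gf=\iota_{{\sf j}V_\gf}\iota_{{\sf j}R}\Omega_L$, and ${\sf d_v}$-closedness of $\iota_{{\sf j}V_\gf}\Omega_L$ then follows from ${\sf d_v}^2=0$. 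Your direct inspection of coefficients and the paper's exhibition of a ${\sf d_v}$-primitive are two sides of the same coin; the paper's route has the bonus of simultaneously producing the explicit Hamiltonian observable current $F^\gf$ used later in the Poisson bracket construction, while yours is more self-contained.

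One small remark: your caution about a possible horizontal $(n,1)$-component of $\mathscr{L}_{{\sf j}V_\gf}\Omega_L$ is unnecessary. For evolutionary $V$ one has ${\sf d_h}\iota_{{\sf j}V}=-\iota_{{\sf j}V}{\sf d_h}$ (equation~(\ref{eqn:d_h-iota})), so the horizontal pieces of Cartan's formula cancel and $\mathscr{L}_{{\sf j}V}\Omega_L={\sf d_v}\iota_{{\sf j}V}\Omega_L$ stays in bidegree $(n-1,2)$, exactly as (\ref{eqn:LH}) already writes it. This does not affect correctness, only economy.
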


The following assertion holds for linear~theories.

\begin{lemma}

For every solution, $\eta\in{\cal A}_U$, and~every first variation of solutions, $V\in{\frak F}_U$, in~a linear theory, there exists ${\gf }\in \Gamma(Y^{\sf L}\vert_U)$ such that
$
	 V\vert_{{\sf j} \eta(U)}=V_\gf\vert_{{\sf j} \eta(U)}
$
or equivalently
$
	 \eta_V=\gf.
$

\end{lemma}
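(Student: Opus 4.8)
The plan is to take the obvious candidate $\gf := \eta_V = {\sf j}\eta^*V$, so that the second condition $\eta_V = \gf$ holds by construction, and then to verify the two facts that make this choice legitimate: first, that $\gf$ is an admissible first variation, i.e. that it solves the linearized equation $d\star d\gf = 0$, so that $V_\gf \in {\frak F}_U$ is actually defined; and second, that the identity $\eta_V = \gf$ is equivalent to the pointwise identity $V\vert_{{\sf j}\eta(U)} = V_\gf\vert_{{\sf j}\eta(U)}$ along the image of the solution.

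The heart of the argument, and the step I expect to be the main obstacle, is showing that $\gf = {\sf j}\eta^*V$ satisfies $d\star d\gf = 0$. Here I would work in the local coordinate chart and start from the coordinate form of the linearized equations recorded after~(\ref{eqn:linearized-EL}), namely $\sum_j \frac{\sf d}{{\sf d}x^j}(\varpi^{ij}(\frac{\sf d}{{\sf d}x^i}V^j - \frac{\sf d}{{\sf d}x^j}V^i))\vert_{{\cal E}_L} = 0$, which holds for every $V \in {\frak F}_U$. Since $\eta \in {\cal A}_U$ has ${\sf j}\eta(U) \subseteq {\cal E}_L$ by~(\ref{eqn:A_U}), the restriction to ${\cal E}_L$ is exactly what is needed to pull this equation back along ${\sf j}\eta$. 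The key identity is that pullback along a holonomic section turns total derivatives into ordinary derivatives: ${\sf j}\eta^*(\frac{\sf d}{{\sf d}x^j}V^i) = \frac{\partial}{\partial x^j}(V^i\circ {\sf j}\eta) = \frac{\partial \gf^i}{\partial x^j}$, since by definition $\gf^i = V^i \circ {\sf j}\eta$. Because $\varpi^{ij} = \sqrt{|\det g|}\,g^{ii}g^{jj}$ depends only on the base coordinates, applying ${\sf j}\eta^*$ to the linearized equation yields $\sum_j \frac{\partial}{\partial x^j}(\varpi^{ij}(\frac{\partial \gf^j}{\partial x^i} - \frac{\partial \gf^i}{\partial x^j})) = 0$ for each $i$, which is precisely the coordinate expression of $d\star d\gf = 0$. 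This also clarifies the off-shell/on-shell contrast noted earlier: a general $V$ satisfies the linearized equation only on ${\cal E}_L$, but that is the only locus the pullback along a solution ever sees.

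With $d\star d\gf = 0$ established, $V_\gf \in {\frak F}_U$ is well defined and the listed property $\eta_{(V_\gf)} = \gf$ applies. For the equivalence, I would note that both $V$ and $V_\gf$ are evolutionary, hence vertical and completely determined by their base components $V^i$ and $\gf^i(x)$ through the universal prolongation formula. Thus $\eta_V = \eta_{(V_\gf)} = \gf$ holds if and only if $V^i\circ {\sf j}\eta = \gf^i$ for all $i$, which is the equality of components along the image, i.e. $V\vert_{{\sf j}\eta(U)} = V_\gf\vert_{{\sf j}\eta(U)}$. The higher-order components of the two prolongations then coincide on ${\sf j}\eta(U)$ automatically, by the same identity used above: ${\sf j}\eta^*(\frac{\sf d}{{\sf d}x^j}V^i) = \frac{\partial \gf^i}{\partial x^j} = {\sf j}\eta^*(\frac{\sf d}{{\sf d}x^j}\gf^i)$, even though $\frac{\sf d}{{\sf d}x^j}V^i$ and $\frac{\sf d}{{\sf d}x^j}\gf^i$ may differ as functions on ${\sf J}Y$. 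Combining these observations, the candidate $\gf = \eta_V$ meets all requirements, which proves the lemma.
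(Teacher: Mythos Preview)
The paper states this lemma without proof (it is introduced only with the sentence ``The following assertion holds for linear theories''), so there is no argument to compare against. Your proof is correct and is the natural one: setting $\gf := \eta_V = {\sf j}\eta^*V$ makes the second condition tautological, the standard identity ${\sf j}\eta^*\bigl(\tfrac{\sf d}{{\sf d}x^j}F\bigr) = \tfrac{\partial}{\partial x^j}({\sf j}\eta^*F)$ together with ${\sf j}\eta(U)\subseteq{\cal E}_L$ pulls the on-shell linearized equation for $V$ back to $d\star d\gf = 0$, and the pointwise equality $V\vert_{{\sf j}\eta(U)} = V_\gf\vert_{{\sf j}\eta(U)}$ is then immediate from $V^i\circ{\sf j}\eta = \gf^i = (V_\gf)^i\circ{\sf j}\eta$.
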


If we want to consider the gauge classes on ${\cal A}_U$ we can consider the gauge representatives consisting of {\em Lorentz gauge fixing conditions}, i.e.,~for every $\eta=\eta_0+\gf \in {\cal A}_U$ there exists a gauge related
\begin{equation}\label{eqn:Lorentz-gauge}
	\tilde{\eta}=\eta_0+\tilde{\gf}\in {\cal A}_U,\quad
	d\star\tilde{\gf}=0
\end{equation}
where $\tilde{\eta}-\eta=\tilde{\gf}-\gf\in{\cal G}_U$ being a gauge translation by exact $1$-forms in ${\cal A}_U$.

Recall the Hodge-Morrey-Friedrichs $L^2$-ortogonal decomposition, see~\cite{Sc}. For~null normal components we have,
\begin{gather}\label{eqn:HMF}
	{\sf \Omega}^1(U)=d{\sf \Omega}_D^{0}(U)\oplus\mathfrak{H}^1_N(U)\oplus
	\left(\mathfrak{H}^1(U)\cap d{\sf \Omega}^{0}(U)\right)\oplus d^{\star}{\sf \Omega}_N^{2}(U)
\end{gather}
where
\[
\begin{array}{rcl}
	{\sf \Omega}^1_N(U)
		&:=&
		\left\{\beta
			\,:	\, \beta\in {\sf \Omega}^{1}(U)\,:\,i^*_{\partial U}\left(\star\beta\right)=0\,\right\},	
		\\
	\mathfrak{H}_N^1(U)
		&:=&
		\mathfrak{H}^1(U)\cap {\sf \Omega}^k_N(U).
		\\
\end{array}
\]

Given a fixed point, $\eta_0\in{\cal A}_U$, the~linear space of Lorentz gauge fixing, $d\star \gf=0$, defines a \mbox{linear subspace} \[
	{\cal L}_U\subseteq 
	\mathfrak{H}^1_N(U)\oplus
	d^{\star}{\sf \Omega}_N^{2}(U)
\]
of linearized solutions, $d^\star d\gf=0$, such that there is a covering, ${\frak e}_{\eta_0}(\gf)=[\eta_0+\gf]$,
\begin{equation}\label{eqn:cover-e}
	{\frak e}_{\eta_0}:{\cal L}_U\rightarrow {\cal A}_U/{\cal G}_U
\end{equation}
of the $[\eta_0]$-component{\em space of solutions modulo gauge,} ${\cal A}_U/{\cal G}_U$.

The following results of this section recover the usual characterizations of gauge symmetries in ${\frak G}_U$ as translations by exact~forms.

\begin{lemma}\label{lma:eta_X-closed}

For every $X\in\hat{\frak G}_U$ and $\eta\in {\cal A}_U$, $d\eta_X=0$.

\end{lemma}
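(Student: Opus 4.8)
The plan is to turn the coordinate-free statement $d\eta_X=0$ into a pointwise identity on the solution locus ${\cal E}_L$ and then extract that identity from the degeneracy condition (\ref{eqn:gauge-d_h}) that defines $\hat{\frak G}_U$. First I would localize. In a chart, $\eta_X={\sf j}\eta^*X=\sum_j(X^j\circ{\sf j}\eta)\,dx^j$, and since prolongation pullback intertwines total and partial derivatives, ${\sf j}\eta^*\frac{\sf d}{{\sf d}x^i}=\frac{\partial}{\partial x^i}{\sf j}\eta^*$, one computes $d\eta_X=\sum_{i<j}(G^{ij}\circ{\sf j}\eta)\,dx^i\wedge dx^j$ with $G^{ij}:=\frac{\sf d}{{\sf d}x^i}X^j-\frac{\sf d}{{\sf d}x^j}X^i$. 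Since ${\sf j}\eta(U)\subseteq{\cal E}_L$, it then suffices to prove $G^{ij}\vert_{{\cal E}_L}=0$ for all $i,j$.

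Next I would expand the degeneracy condition. A direct computation of the double contraction gives $\iota_{{\sf j}W}\iota_{{\sf j}X}\Omega_L=\sum_{ij}\varpi^{ij}(-G^{ij}W^j+X^jG^{ij}_W)\nu^i$, where $G^{ij}_W=\frac{\sf d}{{\sf d}x^i}W^j-\frac{\sf d}{{\sf d}x^j}W^i$. The crucial observation is that the right-hand side $\iota_{{\sf j}W}{\sf d_h}\rho^X$ is assembled from $\rho^X=\sum\rho^{ij}\vartheta^j_i\wedge\nu^{ji}$, so after contraction it involves only the quantities $\iota_{{\sf j}W}\vartheta^j_i=\frac{\sf d}{{\sf d}x^i}W^j$ and $\iota_{{\sf j}W}\vartheta^j_{ik}=\frac{\sf d^2}{{\sf d}x^i{\sf d}x^k}W^j$; that is, it carries no term in which $W$ appears undifferentiated. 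Comparing the two sides degree-by-degree in the jet of $W$, the first-order-in-$W$ part reproduces Lemma \ref{lma:X=drho}, namely $\varpi^{ij}X^j=\frac{\sf d}{{\sf d}x^j}\rho^{ij}$, while the zeroth-order-in-$W$ part forces $\sum_j\varpi^{ij}G^{ij}W^j=0$ on ${\cal E}_L$ for every $W\in{\frak F}_U$; as $\varpi^{ij}$ is invertible this yields $G^{ij}\vert_{{\cal E}_L}=0$, hence $d\eta_X=0$.

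The main obstacle I anticipate is justifying the separation of the identity by the order of the derivatives of $W$: because $W$ is constrained to ${\frak F}_U$, its jet components are not a priori independent. The resolution is that the linearized equation $d\star d(\eta_W)=0$ is of second order, so it places no constraint on the $1$-jet of $W$ at a point; one may therefore choose $W\in{\frak F}_U$ with prescribed value and vanishing first derivatives at any chosen point ${\sf j}\eta(x)$, which isolates the undifferentiated-$W$ coefficient and pins down $\varpi^{ij}G^{ij}=0$ there. A secondary point needing care is checking that restriction to ${\cal E}_L$ does not generate spurious undifferentiated-$W$ contributions on the right-hand side; this holds because restriction only evaluates the coefficient functions while preserving the contact structure of the $\vartheta$-forms, so every surviving term still contracts against at least one derivative of $W$.
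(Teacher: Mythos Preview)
Your argument is correct, but the route is genuinely different from the paper's. The paper does not perform a pointwise jet-order comparison at all: it computes the $L^2$-norm
\[
\|d\eta_X\|_2^2=\int_U d\eta_X\wedge\star d\eta_X
=\int_U \eta_X\wedge\star d^\star d\eta_X-\int_{\partial U}\eta_X\wedge\star d\eta_X,
\]
kills the bulk term using $d^\star d\eta_X=0$ (since $X\in{\frak F}_U$), and then rewrites the boundary term via Lemma~\ref{lma:X=drho} as $\int_{\partial U}d\bigl({\sf j}\eta^*\rho^X\bigr)\wedge\eta_X$. Using ${\sf d_h}\rho^X=\iota_{{\sf j}X}\Omega_L$ this becomes $\int_{\partial U}{\sf j}\eta^*(\iota_{{\sf j}X}\iota_{{\sf j}X}\Omega_L)=0$, so $d\eta_X=0$. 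Thus the paper's proof is an integral identity that ultimately rests on the antisymmetry $\iota_{{\sf j}X}\iota_{{\sf j}X}\Omega_L=0$, while yours is a local coefficient-matching argument that tests the degeneracy relation against carefully chosen $W\in{\frak F}_U$ (e.g.\ $W=V_{df}$ with prescribed $1$-jet at a point, which indeed lies in ${\frak F}_U$ since $d\star d(df)=0$). The paper's method is shorter and exploits the Riemannian $L^2$-structure already present; your method is more hands-on, avoids integration entirely, and makes transparent that the vanishing of $G^{ij}$ is forced pointwise by the absence of undifferentiated-$W$ terms on the exact side. Both arguments tacitly assume the specific ansatz $\rho^X=\sum\rho^{ij}\vartheta^j_i\wedge\nu^{ji}$ recorded in Lemma~\ref{lma:X=drho}; you should flag that you are invoking this structural hypothesis rather than the bare Definition~\ref{dfn:hat-gauge}.
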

\begin{proof}

If we calculate the square of the $L_2$-norm, $\|d\eta_X\|_2^2=
	\int_U d\eta_X\wedge \star d\eta_X 
,$ of $d\eta_X$ where $\star$ stands for the Hodge star operator for the Riemannian metric $g$, then we get
\[
	\int_U\eta_X\wedge\star d^\star d\eta_X
	-
	\int_{\partial U} \eta_X\wedge \star d\eta_X
\]

If $X\in{\frak F}_U$, $d^\star d\eta_X=0$ then due to Lemma~\ref{lma:X=drho}, the~norm $	\|d\eta_X\|_2^2$ can be calculated as
\[
	-
	\int_{\partial U} \eta_X\wedge \star d\eta_X
	=
		-
	\int_{\partial U} {\sf j}\eta^* (\rho^X) \wedge   d\eta_X
	=
	\int_{\partial U} d\left({\sf j}\eta^* ( \rho^X)\right) \wedge   \eta_X.
\]

Recall (\ref{eqn:gauge-d_h}) and that ${\sf j}\eta^* ({\sf d_h} \rho^X)={\sf j}\eta^*(\iota_{{\sf j} X}\Omega_L)$. Hence
\[
	d\left({\sf j}\eta^* ( \rho^X)\right) \wedge   \eta_X
	=
	{\sf j}\eta^*(\iota_{{\sf j} X}\iota_{{\sf j} X}\Omega_L)=0
\]

Therefore $d\eta_X=0$.
\end{proof}

\begin{proposition}\label{prop:X-exact}

For every solution, $\eta\in {\cal A}_U,$ and every gauge first variation with null boundary condition, ${ X}\in{\frak G}_U$, the~induced $1$-form in the base,
$
	\eta_X$, defined as in (\ref{eqn:eta_V}),
 is exact. Therefore, $\eta_X\in {\cal G}_U$.
\end{proposition}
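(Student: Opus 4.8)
The plan is to upgrade the closedness of $\eta_X$, already furnished by Lemma~\ref{lma:eta_X-closed}, to exactness by a Hodge-theoretic argument that exploits the null boundary data. First I would record that the hypothesis ${\sf j}X\vert_{\partial U}=0$ forces the induced $1$-form $\eta_X={\sf j}\eta^*X$ of~(\ref{eqn:eta_V}) to vanish identically along $\partial U$; in the language of~(\ref{eqn:HMF}) this means that $\eta_X$ is simultaneously \emph{Dirichlet} and \emph{Neumann}, i.e. $i_{\partial U}^*\eta_X=0$ and $i_{\partial U}^*(\star\eta_X)=0$. Together with $d\eta_X=0$ these are the two inputs the argument will consume.

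Next I would feed $\eta_X$ into the Hodge--Morrey--Friedrichs splitting~(\ref{eqn:HMF}), writing $\eta_X=df_0+h_N+h_{\rm ex}+d^\star\zeta$ with $f_0\in{\sf \Omega}_D^0(U)$, $h_N\in\mathfrak{H}^1_N(U)$, $h_{\rm ex}\in\mathfrak{H}^1(U)\cap d{\sf \Omega}^0(U)$ and $d^\star\zeta\in d^\star{\sf \Omega}_N^2(U)$. The co-exact summand is eliminated first: pairing $\eta_X$ with $d^\star\zeta$ and integrating by parts, the bulk term carries $d\eta_X=0$ and the boundary term carries $i_{\partial U}^*\eta_X=0$, so that $\|d^\star\zeta\|_2^2=0$. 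Since $df_0$ and $h_{\rm ex}$ are already exact, the whole statement reduces to the vanishing of the Neumann-harmonic component $h_N$, equivalently to $\int_U\eta_X\wedge\star h'=0$ for every $h'\in\mathfrak{H}^1_N(U)$.

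This last vanishing is the crux, and it cannot follow from the topology of $U$ alone: a closed $1$-form vanishing on $\partial U$ need not be exact (for instance on $U=T^4\setminus B^4$ one has $H^1(U)\neq 0$ while $i_{\partial U}^*$ is forced to vanish on $\partial U=S^3$). The gauge-degeneracy structure must therefore enter. I would invoke Lemma~\ref{lma:X=drho}, which on ${\cal E}_L$ presents the components of $\eta_X$ as the weighted divergence $\varpi^{ij}\eta_X^{\,j}=\frac{\sf d}{{\sf d}x^j}({\sf j}\eta^*\rho^{ij})$, that is, it realizes $\int_U\eta_X\wedge\star h'$ as an integral that can be recast by parts against the co-closed field $h'$. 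Moving $d^\star$ onto $h'$, the bulk term then carries $dh'=0$ while the boundary term carries the Neumann datum $i_{\partial U}^*(\star\eta_X)=0$ (equivalently the normal vanishing of $\rho^X$ inherited from ${\sf j}X\vert_{\partial U}=0$). Hence the pairing vanishes for all $h'$, and the choice $h'=h_N$ yields $\|h_N\|_2^2=0$. The delicate point, and the step I expect to be the main obstacle, is precisely the bookkeeping of this boundary term: one must verify that the null boundary condition on ${\sf j}X$ propagates to the potential $\rho^X$ so as to annihilate exactly the term paired against $\mathfrak{H}^1_N(U)$, since the representation from Lemma~\ref{lma:X=drho} is a priori only coordinate-local and must be reconciled with the global integration by parts.

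With $h_N=0$ and $d^\star\zeta=0$ we are left with $\eta_X=df_0+h_{\rm ex}=dF$, so $\eta_X$ is exact. Finally the boundary data $\eta_X\vert_{\partial U}=0$ shows that the gauge parameter $F$ has vanishing differential along $\partial U$, which is exactly the null boundary normalization built into ${\cal G}_U$; therefore $\eta_X\in{\cal G}_U$, as claimed.
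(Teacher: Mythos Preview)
Your route differs from the paper's: instead of decomposing $\eta_X$ directly, the paper solves the Dirichlet problem $\Delta\psi=d^\star\eta_X$, $\psi|_{\partial U}=0$, so that $\tilde\eta_X=\eta_X-d\psi$ is harmonic (coclosed by construction, closed by Lemma~\ref{lma:eta_X-closed}) with Dirichlet boundary data, and then argues by cases on $\partial\psi/\partial x^n|_{\partial U}$: if it vanishes one gets $\tilde\eta_X\in\mathfrak H^1_N(U)\cap\mathfrak H^1_D(U)=0$, and otherwise the paper places $\tilde\eta_X$ in $\mathfrak H^1(U)\cap d{\sf\Omega}^0(U)$. Your elimination of the co-exact piece $d^\star\zeta$ is correct and in fact immediate from closedness of $\eta_X$ and $L^2$-orthogonality of~(\ref{eqn:HMF}), without any boundary input.

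The genuine gap is your step eliminating the Neumann-harmonic component $h_N$. You are right that this is where the gauge structure must enter, but the integration-by-parts via Lemma~\ref{lma:X=drho} does not close. That lemma gives the pointwise coordinate identity $\varpi^{ij}X^j=\tfrac{\sf d}{{\sf d}x^j}\rho^{ij}$ for each pair $i,j$; it does \emph{not} assert $\eta_X=d^\star\mu$ for a global $2$-form $\mu$, which is what you would need in order to move a codifferential onto $h'$ and invoke $dh'=0$. What the global degeneracy condition $\iota_{{\sf j}W}\iota_{{\sf j}X}\Omega_L=\iota_{{\sf j}W}{\sf d_h}\rho^X$ actually yields, after pullback by ${\sf j}\eta$ and using $d\eta_X=0$, is only that $\eta_X\wedge\star d\eta_W$ is exact on $U$ for every $W\in\mathfrak F_U$; converting this into $\int_U\eta_X\wedge\star h'=0$ would require $h'=d\eta_W$ for some $W$, i.e.\ $h'$ exact, which is precisely what fails for nonzero $h'\in\mathfrak H^1_N(U)$. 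Your claim that the boundary term is killed by ``the normal vanishing of $\rho^X$ inherited from ${\sf j}X|_{\partial U}=0$'' is also unsupported: $\rho^X$ is a primitive for $\iota_{{\sf j}X}\Omega_L$, not for $X$ itself, and the null boundary condition on ${\sf j}X$ does not propagate to $\rho^X$. Finally, observe that $\int_U\eta_X\wedge\star h'$ is the Poincar\'e--Lefschetz pairing of $[\eta_X]\in H^1_{\rm dR}(U)$ with $[\star h']\in H^{n-1}_{\rm dR}(U,\partial U)$, and its vanishing for all $h'\in\mathfrak H^1_N(U)$ is \emph{equivalent} to exactness of $\eta_X$; so this step carries the full content of the proposition and cannot be settled by the bookkeeping you describe.
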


\begin{proof}
We solve the Poisson BVP for $\psi:U\rightarrow\mathbb{R}$ with Dirichlet boundary conditions
\[\left\{\begin{array}{rcll}
	\Delta \psi&=&
		d^\star \eta_X,&\text{in }U,\\
	\psi\vert_{\partial U}&=&
		0,&$\text{in }$\partial U.
\end{array}\right.\]

Notice that the necessary integral condition for the Poisson equation $\int_U d \star \eta_X d\nu=0$ follows from the boundary condition $\eta_X\vert_{\partial U}=0$.

Thus $\tilde{\eta}_X=\eta_X-d\psi$ is a solution of $d^\star d\tilde{\eta}_X=0$ with Lorentz gauge fixing condition $d^\star \tilde{\eta}_X=0$ and Dirichlet boundary~condition.

Recall (\ref{eqn:HMF}). Since ${\eta}_X\in\hat{\frak G}_U$, according to Lemma \ref{lma:eta_X-closed}, $d{\eta}_X=0$ and $d\tilde{\eta}_X=0$. 

There are two cases: 

Case 1. The~normal component $\partial\psi/\partial x^n\vert_{\partial U}$ does not vanish. Here in local coordinates, $\partial U=\{x^n=0\}$. Then $\tilde{\eta}_X$ is harmonic $(d \tilde{\eta}_X=0$ and $d^\star\tilde{\eta}_X=0$). Therefore, it belongs to $\mathfrak{H}^1(U)\cap d{\sf \Omega}^{0}(U),$  i.e.,~it is~exact.

Case 2. $\partial\psi/\partial x^n\vert_{\partial U}=0$, that is, $\tilde{\eta}_X\in{\sf \Omega}_N^1(U)$. Then $\tilde{\eta}_X\in \mathfrak{H}_N^1(U)\cap \mathfrak{H}_D^1(U)$, i.e.,~$\tilde{\eta}_X=0,$ where
\[
\begin{array}{rcl}
	{\sf \Omega}^1_D(U)
		&:=&
		\left\{\beta
			\,:	\, \beta\in
			{\sf \Omega}^{1}(U)\,:\,\beta(\xi)=0,\,\xi\in {\frak X}(\partial U)\,\right\}	,
		\\
	\mathfrak{H}_D^1(U)
		&:=&
		\mathfrak{H}^1(U)\cap {\sf \Omega}^k_D(U).
		\\
\end{array}
\]

In any case $\tilde{\eta}_X$ is exact and so is $\eta_X$.
\end{proof}

\begin{proposition}\label{prop:X-exact_2}
Take any solution $\eta$, and~any gauge symmetry, ${ X}\in\frak{F}_U^{\rm LH}\cap \hat{\frak G}_U$. Then there exists $X'\in\hat{\frak G}_U$ such that $X-X'\in {\frak G}_U$. Hence $\eta_{X-X'}\in {\cal G}_U$ is exact.
\end{proposition}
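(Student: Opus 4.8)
The plan is to turn the hypothesis into a statement about a single base $1$-form and then peel off a gauge variation carrying the boundary data. By Lemma~\ref{lma:eta_X-closed}, the assumption $X\in\hat{\frak G}_U$ already forces $\eta_X={\sf j}\eta^*X$ to be closed, $d\eta_X=0$. Since $\hat{\frak G}_U$ is a vector space (indeed a Lie ideal), any $X'\in\hat{\frak G}_U$ will give $X-X'\in\hat{\frak G}_U$ automatically, so the only genuine requirement for $X-X'\in{\frak G}_U$ is the null boundary condition ${\sf j}(X-X')\vert_{\partial U}=0$; equivalently, the prolonged boundary jet of $X'$ must coincide with that of $X$ along $\partial U$. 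Once $X-X'\in{\frak G}_U$ is secured, Proposition~\ref{prop:X-exact} applies verbatim and yields that $\eta_{X-X'}$ is exact, i.e. $\eta_{X-X'}\in{\cal G}_U$, which is the desired conclusion.

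The construction of $X'$ is the substance of the argument. I would work in a collar neighborhood $N\cong\partial U\times[0,\geps)$ of the boundary furnished by the normal coordinates already used in the Introduction. There the closed form $\eta_X$ restricts, and its full boundary jet along $\partial U$ is therefore the boundary jet of a closed $1$-form. I then build a globally defined closed $1$-form $\eta_{X'}$ on $U$ agreeing to infinite order with $\eta_X$ along $\partial U$: combining a collar representative with a cutoff $\chi$ equal to $1$ on a smaller collar and supported in $N$. Since $\eta_{X'}$ is closed it satisfies the linearized equation $d^{\star}d\eta_{X'}=0$ off-shell and determines, via (\ref{eqn:V_phi}), an evolutionary field $X'=V_{\eta_{X'}}$; closedness is exactly the linear-theory characterization of the presymplectic degeneracy condition (\ref{eqn:gauge-d_h}), so $X'\in\hat{\frak G}_U$. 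Because $\eta_{X'}$ and $\eta_X$ share every boundary derivative, the prolongations ${\sf j}X'$ and ${\sf j}X$ coincide on $\partial U$, giving ${\sf j}(X-X')\vert_{\partial U}=0$ and hence $X-X'\in{\frak G}_U$.

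The delicate step is producing a truly closed extension that reproduces the whole boundary jet rather than merely the boundary value, since naive cutting off destroys closedness, and the tangential part $i^*_{\partial U}\eta_X$ need not be exact on $\partial U$. This is precisely where the Hodge--Morrey--Friedrichs decomposition (\ref{eqn:HMF}) is needed: it lets me isolate the absolute-cohomology (Neumann harmonic) component $\mathfrak{H}^1_N(U)$ of $\eta_X$, which must be extended globally as a genuine closed (harmonic) field and absorbed into $X'$, while the remaining exact content in $d{\sf \Omega}_D^{0}(U)\oplus(\mathfrak{H}^1(U)\cap d{\sf \Omega}^{0}(U))$ is cut off through a potential and left in the complementary piece $X-X'$. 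Having arranged $X-X'\in{\frak G}_U$, the implications $X-X'\in{\frak G}_U\Rightarrow\eta_{X-X'}$ exact $\Rightarrow\eta_{X-X'}\in{\cal G}_U$ complete the proof.
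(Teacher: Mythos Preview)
Your route diverges from the paper's in a way worth noting. As you correctly observe, once $X'\in\hat{\frak G}_U$ the only condition for $X-X'\in{\frak G}_U$ is ${\sf j}(X-X')\vert_{\partial U}=0$. But then $X'=X$ already does the job: $X\in\hat{\frak G}_U$ by hypothesis and $X-X=0\in{\frak G}_U$ trivially. Your collar--cutoff--HMF construction is therefore not needed for the statement as literally written; in particular, $\eta_X$ is itself a globally defined closed $1$-form matching its own boundary jet, so there is nothing to extend. The elaborate machinery you propose never becomes necessary, and the appeal to the decomposition~(\ref{eqn:HMF}) is beside the point.

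What the paper's proof actually accomplishes is stronger than the bare existential claim, and this is where the hypothesis $X\in{\frak F}^{\rm LH}_U$---which you never invoke---enters. The key step, absent from your argument, is an integral computation using the gauge condition and Lemma~\ref{lma:X=drho} showing that
\[
\int_{\partial U}{\sf j}\eta^*\left(\iota_{{\sf j}W}\iota_{{\sf j}X}\Omega_L\right)=\int_{\partial U}d\left({\sf j}\eta^*(\iota_{{\sf j}W}\rho)\right)=0
\]
for all $W\in{\frak F}_U$, whence the \emph{tangential} components $X^j({\sf j}\eta)\vert_{\partial U}$, $j=1,\dots,n-1$, vanish; that is, $i^*_{\partial U}\eta_X=0$. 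Your worry that ``the tangential part $i^*_{\partial U}\eta_X$ need not be exact on $\partial U$'' is thus resolved directly: the locally-Hamiltonian boundary condition forces it to be zero. With tangential vanishing in hand, the paper builds a smooth function $f$ whose differential $df$ matches the (now purely normal) boundary data of $\eta_X$ and takes $X'=X_{df}$, so that $X'$ is the gauge variation associated to an \emph{exact} $1$-form. It is this sharper conclusion---that $X$ agrees with some $X_{df}$ modulo ${\frak G}_U$---that is used in the proof of Theorem~\ref{thm:inclusion}, and your approach does not recover it.
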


\begin{proof}
Take $X\in{\frak F}^{\rm LH}_U\cap \hat{\frak G}_U$. According to the argument given in Proposition \ref{prop:X-exact} we just need to show that the pullback $i^*_{\partial U} \eta_X\in{\sf \Omega}^1(\partial M)$ is null for the inclusion $i_{\partial U}:\partial U\rightarrow U$. Then $\eta_X-d\psi$ would have null Dirichlet condition and would be exact for suitable $\psi$. 

Notice that the following boundary conditions are in general different objects:
\begin{equation}\label{eqn:j|s}
	i^*_{\partial U} \eta_X,
	\qquad
	\eta_X\vert_{\partial U},
	\qquad
	({\sf j}X)\vert_{\partial U},
	\qquad
	{\sf j}\vert_{\partial U}(X\vert_{\partial U})
	.
\end{equation}

Since $X\in \frak{F}^{\rm LH}_U$, then we are assuming a boundary condition on $X$, namely ${\sf d_v}\rho\vert_{\partial U}=\lambda $, with~${\sf d_h}\lambda=0$, when evaluating in ${{\cal E}_L,{\frak F}_U}$. Due to Lemma~\ref{lma:X=drho} we have that $X\vert_{\partial U}$ does not depend on vertical coordinates, $u^j$ when evaluating in ${\cal E}_L$.

We claim that indeed $i^*_{\partial U} \eta_X =0$. Recall that, according to Lemma \ref{lma:X=drho}, for~every $W\in {\frak F}_U$ we have
\[
	\int_{\partial U} {\sf j}\eta^* \left(
		\sum_{j=1}^{n-1}
			\varpi^{nj} X^j
			\left(\frac{{\sf d} W^j}{{\sf d} x^n}-\frac{{\sf d} W^n}{{\sf d} x^j}\right)
		\nu^n
	\right)=
\]\[
	\int_{\partial U} {\sf j}\eta^* \left(
		\iota_{{\sf j}W} \iota_{{\sf j}X}\Omega_L
	\right)
	= 
	\int_{\partial U}  
		{\sf j}\eta^*( \iota_{{\sf j} W}{\sf d_h}\rho)
	=
	\int_{\partial U}  
		d\left({\sf j}\eta^*( \iota_{{\sf j} W} \rho)\right)=0.
\]

Therefore, $X^j({\sf j}\eta)\vert_{\partial U}=0$ for $j=1,\dots, n-1$, hence null Dirichlet boundary conditions hold for $\eta_X$. There exists a smooth function $f:U\rightarrow\mathbb{R}$ such that $i^*_{\partial U}df=i_{\partial U}^*\eta_X=0$, and~$\frac{\partial f}{\partial x^n}\vert_{\partial U} =X^n({\sf j}\eta)$. If~\[
	{\sf j}X_{df}=
	{\sf j}\left(\sum_{i=1}^n
		\frac{\partial f}{ \partial x^i}\frac{\partial}{\partial u^i}
	\right),
\]
then ${X}_\eta':={\sf j}\eta^*(X-X_{df})={\eta}_X -df$ has null both Neumann and Dirichlet conditions on $\partial U$. We just need to refine the choice of $f$, so that  ${\sf j}{X}'\vert_{\partial U}=0$. Hence ${X}'\in{\frak G}_U$.
\end{proof}

\begin{theorem}\label{thm:inclusion}
There is an inclusion of the gauge quotients of Lie algebras, 
\[\xymatrix{
	{\frak F}_U^{\rm LH}/{\frak G}_U
		\ar@{^{(}->}[r]^{\hat{\cdot}}
		&
	\hat{\frak F}_U^{\rm LH}/\hat{\frak G}_U
}.\]
\end{theorem}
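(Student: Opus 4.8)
The plan is to exhibit the arrow $\hat\cdot$ as the map descended from the set-theoretic inclusion ${\frak F}_U^{\rm LH}\hookrightarrow\hat{\frak F}_U^{\rm LH}$, and then to verify injectivity. First I would note that this inclusion is a Lie algebra homomorphism, being the restriction of the ambient bracket on $\mathfrak{Ev}({\sf J}Y\vert_U)$, and that it carries ${\frak G}_U$ into $\hat{\frak G}_U$ because ${\frak G}_U\subseteq\hat{\frak G}_U$. Composing with the projection onto $\hat{\frak F}_U^{\rm LH}/\hat{\frak G}_U$ produces a homomorphism ${\frak F}_U^{\rm LH}\to\hat{\frak F}_U^{\rm LH}/\hat{\frak G}_U$ that annihilates ${\frak G}_U$; since ${\frak G}_U\subseteq{\frak F}_U^{\rm LH}$ and $\hat{\frak G}_U\subseteq\hat{\frak F}_U^{\rm LH}$ are Lie ideals, both quotients are Lie algebras and the map factors through ${\frak F}_U^{\rm LH}/{\frak G}_U$, yielding the well-defined Lie algebra homomorphism $\hat\cdot$ of the statement.

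Next I would reduce injectivity to a single set inclusion. A class $[V]\in{\frak F}_U^{\rm LH}/{\frak G}_U$ lies in $\ker\hat\cdot$ exactly when its representative satisfies $V\in\hat{\frak G}_U$, so $\ker\hat\cdot=({\frak F}_U^{\rm LH}\cap\hat{\frak G}_U)/{\frak G}_U$ and the theorem is equivalent to ${\frak F}_U^{\rm LH}\cap\hat{\frak G}_U\subseteq{\frak G}_U$, the reverse inclusion being immediate from ${\frak G}_U\subseteq{\frak F}_U^{\rm LH}$ and ${\frak G}_U\subseteq\hat{\frak G}_U$. To establish it I would fix a solution $\eta\in{\cal A}_U$ and feed $X\in{\frak F}_U^{\rm LH}\cap\hat{\frak G}_U$ into Proposition~\ref{prop:X-exact_2}, which furnishes an exact-form gauge $X'\in\hat{\frak G}_U$, namely the prolongation of a translation by some $df$, with $X-X'\in{\frak G}_U$. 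Consequently $[X]=[X']$ in ${\frak F}_U^{\rm LH}/{\frak G}_U$, and $X'=X-(X-X')$ again lies in ${\frak F}_U^{\rm LH}$, so the whole question collapses to whether this residual exact-form gauge already belongs to ${\frak G}_U$.

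The hard part will be precisely this boundary analysis. The form $df$ produced in Proposition~\ref{prop:X-exact_2} has vanishing tangential trace, $i^*_{\partial U}df=0$, but normal derivative $\partial f/\partial x^n\vert_{\partial U}=X^n({\sf j}\eta)$, so a priori ${\sf j}X'\vert_{\partial U}\neq 0$ and one cannot yet conclude $X'\in{\frak G}_U$. My plan is to exploit that $X'\in{\frak F}_U^{\rm LH}$ carries the defining null boundary condition $\sigma^{X'}\vert_{\partial U}={\sf d_h}\lambda\vert_{\partial U}$, and to run this through the identity $\sigma^{X'}=-{\sf d_v}\rho^{X'}$ together with the local relation $\varpi^{ij}\,{X'}^{j}=\frac{\sf d}{{\sf d}x^j}\rho^{ij}$ of Lemma~\ref{lma:X=drho}, exactly as in the ideal computation for ${\frak G}_U$. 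I expect this to force the normal component $X^n({\sf j}\eta)\vert_{\partial U}$ to vanish, so that after the boundary refinement of $f$ already carried out in Proposition~\ref{prop:X-exact_2} one gets ${\sf j}X'\vert_{\partial U}=0$, hence $X'\in{\frak G}_U$ and $X\in{\frak G}_U$. This trivializes the kernel and finishes the argument; the genuine obstacle is thus not the categorical descent but verifying that the locally Hamiltonian null boundary condition leaves no exact-form gauge with nonzero normal jet inside ${\frak F}_U^{\rm LH}$.
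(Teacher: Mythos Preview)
Your approach coincides with the paper's: both define the map from the inclusion ${\frak F}^{\rm LH}_U\hookrightarrow\hat{\frak F}^{\rm LH}_U$, identify its kernel with $({\frak F}^{\rm LH}_U\cap\hat{\frak G}_U)/{\frak G}_U$, and feed Proposition~\ref{prop:X-exact_2} into that kernel. The paper packages the reduction through the First and Second Isomorphism Theorems and then closes by declaring that the inclusion ${\frak G}_U\subseteq{\frak F}^{\rm LH}_U\cap\hat{\frak G}_U$ is a section of the ``projection'' $X\mapsto X-X'$; you instead argue directly and correctly isolate the residual question---whether the exact-form gauge $X'=X_{df}$ produced by Proposition~\ref{prop:X-exact_2} already satisfies ${\sf j}X'\vert_{\partial U}=0$. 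The paper does not spell this step out beyond the clause ``refine the choice of $f$'' at the end of the proof of Proposition~\ref{prop:X-exact_2}, so your explicit plan to extract it from the locally Hamiltonian boundary condition via Lemma~\ref{lma:X=drho} is a reasonable sharpening of the same argument rather than a genuinely different route.
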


\begin{proof}
By the Second Isomorphism Theorem for Lie algebras
\[
	\hat{\frak F}^{\rm LH}_U/\hat{\frak G}_U
	\simeq
	\left(\hat{\frak F}^{\rm LH}_U/{\frak G}_U\right)
	/
	\left(\hat{\frak G}_U/{\frak G}_U\right).
\]

Notice that 
\[
	{\frak G}_U\subseteq
	 {\frak F}_U^{\rm LH}\cap \hat{\frak G}_U\subseteq 
	\ker \Psi 
\]where $\Psi$ is the Lie algebra morphism defined as the composition in the diagram below.
\[\xymatrix{
	&
	{\frak F}^{\rm LH}_U/{\frak G}_U
	&
	\hat{\frak F}^{\rm LH}_U/\hat{\frak G}_U
		\ar@{<->}[r]
	&
	\left(\hat{\frak F}^{\rm LH}_U/{\frak G}_U\right)
	/
	\left(\hat{\frak G}_U/{\frak G}_U\right)
	\\
	&
	{\frak F}^{\rm LH}_U
			\ar[ru]_\Psi
			\ar@{->>}[u]
	\ar@{^{(}->}[r]
	&
	\hat{\frak F}^{\rm LH}_U
	\ar@{->>}[r]
	&
	\hat{\frak F}^{\rm LH}_U/{\frak G}_U
	\ar@{->>}[u]
	\\
	\ker \Psi
	\ar@{^{(}->}[ur]
	&
	{\frak G}_U
		\ar@{^{(}->}[l]
			\ar@{^{(}->}[u]
	&&
}\]

By the first isomorphism theorem, there exists an induced monomorphism $\tilde{\Psi}$ and a \mbox{commutative diagram}
\[\xymatrix{
	{\frak F}^{\rm LH}_U /	{\frak G}_U	
			\ar@{^{(}->}[d]
	&
	{\frak F}^{\rm LH}_U /	\ker \Psi	
			\ar@{^{(}-->}[d]^{\tilde{\Psi}}
			\ar@{->>}[l]
	\\
	\hat{\frak F}^{\rm LH}_U /	{\frak G}_U
	&
	\hat{\frak F}^{\rm LH}_U /	\hat{\frak G}_U.
		\ar@{->>}[l]
}\]

There is an inclusion ${\frak F}^{\rm LH}_U\cap \hat{\frak G}_U\subseteq {\frak F}_U^{\rm LH}$. Hence $\ker \Psi = {\frak F}^{\rm LH}_U\cap \hat{\frak G}_U$. By~Proposition \ref{prop:X-exact_2}, the~inclusion ${\frak G}_U\subseteq {\frak F}^{\rm LH}_U\cap\hat{\frak G}_U,$ is a section of the projection 
$
	\xymatrix{
	{\frak F}^{\rm LH}_U\cap \hat{\frak G}_U
		\ar@{-->>}[r]
	&
	{\frak G}_U,
}$ given by $X\mapsto X-X'$.

Therefore, we have the required inclusion
\[\xymatrix{
	{\frak F}^{\rm LH}_U /	{\frak G}_U	
			\ar@{^{(}->}[d]
			\ar@{^{(}-->}[r]
					\ar@{^{(}-->}[dr]		
	&
	{\frak F}^{\rm LH}_U /	\ker \Psi	
			\ar@{^{(}->}[d]^{\tilde{\Psi}}
	\\
	\hat{\frak F}^{\rm LH}_U /	{\frak G}_U
	&
	\hat{\frak F}^{\rm LH}_U /	\hat{\frak G}_U.
		\ar@{->>}[l]
}\]\end{proof}

Recall that $ H^1_{\rm dR}(U,\partial U)\simeq {\frak H}^1_D(U)$ in the exact sequence,
\begin{equation}\label{eqn:exactseq}
\xymatrix{
	H^1_{\rm dR}(U,\partial U)
		\ar[r]
		&
	H^1_{\rm dR}(U)
		\ar[r]^{i^*_{\partial U}}
		&
	H^1_{\rm dR}(\partial U).	
}\end{equation}

Hence, the~demand in the proof of Proposition \ref{prop:X-exact_2} for $i^*_{\partial U} \eta_X$ to be null is equivalent to demanding $\eta_X$ to lie into ${\sf \Omega}^1_{D} (U)$. Thus, $\eta_X$ defines a relative cohomology class $[\eta_X]\in H^1_{\rm dR}(U,\partial U)$. Further considerations actually explain that $[\eta_X]=0$.

\begin{proposition}\label{prop:rel-coh}

If $H^1_{\rm dR}(U,\partial U)=0$, then ${\frak F}_U^{\rm LH}/ {\frak G}_U\simeq \hat{\frak F}_U^{\rm LH}/ \hat{\frak G}_U$.

\end{proposition}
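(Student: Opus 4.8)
The plan is to upgrade the monomorphism $\hat\cdot$ of Theorem~\ref{thm:inclusion} to an isomorphism by controlling its kernel and cokernel directly. Writing $\phi:{\frak F}^{\rm LH}_U/{\frak G}_U\to \hat{\frak F}^{\rm LH}_U/\hat{\frak G}_U$ for the map in question, one has $\ker\phi=({\frak F}^{\rm LH}_U\cap\hat{\frak G}_U)/{\frak G}_U$ and $\operatorname{im}\phi=({\frak F}^{\rm LH}_U+\hat{\frak G}_U)/\hat{\frak G}_U$, so it suffices to establish the two identities ${\frak F}^{\rm LH}_U\cap\hat{\frak G}_U={\frak G}_U$ (injectivity) and ${\frak F}^{\rm LH}_U+\hat{\frak G}_U=\hat{\frak F}^{\rm LH}_U$ (surjectivity). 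The cohomological hypothesis enters only through the first identity, which is precisely the vanishing of the relative class flagged in the discussion preceding the statement; the second identity is a structural feature of the linear theory.

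For injectivity I take $X\in{\frak F}^{\rm LH}_U\cap\hat{\frak G}_U$ and argue $X\in{\frak G}_U$. By Lemma~\ref{lma:eta_X-closed} the induced form $\eta_X$ of (\ref{eqn:eta_V}) is closed, and the boundary computation carried out in Proposition~\ref{prop:X-exact_2} gives $i^*_{\partial U}\eta_X=0$, so $\eta_X\in{\sf \Omega}^1_D(U)$ and determines a class $[\eta_X]\in H^1_{\rm dR}(U,\partial U)$ under the identification $H^1_{\rm dR}(U,\partial U)\simeq\mathfrak{H}^1_D(U)$ recalled before (\ref{eqn:exactseq}). The hypothesis $H^1_{\rm dR}(U,\partial U)=0$ then forces $[\eta_X]=0$, whence $\eta_X=df$ for some $f$ with $f|_{\partial U}=0$. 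Running the Poisson boundary-value argument of Proposition~\ref{prop:X-exact}, and refining the choice of $f$ so that the full jet condition ${\sf j}X|_{\partial U}=0$ holds exactly as in Proposition~\ref{prop:X-exact_2}, places $\eta_X\in{\cal G}_U$ and hence $X\in{\frak G}_U$. This yields $\ker\phi=0$.

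For surjectivity I invoke the linear theory: given $V\in\hat{\frak F}^{\rm LH}_U$, its pullback $\gf=\eta_V$ solves $d\star d\gf=0$, and the Lemma guaranteeing $V_{\eta_V}\in{\frak F}^{\rm LH}_U$ shows that the boundary-conditioned representative $V_\gf$ already lies in ${\frak F}^{\rm LH}_U$. Since $\eta_{V-V_\gf}=\eta_V-\gf=0$, the difference pulls back to zero on every solution and is therefore a degenerate direction of $\Omega_L$ over ${\cal E}_L$, satisfying (\ref{eqn:gauge-d_h}) with vanishing $\rho$; thus $V-V_\gf\in\hat{\frak G}_U$ and $[V]=\phi([V_\gf])$, so $\phi$ is onto. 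Combining the two identities makes $\phi$ an isomorphism. I expect the injectivity step to be the main obstacle: one must argue that the relative class $[\eta_X]$ is the \emph{complete} obstruction to $X$ being a null-boundary gauge variation, which rests on converting the vanishing of $H^1_{\rm dR}(U,\partial U)$ into genuine relative exactness of the Dirichlet form $\eta_X$ via the Hodge--Morrey--Friedrichs splitting (\ref{eqn:HMF}) and the exact sequence (\ref{eqn:exactseq}), and then on the delicate boundary-jet bookkeeping needed to promote exactness of $\eta_X$ to membership in ${\frak G}_U$ rather than merely in $\hat{\frak G}_U$.
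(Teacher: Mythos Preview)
Your approach differs from the paper's: rather than computing the kernel and cokernel of the natural map directly, the paper builds a pair of \emph{boundary-data} maps ${\rm d}r_{\partial U}[\eta]:{\frak F}^{\rm LH}_U/{\frak G}_U\to{\cal L}_{\partial U}$ and $\widehat{\rm d}r_{\partial U}[\eta]:\hat{\frak F}^{\rm LH}_U/\hat{\frak G}_U\to{\cal L}_{\partial U}$ via a Neumann Poisson problem and the Hodge--Morrey--Friedrichs splitting, and uses the hypothesis $H^1_{\rm dR}(U,\partial U)=0$ to make both injective with the same image; commutativity of the resulting triangle then forces the inclusion of Theorem~\ref{thm:inclusion} to be onto.

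Your surjectivity step contains a genuine gap, and you have the role of the hypothesis inverted. Injectivity is already the content of Theorem~\ref{thm:inclusion} and needs no cohomological assumption (the paper even remarks, just before the proposition, that the relative class $[\eta_X]$ vanishes by ``further considerations'' regardless). The hypothesis is what drives \emph{surjectivity} in the paper's argument. Your surjectivity claim, by contrast, uses no hypothesis at all --- which, combined with Theorem~\ref{thm:inclusion}, would make the isomorphism unconditional. That is a red flag, and indeed the step fails: you fix a single $\eta\in{\cal A}_U$, set $\gf:=\eta_V$, and assert that $\eta_{V-V_\gf}=0$ ``on every solution''. But $\eta_V={\sf j}\eta^*V$ depends on the chosen $\eta$; for another solution $\eta'$ one has $\eta'_{V-V_\gf}=\eta'_V-\gf$, which has no reason to vanish when $V$ has genuine fibre dependence. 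Hence $V-V_\gf$ need not restrict to zero along ${\cal E}_L$, and the presymplectic-degeneracy condition~(\ref{eqn:gauge-d_h}) for $V-V_\gf$ is unproven. Without this, you do not obtain $V-V_\gf\in\hat{\frak G}_U$, and the image computation collapses. To repair the argument you would need exactly the kind of boundary control the paper supplies through ${\rm d}r_{\partial U}[\eta]$, and that is precisely where $H^1_{\rm dR}(U,\partial U)=0$ enters.
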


\begin{proof}
For every $V\in{\frak F}^{\rm LH}_U$ we have that 
$
	{\sf d_v}\iota_{{\sf j} V}\Omega_L \vert_{\partial U}
	=
	{\sf d_h}\sigma^V,$ with $\sigma^V\vert_{\partial U}={\sf d_h}\lambda\vert_{\partial U}$.
Take $\eta\in {\cal A}_U$ any YM solution. For~$\eta_V\vert_{\partial U}={\sf j}\eta^*V\vert_{\partial U}$, we solve the Poisson BVP
\[
	\left\{
		\begin{array}{ll}
		\Delta\psi = d^\star \eta_V,
		&
		\text{\ in }U,
		\\
		\partial \psi/\partial x^n\vert_{\partial U}=- V^n(x),
		&
		\text{ in } \partial U=\{x^n=0\},
		\end{array}
	\right.
\] 
then $\eta_V$ may be gauge translated by an exact form $d \psi$ so that $\tilde{\eta}_V=\eta_V+d\psi$ has no normal components along $\partial U$ and satisfies $d\star \tilde{\eta}_V=0$ as well as the linearized YM equation, $d^\star d\tilde{\eta}_V=0$. 

Notice that the induced linearized solution $X_{d\psi}\in{\frak F}_U$ in fact belongs to $\hat{\frak G}_U\cap{\frak F}^{\rm LH}_U$.

By (\ref{eqn:HMF}) $\tilde{\eta}_V\in {\frak H}_N^1(U)\oplus ({\frak H}^1(U)\cap d{\sf \Omega}^0(U))\oplus d^\star{\sf \Omega}^2_N(U)$. For~the coclosed projection $\eta_V'\in {\frak H}_N^1(U)\oplus d^\star{\sf \Omega}^2_N(U)$ of $\tilde{\eta}_V$, we have the orthogonal decomposition, ${\eta}'_V=\eta_V''\oplus d^\star \chi$.

Consider the {\em boundary conditions linear map}, $
	{\rm d}r_{\partial U}[\eta]:{\frak F}^{\rm LH}_U/{\frak G}_U\rightarrow 
	{\cal L}_{\partial U}$, such that 
\[
	({\rm d}r_{\partial U})[\eta](V)
	=
	(\eta_V')^D\oplus(\eta_V')^N
	:=
	[\iota_{\partial U}^* {\eta}'_V]\oplus \star_{\partial U}i^*_{\partial U}(\star d{\eta}'_V),\]
where the codomain is the linear space of Dirichlet-Neumann boundary conditions modulo gauge, 
\begin{equation}\label{eqn:L_partialU}
	{\cal L}_{\partial U}:=\left(\ker d^{\star_{\partial U}}/(d{\sf \Omega}^0(U)\cap \ker d^{\star_{\partial U}})\right)\oplus\ker d^{\star_{\partial U}}.
\end{equation}
See~\cite{DM-O} for further considerations of this space. Recall the isomorphisms 
\[
	{\frak H}_N^1(U)\simeq H^1_{\rm dR}(U),
	\quad
	{\frak H}_D^1(U)\simeq H^1_{\rm dR}(U,\partial U).
\]

Since ${\frak H}_D^1(U)\simeq H^1_{\rm dR}(U,\partial U)=0,$ then by (\ref{eqn:exactseq}) we have  ${\frak H}_N^1(U)\subseteq H^1_{\rm dR}(\partial U) $. Hence, the~closed projection of $i_{\partial U}^*(\eta_V')\in {\sf \Omega}^1(\partial U)$ would have cohomology class $i^*_{\partial U}[\eta_V'']$ in  $H^1_{\rm dR}(\partial U)$ induced by $[\eta_V'']\in H^1_{\rm dR}(U) $. Therefore, $({\rm d}r_{\partial U})[\eta]$ is~injective.

If we proceed as in the previous argument with $[V]\in\hat{\frak F}^{\rm LH}/\hat{G}_U,$ we can define an injective map $\widehat{\rm d}r_{\partial U}$ such that the following diagram commutes
\[\xymatrix{
	{\frak F}^{\rm LH}_U/{\frak G}_U
		\ar@{^{(}->}[r]^{	{\rm d}r_{\partial U}[\eta]}
		\ar@{^{(}->}[d]^{\check{\eta}}
	&  
	{\cal L}_{\partial U}
	\\
	\hat{\frak F}^{\rm LH}_U/\hat{\frak G}_U.
		\ar[ur]_{\widehat{\rm d}r_{\partial U}[\eta]}
		&
}\]

Notice that $\widehat{\rm d}r_{\partial U}[\eta]$ and ${\rm d}r_{\partial U}[\eta]$ have the same image.
\end{proof}

Remark that we have the commutative diagram
\begin{equation}\label{eqn:diagram3}
\xymatrix{
	{\cal A}_U/{\cal G}_U
&
	{\cal L}_U
		\ar[r]^{r_{U,\partial U}}
		\ar[l]^{{\frak e}_\eta}
&
	{\cal L}_{\tilde{U}}.
\\
	&
	{\frak F}^{\rm LH}_U/{\frak G}_U
		\ar[ru]_{{\rm d}r_{\partial U}[\eta]}
		\ar[u]^{\mathfrak{p}_\eta}
		\ar@{-->}[ul]^{\mathfrak{exp}_\eta}
	&
}\end{equation}
where
\[
	{\cal L}_{\tilde{U}}:=r_{U,\partial U}({\cal L}_U)\subseteq {\cal L}_{\partial U}
\]
with $r_{U,\partial U}$ the map of {\em boundary conditions of solutions modulo gauge}, see~\cite{DMO} for further properties of this map. Here we use axial gauge fixing in a tubular neighborhood of $\partial U$ as well as the linear map $r_{U,\partial U}(\gf)=\gf^D\oplus\gf^N$ is defined in (\ref{eqn:L_partialU}). The~linear map ${\frak p}$ is induced by ${\frak p}_\eta(V)=\eta+\gf$ where $\gf\in{\sf \Omega}^1(U)$ is a coclosed linearized solution, $d\star d\gf=0$ such that $\eta_V=\gf$, see notation (\ref{eqn:eta_V}). 

By composing the projection ${\frak p}_\eta$ with the map ${\frak e}_\eta$ we get the map $\mathfrak{exp}_\eta:	{\frak F}^{\rm LH}_U/{\frak G}_U\rightarrow {\cal A}_U/{\cal G}_U$. Diagram (\ref{eqn:diagram3}) suggests that {\em Hamiltonian first variation modulo gauge}, $	{\frak F}^{\rm LH}_U/{\frak G}_U$ is a Lie algebra isomorphic as linear space to the tangent space of the moduli space ${\cal A}_U/{\cal G}_U$ at $\eta$.

The following assertion related to Proposition \ref{prop:rel-coh} explains how the relative cohomology codifies the description of ${\cal A}_U/{\cal G}_U$ with respect to the boundary~conditions, see also~\cite{DM-O}.

\begin{proposition}\label{pro:fibration}

$H^1_{\rm dR}(U,\partial U)=0$ if and only if $r_{\partial U}:{\cal L}_U\rightarrow {\cal L}_{\partial U}$ is injective and $r_{\partial U} :{\cal L}_U\rightarrow {\cal L}_{\tilde{U}}$ is a linear~isomorphism.

\end{proposition}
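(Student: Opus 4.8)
The plan is to reduce the two stated conditions to the single requirement that $r_{\partial U}$ be injective, and then to prove the sharp identification $\ker r_{\partial U}\simeq H^1_{\rm dR}(U,\partial U)$. Since ${\cal L}_{\tilde U}:=r_{U,\partial U}({\cal L}_U)$ is by definition the image of $r_{\partial U}$, the map $r_{\partial U}:{\cal L}_U\rightarrow{\cal L}_{\tilde U}$ is surjective, hence a linear isomorphism exactly when it is injective; as ${\cal L}_{\tilde U}\subseteq{\cal L}_{\partial U}$ carries the same map, this is the same as injectivity of $r_{\partial U}:{\cal L}_U\rightarrow{\cal L}_{\partial U}$. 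Thus it suffices to show $H^1_{\rm dR}(U,\partial U)=0$ if and only if $\ker r_{\partial U}=0$. The forward half is essentially contained in Proposition~\ref{prop:rel-coh} read through the commuting triangle~(\ref{eqn:diagram3}): there ${\rm d}r_{\partial U}[\eta]=r_{U,\partial U}\circ\mathfrak{p}_\eta$ with $\mathfrak{p}_\eta$ a linear isomorphism onto ${\cal L}_U$, so injectivity of ${\rm d}r_{\partial U}[\eta]$ transfers to $r_{\partial U}$.

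For the substantive converse I would compute $\ker r_{\partial U}$ directly. Take $\gf\in{\cal L}_U\subseteq\mathfrak{H}^1_N(U)\oplus d^\star{\sf\Omega}^2_N(U)$ with $r_{U,\partial U}(\gf)=\gf^D\oplus\gf^N=0$, and split $\gf=h_N+d^\star\beta$ accordingly. The Neumann condition $\gf^N=0$ reads $i^*_{\partial U}(\star d\gf)=0$, so Green's formula $\|d\gf\|_2^2=\langle\gf,d^\star d\gf\rangle_U+\int_{\partial U}i^*_{\partial U}\gf\wedge i^*_{\partial U}(\star d\gf)$ gives $d\gf=0$, the bulk term vanishing because $\gf$ is a linearized solution. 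Then $d^\star\beta=\gf-h_N$ is coclosed and now also closed, so $dd^\star\beta=0$, and a second integration by parts $\|d^\star\beta\|_2^2=\langle\beta,dd^\star\beta\rangle_U-\int_{\partial U}i^*_{\partial U}(\star\beta)\wedge i^*_{\partial U}(d^\star\beta)$ forces $d^\star\beta=0$, since the normal part $i^*_{\partial U}(\star\beta)$ of the Neumann potential vanishes. Hence $\gf=h_N\in\mathfrak{H}^1_N(U)$.

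Now the Dirichlet condition $\gf^D=0$ says that $i^*_{\partial U}h_N$ is exact on $\partial U$, that is $[h_N]$ lies in $\ker(i^*_{\partial U}:H^1_{\rm dR}(U)\rightarrow H^1_{\rm dR}(\partial U))$ under $\mathfrak{H}^1_N(U)\simeq H^1_{\rm dR}(U)$. By the exact sequence~(\ref{eqn:exactseq}) this kernel is the image of $H^1_{\rm dR}(U,\partial U)\rightarrow H^1_{\rm dR}(U)$, so $\ker r_{\partial U}\simeq\mathrm{im}(H^1_{\rm dR}(U,\partial U)\rightarrow H^1_{\rm dR}(U))$. The final point is that this map is injective: a harmonic Dirichlet field in its kernel would be exact, $\gf_0=df$ with $d^\star df=0$ and $i^*_{\partial U}(df)=0$, so $f$ is harmonic with locally constant boundary trace and, by the maximum principle on $U$, constant, whence $\gf_0=0$. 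Therefore $\ker r_{\partial U}\simeq\mathfrak{H}^1_D(U)\simeq H^1_{\rm dR}(U,\partial U)$, which vanishes precisely when $r_{\partial U}$ is injective, giving the equivalence.

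The step I expect to be the main obstacle is the clean collapse of the coexact component together with the injectivity of $H^1_{\rm dR}(U,\partial U)\rightarrow H^1_{\rm dR}(U)$. Concretely one must arrange the two Green's-formula boundary terms so that exactly the abstract data $\gf^D$ and $\gf^N$ appear and are thereby controlled, verify the identity $\mathfrak{H}^1(U)\cap d^\star{\sf\Omega}^2_N(U)=0$ used to kill $d^\star\beta$, and justify the maximum-principle argument for the given region $U$, in particular that the boundary trace of the exact potential is genuinely a single constant rather than merely locally constant; this last verification is where any hypothesis on the connectedness of $\partial U$ must enter, and it is what upgrades the tautological implication from $H^1_{\rm dR}(U,\partial U)=0$ into the full equivalence.
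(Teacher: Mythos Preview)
The paper does not actually supply a proof of this proposition; it is stated immediately after diagram~(\ref{eqn:diagram3}) with only a pointer to~\cite{DM-O} and to Proposition~\ref{prop:rel-coh}, and the next section begins. So there is nothing in the paper to compare your argument against directly.

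On its merits your argument is sound and efficient. The reduction of the two stated conditions to the single condition ``$r_{\partial U}$ injective'' is immediate from the definition ${\cal L}_{\tilde U}=r_{U,\partial U}({\cal L}_U)$. Your kernel computation is correct: the first Green identity, fed with the vanishing Neumann datum $i_{\partial U}^*(\star d\gf)=0$ and the linearized equation, forces $d\gf=0$; the second, using $\beta\in{\sf\Omega}^2_N(U)$ so that $i_{\partial U}^*(\star\beta)=0$, kills the coexact part. What remains is exactly $\gf=h_N\in\mathfrak{H}^1_N(U)$ with $i^*_{\partial U}[h_N]=0$ in $H^1_{\rm dR}(\partial U)$, so
\[
\ker r_{\partial U}\;\simeq\;\ker\bigl(i^*_{\partial U}:H^1_{\rm dR}(U)\to H^1_{\rm dR}(\partial U)\bigr)\;=\;\mathrm{im}\bigl(H^1_{\rm dR}(U,\partial U)\to H^1_{\rm dR}(U)\bigr).
\]
This is precisely the mechanism used (in the other direction) in the proof of Proposition~\ref{prop:rel-coh}.

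The caveat you raise is genuine and not merely cosmetic. What you have actually established is $\ker r_{\partial U}\simeq\mathrm{im}\bigl(H^1_{\rm dR}(U,\partial U)\to H^1_{\rm dR}(U)\bigr)$, not $\ker r_{\partial U}\simeq H^1_{\rm dR}(U,\partial U)$. The upgrade requires injectivity of $H^1_{\rm dR}(U,\partial U)\to H^1_{\rm dR}(U)$, which by the long exact sequence is equivalent to surjectivity of $H^0_{\rm dR}(U)\to H^0_{\rm dR}(\partial U)$, i.e.\ each component of $U$ meets at most one component of $\partial U$. Your maximum-principle argument proves exactly this under that hypothesis (a harmonic $f$ with \emph{globally} constant boundary trace is constant), and fails otherwise. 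Without it the biconditional breaks: for $U=S^3\times[0,1]$ one has $H^1_{\rm dR}(U,\partial U)\simeq\mathbb{R}$ while $H^1_{\rm dR}(U)=0$, so $r_{\partial U}$ is injective yet $H^1_{\rm dR}(U,\partial U)\neq 0$. Thus the proposition as printed tacitly assumes connected $\partial U$ (or the equivalent $H^0$ condition), and under that assumption your proof is complete.
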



\section{Poisson-Lie Algebra of Hamiltonian~Observables}\label{sec:lie}



\begin{definition}[Hamiltonian observable currents]
We say that an observable current $F\in {\sf \Omega}^{n-1,0}({\sf J}Y\vert_U)$ is a {\em Hamiltonian observable current} if there exist $V\in {\frak F}_U$ and a residual form $\sigma^F$ such that the following relation holds when restricted to ${\cal E}_L$ and evaluated on $W\in{\frak F}_U$,
\begin{equation}\label{eqn:HOC}
	{\sf d_v} F\vert_{{\cal E}_L}=-\iota_{{\sf j} V}\Omega_L+{\sf d_h}\sigma^F
\end{equation}

We denote the space of Hamiltonian observable currents over $U$ as $\widehat{\rm HOC}_U$. The evolutionary vector field ${ V}$, is actually a {locally Hamiltonian first variation}, i.e.,~$V\in \hat{\frak F}^{\rm LH}_U$. If~in addition in (\ref{eqn:HOC}) we have the \mbox{boundary condition}
\begin{equation}\label{eqn:HOC-bdry}
	{\sf d_v}\sigma^F\vert_{\partial U}={\sf d_h}\lambda^F
\end{equation}
then we call $F$ a Hamiltonian observable current {\em with boundary condition}. Here $V\in{\frak F}^{\rm LH}_U$. We denote the space of these kind of observable currents as ${\sf HOC}_U$.
\end{definition}

\begin{definition}[Helicity current]
Suppose that $\gf\in{\sf \Omega}^1(U)$ is a solution of the linearized YM equation, $d \star d \gf=0$. Define the $\gf$-{\em helicity current} as
\[
	F^{\gf}=\iota_{{\sf j}V_\gf} \iota_{{\sf j}R}\Omega_L\in{\sf \Omega}^{n-1,0}({\sf J}Y\vert_U),
\]
where $R\in{\frak F}_U$ was defined in (\ref{eqn:R}). More explicitly
\[
	F^\gf=\sum_{i=1}^n\left[
		\sum_{j=1}^n\varpi^{ij}\left(
				\gf^j(x)\cdot (A^j_i-A^i_j)
			-	
				A^j\cdot
				\left(\frac{\partial \gf^j(x)}{{\partial} x^i}
				-\frac{{\partial} \gf^i(x)}{{\partial} x^j}
				\right)
		\right)
	\right]\nu_i.
\]
\end{definition}

Form the very definition and the multysimplectic formula it can be seen that ${\sf d_h}F^\gf\vert_{{\cal E}_L}=0$.

Remark that we could have defined observable currents, $F^\gf$, for~{\em any} divergence-free ${\gf}$ in $U$, $d\star \gf=0$, with~evolutionary Hamiltonian vector field, $V_\gf\in\mathfrak{Ev}({\sf J}Y\vert_U)$, rather than in restricting ourselves to Hamiltonians first variations in ${\frak F}_U$, just as the observables considered in~\cite{AB}. Nevertheless, if~we had adopted this definition, then we would have to restrict the domain of $F^\gf$ and evaluate only ob solutions $\eta'=\eta_0+\gf'\in{\cal A}_U$ with Lorentz gauge fixing (\ref{eqn:Lorentz-gauge}), $\gf'\in{\cal L}_U$ in order to have local invariance ${\sf d_h}F^\gf\vert_{{\cal E}_L}=0$.

From the following assertion it follows that helicity currents are Hamiltonian observable currents restricted to $U$, that is $F^\gf\in{\sf HOC}_U$. 

\begin{lemma}\label{lma:H-phi}
The $\gf $-helicity current, 
$
	F^\gf \in {\sf \Omega}^{n-1,0}({\sf J}Y\vert_U,$
defines a locally Hamiltonian observable current with Hamiltonian $V_\gf\in {\frak F}_U^{\rm LH}$ whenever $d\star d\gf=0$.
\end{lemma}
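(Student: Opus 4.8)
The goal is to realize $F^\gf$ as an element of ${\sf HOC}_U$ whose associated first variation is $V_\gf$, so the plan splits into three tasks: that $V_\gf$ is a genuine first variation, $V_\gf\in{\frak F}_U$; that the defining identity (\ref{eqn:HOC}) holds with this $V_\gf$ and an explicit residual form $\sigma^{F}$; and that this upgrades to the boundary-condition versions, i.e.\ $V_\gf\in{\frak F}^{\rm LH}_U$ together with (\ref{eqn:HOC-bdry}). The first task is immediate: since each $\gf^i$ is a function of the base alone, the prolongation ${\sf j}V_\gf$ has base-valued components, and evaluating the linearized equation (\ref{eqn:linearized-EL}) on $V_\gf$ reduces in coordinates to $\sum_j\frac{\sf d}{{\sf d}x^j}\big(\varpi^{ij}(\partial_i\gf^j-\partial_j\gf^i)\big)=0$, that is to $d\star d\gf=0$, a condition on base data that therefore holds off-shell; hence $V_\gf\in{\frak F}_U$. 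With $V_\gf,R\in{\frak F}_U$ the Multysimplectic formula then supplies the observable-current property ${\sf d_h}F^\gf\vert_{{\cal E}_L}=0$, as already noted before the statement.

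The heart of the argument is to compute ${\sf d_v}F^\gf={\sf d_v}\,\iota_{{\sf j}V_\gf}\iota_{{\sf j}R}\Omega_L$ by the vertical Cartan calculus, writing $\mathscr{L}_{{\sf j}Z}:=\iota_{{\sf j}Z}{\sf d_v}+{\sf d_v}\iota_{{\sf j}Z}$ for the vertical Lie derivative. Three structural facts about $\Omega_L$ drive the calculation. First, $\Omega_L$ is ${\sf d_v}$-closed, being assembled from the ${\sf d_v}$-closed contact forms $\vartheta^j,\vartheta^i_j$ with base-valued coefficients $\varpi^{ij}$. Second, $\Omega_L$ is homogeneous of degree two in those contact forms, so the radial prolongation acts on it by the Euler relation $\mathscr{L}_{{\sf j}R}\Omega_L=2\Omega_L$; combined with ${\sf d_v}\Omega_L=0$ this gives ${\sf d_v}\,\iota_{{\sf j}R}\Omega_L=2\Omega_L$. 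Third, because $V_\gf$ is a translation by a base-valued $1$-form we have $\mathscr{L}_{{\sf j}V_\gf}\Omega_L=0$, while the prolonged bracket satisfies $[{\sf j}V_\gf,{\sf j}R]={\sf j}[V_\gf,R]={\sf j}V_\gf$. Feeding these into Cartan's identity for the outer contraction together with the graded commutator $\mathscr{L}_{{\sf j}V_\gf}\iota_{{\sf j}R}=\iota_{{\sf j}R}\mathscr{L}_{{\sf j}V_\gf}+\iota_{[{\sf j}V_\gf,{\sf j}R]}$ yields
\[
	{\sf d_v}F^\gf
	=\mathscr{L}_{{\sf j}V_\gf}\iota_{{\sf j}R}\Omega_L-\iota_{{\sf j}V_\gf}\,{\sf d_v}\iota_{{\sf j}R}\Omega_L
	=\iota_{{\sf j}V_\gf}\Omega_L-2\,\iota_{{\sf j}V_\gf}\Omega_L
	=-\iota_{{\sf j}V_\gf}\Omega_L .
\]
Thus (\ref{eqn:HOC}) holds identically with residual form $\sigma^{F}=0$ and Hamiltonian $V_\gf$; being an identity, it restricts to ${\cal E}_L$ and pairs with every $W\in{\frak F}_U$. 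I would cross-check the net coefficient by expanding $\iota_{{\sf j}V_\gf}\iota_{{\sf j}R}\Omega_L$ in coordinates against the explicit expression for $F^\gf$.

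For the boundary-condition upgrade, observe that $\iota_{{\sf j}V_\gf}\Omega_L=-{\sf d_v}F^\gf$ is ${\sf d_v}$-exact, hence ${\sf d_v}$-closed, so ${\sf d_v}\iota_{{\sf j}V_\gf}\Omega_L=0$. Consequently the locally Hamiltonian condition (\ref{eqn:LH}) is met with $\sigma^{V_\gf}=0$, placing $V_\gf\in\hat{\frak F}^{\rm LH}_U$; the boundary requirement $\sigma^{V_\gf}\vert_{\partial U}={\sf d_h}\lambda\vert_{\partial U}$ and the current boundary condition (\ref{eqn:HOC-bdry}), ${\sf d_v}\sigma^{F}\vert_{\partial U}={\sf d_h}\lambda^F$, then hold trivially with $\lambda=\lambda^F=0$. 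This gives $V_\gf\in{\frak F}^{\rm LH}_U$ and $F^\gf\in{\sf HOC}_U$, and in particular recovers the earlier assertion that $V_\gf\in{\frak F}^{\rm LH}_U$ whenever $d\star d\gf=0$. The main obstacle is neither conceptual nor boundary-theoretic but bookkeeping: one must track the vertical homogeneity degree of $\Omega_L$ and the prolonged bracket $[V_\gf,R]=V_\gf$ so that the two contributions \emph{add} to $-\iota_{{\sf j}V_\gf}\Omega_L$ rather than cancelling. An off-by-one in either the Euler factor or the commutator would spuriously produce ${\sf d_v}F^\gf=0$, so this is the step I would verify most carefully, preferably by the independent coordinate computation above.
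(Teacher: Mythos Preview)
Your argument is correct and arrives at exactly the identity the paper uses, namely ${\sf d_v}F^\gf+\iota_{{\sf j}V_\gf}\Omega_L=0$ off-shell with $\sigma^F=0$; the paper's proof simply asserts this relation without derivation, whereas you supply one via the vertical Cartan calculus, the Euler homogeneity $\mathscr{L}_{{\sf j}R}\Omega_L=2\Omega_L$, the vanishing $\mathscr{L}_{{\sf j}V_\gf}\Omega_L=0$, and the bracket $[V_\gf,R]=V_\gf$. Your explicit treatment of the boundary conditions ($\sigma^F=0$, hence (\ref{eqn:HOC-bdry}) and $V_\gf\in{\frak F}^{\rm LH}_U$) goes beyond what the paper spells out but is straightforward and correct.
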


\begin{proof}
Recall the notation in (\ref{eqn:V_phi}). Notice that the relation ${\sf d_v} F^\gf + \iota_{{\sf j}V_\gf}\Omega_L=0$ is valid off-shell. Therefore we have
\[
	{\sf d_v} F^\gf\vert_{{\cal E}_L} 
	=
	- \iota_{{\sf j}V_\gf}\Omega_L 
\]
in particular when evaluated on $W\in{\frak F}_U$.	
\end{proof}

\begin{lemma}\label{lma:sympl-product}
If $\gf,\gf'\in {\sf \Omega}^1(U)$ are solutions of $d\star d\gf'=0=d\star d\gf$, then the Lie derivative,
$
 	\mathscr{L}_{{\sf j} V_{{\gf'}}} F^\gf, 
$
lies in ${\sf HOC}_U$ with Hamiltonian $[V_\gf,V_{\gf'}]\in {\mathfrak{F}_U}$. Under~integration over $\Sigma$, it yields the {\em symplectic product observable}, associated to $\iota_{{\sf j}V_\gf}\iota_{{\sf j}V_{\gf'}}\Omega_L\in {\sf HOC}_U$,
\[
	f_\Sigma^{V_\gf V_{\gf'}}(\eta)
	:=
	\int_\Sigma {\sf j}\eta^*\iota_{{\sf j}V_\gf}\iota_{{\sf j}V_{\gf'}}\Omega_L,
	\quad
	\forall \eta \in{\cal A}_U,\phi\in {\cal L}_U .
\]
\end{lemma}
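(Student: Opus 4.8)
The plan is to reduce the statement to two facts already in hand: the off-shell identity ${\sf d_v}F^\gf=-\iota_{{\sf j}V_\gf}\Omega_L$ established in the proof of Lemma~\ref{lma:H-phi}, and the bracket identity (\ref{eqn:basic}). First I would evaluate the Lie derivative explicitly. Because $F^\gf$ is a horizontal $(n-1,0)$-form and the prolongation ${\sf j}V_{\gf'}$ of an evolutionary field is vertical, both $\iota_{{\sf j}V_{\gf'}}F^\gf$ and $\iota_{{\sf j}V_{\gf'}}{\sf d_h}F^\gf$ vanish. Cartan's formula $\mathscr{L}_{{\sf j}V_{\gf'}}=\iota_{{\sf j}V_{\gf'}}{\sf d}+{\sf d}\iota_{{\sf j}V_{\gf'}}$ with ${\sf d}={\sf d_h}+{\sf d_v}$ then collapses to
\[
	\mathscr{L}_{{\sf j}V_{\gf'}}F^\gf
	=\iota_{{\sf j}V_{\gf'}}{\sf d_v}F^\gf
	=-\iota_{{\sf j}V_{\gf'}}\iota_{{\sf j}V_\gf}\Omega_L
	=\iota_{{\sf j}V_\gf}\iota_{{\sf j}V_{\gf'}}\Omega_L,
\]
where the last step is the anticommutativity of interior products. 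This identifies the current with the claimed symplectic product form.

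Next I would record that this current is ${\sf d_h}$-closed on ${\cal E}_L$: this is precisely the Multysimplectic formula of Section~\ref{sec:bicomplex} applied to the pair $V_\gf,V_{\gf'}\in{\frak F}_U$. That closedness is what makes $f_\Sigma^{V_\gf V_{\gf'}}$ a genuine observable, independent of the choice of admissible hypersurface $\Sigma$ within its homology class. To place the current in $\widehat{\rm HOC}_U$ I would then apply ${\sf d_v}$ to $\iota_{{\sf j}V_\gf}\iota_{{\sf j}V_{\gf'}}\Omega_L$ and read off (\ref{eqn:basic}) with $X=V_\gf$, $X'=V_{\gf'}$: the term $\iota_{{\sf j}[V_\gf,V_{\gf'}]}\Omega_L$ isolates the Hamiltonian, while the two remaining terms $\iota_{{\sf j}V_\gf}{\sf d_v}\iota_{{\sf j}V_{\gf'}}\Omega_L$ and $\iota_{{\sf j}V_{\gf'}}{\sf d_v}\iota_{{\sf j}V_\gf}\Omega_L$, when restricted to ${\cal E}_L$ and contracted with any $W\in{\frak F}_U$, are ${\sf d_h}$-exact by the locally Hamiltonian condition (\ref{eqn:LH}) applied to $V_\gf$ and to $V_{\gf'}$ (both lie in ${\frak F}_U^{\rm LH}$ by Lemma~\ref{lma:H-phi}). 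Collecting these exact contributions into a single residual form $\sigma$ yields relation (\ref{eqn:HOC}); since ${\frak F}_U^{\rm LH}$ is a Lie subalgebra, the Hamiltonian $[V_\gf,V_{\gf'}]$ indeed lies in ${\frak F}_U^{\rm LH}\subseteq{\frak F}_U$, as required.

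Finally I would verify the boundary condition (\ref{eqn:HOC-bdry}) that upgrades membership from $\widehat{\rm HOC}_U$ to ${\sf HOC}_U$. The residual form $\sigma$ assembled above is built from $\sigma^{V_\gf}$ and $\sigma^{V_{\gf'}}$, and the null boundary condition $\sigma^V\vert_{\partial U}={\sf d_h}\lambda\vert_{\partial U}$ defining ${\frak F}_U^{\rm LH}$ forces ${\sf d_v}\sigma\vert_{\partial U}$ to be ${\sf d_h}$-exact, which is exactly (\ref{eqn:HOC-bdry}).

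The main obstacle is bookkeeping in the third and fourth steps rather than any conceptual difficulty: one must track the unspecified signs in (\ref{eqn:basic}) so that the Hamiltonian emerges as $[V_\gf,V_{\gf'}]$ with the orientation demanded by (\ref{eqn:HOC}), and one must check that the residual forms $\sigma^{V_\gf},\sigma^{V_{\gf'}}$ and their boundary terms $\lambda$ genuinely combine into a single $\sigma$ satisfying both (\ref{eqn:HOC}) and (\ref{eqn:HOC-bdry}) simultaneously. This is where the locally Hamiltonian structure and the null boundary conditions of ${\frak F}_U^{\rm LH}$ must be invoked together rather than separately.
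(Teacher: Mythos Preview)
Your argument follows the same two-step skeleton as the paper's proof: first collapse the Cartan formula to $\mathscr{L}_{{\sf j}V_{\gf'}}F^\gf=-\iota_{{\sf j}V_{\gf'}}\iota_{{\sf j}V_\gf}\Omega_L$ using ${\sf d_v}F^\gf=-\iota_{{\sf j}V_\gf}\Omega_L$, then apply the bracket identity (\ref{eqn:basic}) to read off $[V_\gf,V_{\gf'}]$ as Hamiltonian. The difference lies in how the two extra terms $\iota_{{\sf j}V_{\gf'}}\mathscr{L}_{{\sf j}V_\gf}\Omega_L$ and $\iota_{{\sf j}V_\gf}\mathscr{L}_{{\sf j}V_{\gf'}}\Omega_L$ are disposed of. You invoke the generic locally Hamiltonian condition (\ref{eqn:LH}) to argue they are ${\sf d_h}$-exact and then assemble a residual $\sigma$; the paper instead observes that for these particular vector fields the terms vanish identically, since $\iota_{{\sf j}V_\gf}\Omega_L=-{\sf d_v}F^\gf$ holds off-shell with no residual (Lemma~\ref{lma:H-phi}), whence $\mathscr{L}_{{\sf j}V_\gf}\Omega_L={\sf d_v}\iota_{{\sf j}V_\gf}\Omega_L=-{\sf d_v}^2F^\gf=0$. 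This gives ${\sf d_v}(\iota_{{\sf j}V_\gf}\iota_{{\sf j}V_{\gf'}}\Omega_L)=-\iota_{{\sf j}[V_\gf,V_{\gf'}]}\Omega_L$ exactly, with $\sigma=0$, so the boundary condition (\ref{eqn:HOC-bdry}) is trivial.

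Your route is not wrong, but it creates the very bookkeeping obstacle you flag at the end: condition (\ref{eqn:LH}) is only stated after contraction with a single test vector $W\in{\frak F}_U$, so extracting a $W$-independent $\sigma$ from $\iota_{{\sf j}W}\iota_{{\sf j}V_{\gf'}}\mathscr{L}_{{\sf j}V_\gf}\Omega_L$ requires an extra argument. The paper sidesteps this entirely by noting that $\sigma^{V_\gf}=0$ for helicity Hamiltonians, which is the cleaner observation to make here.
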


\begin{proof}

Notice that
\[
	\mathscr{L}_{{\sf j} V_{{\gf'}}} F^\gf
	=
	\iota_{{\sf j}V_{\gf'}} {\sf d_v}  F^\gf
	=
	-\iota_{{\sf j}V_{\gf'}}\iota_{{\sf j}V_{\gf}}\Omega_L
\]
evaluated on $W\in{\frak F}_U$ on Shell. On~the other hand a general formula (\ref{eqn:basic}) states that
\[
	{\sf d_v} \left( \iota_{{\sf j}V_{\gf}}\iota_{{\sf j}V_{\gf'}}\Omega_L\right)
	=
	-\iota_{[{\sf j}V_{\gf},{\sf j}V_{\gf'}]} \Omega_L
	+
	\iota_{{\sf j}V_{\gf'}} \mathscr{L}_{{\sf j}V_{\gf}}\Omega_L
	-
	\iota_{{\sf j}V_{\gf}} \mathscr{L}_{{\sf j}V_{\gf'}}\Omega_L.
\]

Therefore $	{\sf d_v} \left( \iota_{{\sf j}V_{\gf}}\iota_{{\sf j}V_{\gf'}}\Omega_L\right) =
	-\iota_{[{\sf j}V_{\gf},{\sf j}V_{\gf'}]} \Omega_L.$
Recall that ${[{\sf j}V_{\gf},{\sf j}V_{\gf'}]}={{\sf j}[V_{\gf},V_{\gf'}]}$, see for instance~\cite{Vinogradov} por the explicit form of the Lie bracket of evolutionary vector fields. Hence $[V_{\gf},V_{\gf'}]$ is Hamiltonian first variation for $\iota_{{\sf j}V_{\gf}}\iota_{{\sf j}V_{\gf'}}\Omega_L\in {\sf HOC}_U$.
\end{proof}

Define the family of $\gf$-{\em helicity observables} as
\[
	f^\gf_\Sigma(\eta)=\int_\Sigma ({\sf j}\eta)^* F^\gf,
	\qquad
	\forall \eta\in{\cal A}_U.
\]

We see that $f^{\gf}_\Sigma$ is related to the anti-symmetric component of the helicity as bilinear form, see Section~\ref{hel}, in~the sense of (\ref{eqn:hel2}). 
Notice also that $[\cdot,\cdot]_\Sigma$ is not necessarily symmetric, unless~$\partial \Sigma=0$. Hence $f^\gf_\Sigma$ not necessarily equals $0$.

We say that $f_\Sigma^\gf$ is a {\em Hamiltonian observable} with {\em Hamiltonian  first variation} $v_\gf$ so that the following formal identity holds:
\begin{equation}\label{eqn:notation-Lie}
	\mathscr{L}_w f_\Sigma^\gf(\eta)
		=
		-\omega_{\Sigma L}[\eta](v_\gf,w)
		, \qquad
		\forall w=\delta \eta,\forall \eta\in{\cal A}_U.
\end{equation}

Let us explain the formal notation of (\ref{eqn:notation-Lie}). Any first variation of solutions, $W\in {\frak F}_U,$ encodes a variation of any fixed solution $\eta\in{\cal A}_U$, which we denote as $w=\delta \eta$,
\begin{equation}\label{eqn:w}
	w=\left. \frac{{\rm d}\phi^\geps}{{\rm d}\geps}\right\vert_{\geps=0}
	,\qquad \phi^0=\eta
\end{equation}
for a one-parameter family of smooth solutions ${\phi^\geps}\in{\cal A}_U.$ This means that
$
		\left. \frac{{\rm d}({\sf j}\eta^\geps)}{{\rm d}\geps}\right\vert_{\geps=0}
		  = {\sf j}W({\sf j} \eta)
$.

In the r.h.s. we have an evaluation of a symplectic form,
\begin{equation}\label{eqn:omega}
	\omega_{\Sigma L}[\eta](v,w):=
	\int_\Sigma({\sf j}\eta)^*\Omega_L({\sf j} V,{\sf j}W).
\end{equation}

While in the l.h.s. we have
\begin{equation}\label{eqn:lie}
	\mathscr{L}_w f_\Sigma^\gf(\eta)=\left. \frac{ d}{{ d}\geps}\right\vert_{\geps=0}
		f^\gf_\Sigma(\phi^\geps),
\end{equation}

With this notation we suggest that we are modeling a Lie derivative ${\mathscr L}_w(\cdot)$ in the tangent space of the moduli space ${\cal A}_U/{\cal G}_U$, while $w=\delta\eta$ corresponds to local vector fields near $[\eta]\in {\cal A}_U/{\cal G}_U$.

If $X\in {\frak G}_U$ corresponds to a first variation of a one-parametric family of gauge equivalent solutions, $\phi^\geps$, then
$
	\mathscr{L}_xf^{\gf}_\Sigma= 0,
$
which follows from ${\sf j} X\vert_{\partial U}=0$. Thus $f^\gf_\Sigma$ is well defined for the gauge class  $[V_\gf]\in \mathfrak{F}^{\rm LH}_U/{\frak G}_U$.

\begin{lemma}

Consider the linear space
\[
	{\frak f}_{\Sigma U}:=
	\left\{
		f^\gf_\Sigma
		\,:\,
		d\star d \gf =0,\, 
		[V_\gf]\in \mathfrak{F}^{\rm LH}_U/{\frak G}_U
	\right\} /\mathbb{R}
\]
where $f^{\gf_1}_\Sigma- f^{\gf_2}_\Sigma$ is a constant function iff represent the same $\mathbb{R}$-class. Then ${\frak f}_{\Sigma U}$ is a Lie algebra with bracket
\[
	\left\{
		\left[f^\gf_\Sigma\right],
		\left[f^{\gf'}_\Sigma\right]
	\right\}_\Sigma
	=
	\left[f^{V_\gf V_{\gf'}}_\Sigma
	\right],
\]
which means
\[ \left\{
		f^\gf_\Sigma,
		f^{\gf'}_\Sigma
	\right\}_\Sigma
	=
	f^{V_\gf V_{\gf'}}_\Sigma + {\rm const.}
\]

\end{lemma}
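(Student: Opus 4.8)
The plan is to establish the four Lie-algebra axioms for the candidate bracket, the decisive point being that the symplectic product observable $f^{V_\gf V_{\gf'}}_\Sigma$ is a \emph{constant} function on ${\cal A}_U$; consequently its class in the $\mathbb{R}$-quotient is zero, the bracket is well-defined with values in ${\frak f}_{\Sigma U}$, and the algebra turns out to be abelian. First I would invoke Lemma \ref{lma:sympl-product}: the current $\iota_{{\sf j}V_\gf}\iota_{{\sf j}V_{\gf'}}\Omega_L$ lies in ${\sf HOC}_U$ with Hamiltonian $[V_\gf,V_{\gf'}]$, and this Hamiltonian vanishes. Indeed the characteristics $\gf^i(x)$ and ${\gf'}^i(x)$ of $V_\gf$ and $V_{\gf'}$ depend only on the base coordinates, so every partial derivative with respect to a fiber jet coordinate annihilates them and the evolutionary bracket reduces to $[V_\gf,V_{\gf'}]=0$.

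Second, I would make the constancy explicit by contracting. Using $\iota_{{\sf j}V_\gf}\vartheta^j=\gf^j$ and $\iota_{{\sf j}V_\gf}\vartheta^i_j=\partial_j\gf^i$, together with the fact that $\iota_{{\sf j}V_\gf}$ annihilates the horizontal factor $\nu^i$, one obtains
\[
	\iota_{{\sf j}V_\gf}\iota_{{\sf j}V_{\gf'}}\Omega_L
	=
	\sum_{i,j=1}^n\varpi^{ij}\left[
		(\partial_j{\gf'}^i-\partial_i{\gf'}^j)\gf^j
		-(\partial_j\gf^i-\partial_i\gf^j){\gf'}^j
	\right]\nu^i,
\]
whose coefficients involve only the metric and the forms $\gf,\gf'$, not the fiber variables. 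Pulling back along ${\sf j}\eta$ therefore yields an $(n-1)$-form on $U$ independent of $\eta$, so $f^{V_\gf V_{\gf'}}_\Sigma(\eta)=\int_\Sigma {\sf j}\eta^*\iota_{{\sf j}V_\gf}\iota_{{\sf j}V_{\gf'}}\Omega_L$ is constant and $[f^{V_\gf V_{\gf'}}_\Sigma]=0$ in ${\frak f}_{\Sigma U}$.

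Third, I would dispatch the remaining axioms. Well-definedness on classes follows because $f^\gf_\Sigma$ already depends only on $[V_\gf]\in{\frak F}^{\rm LH}_U/{\frak G}_U$ — as recorded just before the statement, since ${\sf j}X\vert_{\partial U}=0$ for $X\in{\frak G}_U$ forces $\mathscr{L}_x f^\gf_\Sigma=0$ — while changing the $\mathbb{R}$-representative shifts $f^\gf_\Sigma$ by a constant and hence leaves the displayed current untouched. Bilinearity is inherited from the linearity of $\gf\mapsto V_\gf$ and the bilinearity of the double contraction, and antisymmetry is immediate from $\iota_{{\sf j}V_\gf}\iota_{{\sf j}V_{\gf'}}\Omega_L=-\iota_{{\sf j}V_{\gf'}}\iota_{{\sf j}V_\gf}\Omega_L$.

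Finally, the Jacobi identity is immediate once constancy is known: each inner bracket $\{[f^{\gf'}_\Sigma],[f^{\gf''}_\Sigma]\}_\Sigma$ is the zero class, and the bracket of any class with the zero class vanishes because it is represented by the vanishing characteristic $V=0$; thus all three cyclic terms are zero and ${\frak f}_{\Sigma U}$ is an abelian Lie algebra. I expect the main obstacle to be the second step — rigorously confirming that the pulled-back double contraction is field-independent, in harmony with the vanishing of $[V_\gf,V_{\gf'}]$ supplied by Lemma \ref{lma:sympl-product} — since once this constancy is secured the remaining axioms follow formally.
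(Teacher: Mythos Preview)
Your argument is correct and, in fact, more transparent than the route taken in the paper. The key observation you make --- that the characteristics $\gf^i(x)$ of $V_\gf$ depend only on the base coordinates, so the evolutionary bracket $[V_\gf,V_{\gf'}]$ vanishes and the double contraction $\iota_{{\sf j}V_\gf}\iota_{{\sf j}V_{\gf'}}\Omega_L$ is a horizontal $(n-1)$-form with no fibre dependence --- immediately gives constancy of $f^{V_\gf V_{\gf'}}_\Sigma$ and hence $[f^{V_\gf V_{\gf'}}_\Sigma]=0$. From there closure, well-definedness, antisymmetry and Jacobi are all trivial, and the algebra is abelian. The paper instead proceeds indirectly: it passes to divergence-free gauge representatives $V=V_{\gf_1},\,V'=V_{\gf_2}$, writes $\iota_{{\sf j}[V_\gf,V_{\gf'}]}\Omega_L$ as $\iota_{{\sf j}[V,V']}\Omega_L$ plus ${\sf d_h}$-exact gauge pieces, then seeks a $\tilde\gf$ with $[V,V']=V_{\tilde\gf}$ so that $f^{V_\gf V_{\gf'}}_\Sigma - f^{\tilde\gf}_\Sigma$ has vanishing variation and is therefore constant. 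Your computation shows that this machinery is unnecessary in the abelian setting: since $\gf_1,\gf_2$ again depend only on $x$, the evolutionary bracket $[V,V']$ is already zero (the displayed formula for $\tilde\gf$ in the paper is the Lie bracket of the associated vector fields on $M$, which is a different object), so the paper's $\tilde\gf$ should be zero and one recovers exactly your conclusion. What the paper's framework buys is a template that would survive in situations where the Hamiltonian first variations genuinely fail to commute; what your approach buys is a clean, self-contained verification for the case at hand.
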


\begin{proof}

Let $\gf,\gf'$ be $1$-forms as in the hypothesis. As~in the proof of Lemma \ref{lma:sympl-product}, recall that
\[
	{\sf d_v} \left( \iota_{{\sf j}V_{\gf}}\iota_{{\sf j}V_{\gf'}}\Omega_L\right)
	=
	-\iota_{{\sf j}[V_{\gf},V_{\gf'}]} \Omega_L.
\]

There are gauge translations $X=X_{\psi},X'=X_{\psi'}\in{\frak G}_U, \psi,\psi':U\rightarrow\mathbb{R}$ such that the gauge translations $V, V'$ are divergence-free, see for instance the Appendix~\cite{DM-O}. Recall that $V,V'$ are defined by $\gf-d\psi,{\gf'}-d\psi',$ respectively. Hence $V=V_\gf-X$ and $V'=V_{\gf'}-X'$. By~(\ref{eqn:basic})
\[
	-\iota_{{\sf j}[V_\gf,V_{\gf'}]}\Omega_L
	=
	-\iota_{{\sf j}[V,V']}\Omega_L
	-
	\iota_{{\sf j}[V,X']}\Omega_L-
	\iota_{{\sf j}[X,V']}\Omega_L-
	\iota_{{\sf j}[X,X']}\Omega_L
\]

Hence
\[
	{\sf d_v}\left(\iota_{{\sf j}V_\gf} 	\iota_{{\sf j}V_{\gf'}}\Omega_L\right)
	=
	-\iota_{{\sf j}[V,V']}\Omega_L
	+
	{\sf d_h}\sigma^{VV'}
\]
with 
\[
	\sigma^{VV'}
	=
	-\rho^{[V,X']}-\rho^{[X,V']}-\rho^{[X,X']}.
\]

Denote $\tilde{\gf}\in{\sf \Omega}^1(U)$ as the a $1$-form such that ${[V,V']}=V_{\tilde{\gf}}$. In~local coordinates:
\[
	\tilde{\gf}^j
	=
	\sum_{i=1}^n
	\gf^i_1(x)\frac{d \gf^j_2(x)}{d x^i}-
		\gf_2^i(x)\frac{d \gf_1^j(x)}{d x^i},
		\qquad
		\gf_1=\gf -d\psi,\,
		\gf_2=\gf' -d\psi',\,	 
	.
\]
Recall that divergence-free vector fields form a Lie algebra, that is $d\star \tilde{\gf}=0$. Then
\[
		{\sf d_v}\left(F^{\tilde{\gf}}
	+ 		\iota_{{\sf j}V_{\gf}}\iota_{{\sf j}V_{\gf'}}
		\Omega\right)
	=
	{\sf d_h}\sigma^{VV'}.
\]

Therefore,
\[
 \mathscr{L}_w \left(f^{V_\gf V_{\gf'}}_\Sigma
 	-
	f_\Sigma^{\tilde{\gf}}\right)
 =
0,
\] for every variation of solutions $w$ associated to every $W\in{\frak F}_U$. See the explanation of the notation in (\ref{eqn:notation-Lie}). Hence $f_\Sigma^{V_\gf V_{\gf'}}=f^{\tilde{\gf}}_\Sigma+{\rm const.}$
\end{proof}

We claim that ${\frak f}_{\Sigma U}$ yields a family of local observables sufficiently rich to separate solutions, see also~{\cite{Zapata(2019)}. Suppose that we consider a non-gauge variation $v=\delta\eta$ of a solution $\eta\in {\cal A}_U$. More precisely, take a one-parametric family of solutions $\eta^\geps=\eta+\geps \gf$ encoded by the symmetry $V\in{\frak F}_U$, that is
$
	\left.\frac{\rm d}{{\rm d} \geps}\right\vert_{\geps =0} {\sf j}(\eta^\geps)
	=
	{\sf j}V({\sf j}\eta).
$
Without loss of generality we can also suppose that $V=V_\gf$ with $d\star d\gf=0$.
Hence, for~any $0\neq [V]\in {\frak{F}}^{\rm LH}_U/{\frak G}_U$, there exists $[W]\in{\frak F}^{\rm LH}_U/{\frak G}_U,\, [W]\neq 0,$ such that 
$
	{\sf j}\eta^*\left(\iota_{{\sf j}V}\iota_{{\sf j}W}\Omega_L\right)$
in a suitable open $n$-dimensional ball $U'\subseteq U$. We choose an embedded $(n-1)$-dimensional ball, $\Sigma'\subseteq U',\, \partial \Sigma'\subseteq \partial U'$ such that
\[
	\int_{\Sigma'}{\sf j}\eta^*\left(\iota_{{\sf j}W_{\gf'}}\iota_{{\sf j}V_{\gf}}\Omega_L\right)
	\neq 0
\]
for $W=W_{\gf'}$ associated to $\gf'\in{\sf \Omega}^1(U), d\star d\gf'=0,$ a non trivial solution to linearized equations in $U'$ that also vanishes in the exterior of $U'$. 

We then extend $\Sigma'$ to $\Sigma\subseteq U,\, \partial\Sigma\subseteq\partial U,$ such that $f^{VW}_\Sigma(\eta)\neq 0$. The variation of $f^{\gf}_\Sigma$ along  $w$ in the space of YM solutions is
\[
	\mathscr{L}_w f^{\gf}_\Sigma(\eta)
	=f_\Sigma^{V W}(\eta)\neq 0.
\]

Remark that for every YM solution $\eta\in{\cal A}_U$ and for every variation $V\in {\frak F}_U$, if~$\gf={\sf j}\eta^*V$, then and
$
	\left.
		{\sf j}V_\gf
	\right\vert_{{\sf j}\eta}
	= 
	\left.
	{\sf j}V 
	\right\vert_{{\sf j}\eta}.
$
Thus we could change notation and index the family $\{f^\gf_\Sigma\}$ as $\{f^V_\Sigma\}$ where we take $V$ in ${\frak F}^{\rm LH}_U$.

We summarize the results exposed in this section in the following result and regard the family of observables $\{f^V_\Sigma\}$ as a ``Darboux local coordinate system'' for our gauge field~theory.

\begin{theorem}[Darboux's Theorem]\label{tma:darboux}
Given $\eta\in{\cal A}_U$ a fixed YM solution. For~each $\Sigma\subseteq U$ an admissible hypersurface, $\partial \Sigma\subseteq \partial U$, with~relative homology class $[\Sigma]\in H_{n-1}(U,\partial U)$, there exists an infinite dimensional gauge invariant Lie algebra (modulo constant functions)
\[
	{\mathfrak{f}}_{\Sigma U}=
		\left\{
		f_\Sigma^V\,:\, [V]\in {\frak F}^{\rm LH}_U/{\frak G}_U
	\right\}/\mathbb{R}
\]
such that the following assertions~hold:
\begin{enumerate}

\item ${\frak f}_{\Sigma U}$ is gauge invariant: If $X$ is a variation of one-parametric family of gauge equivalent solutions then
$\mathscr{L}_{x}f^V_\Sigma=0.
$ Moreover, $[f_\Sigma^V]\in {\frak f}_{\Sigma U}$ depends just on the gauge ${\frak G}_U$-class, $[V]\in {\frak F}_U^{\rm LH}/{\frak G}_U$.

\item  Each variation $V$ is in fact locally Hamiltonian, $V\in {\frak F}_U^{\rm LH}$ hence $f^V_\Sigma$ is an observable that satisfies the Hamilton's equation (recall notation {in} (\ref{eqn:notation-Lie})):
\[
		\mathscr{L}_w
		f^V_\Sigma(\eta)
		=
		-\omega_{\Sigma L}[\eta](v,w)
		,\qquad
		\forall 
		w=\delta\eta\]

\item $	{\mathfrak{f}}_{\Sigma U}$, locally separates solutions near $\eta$: For every non-gauge variation $v=\delta\eta$ modeled by $V\in {\frak F}_U$, there exists a locally Hamiltonian variation $w$ modeled by $W\in {\frak F}^{\rm LH}_U$ and $\Sigma\subseteq U$ with
\[
	\mathscr{L}_vf^W_\Sigma(\eta)\neq 0.
\]

\end{enumerate}
\end{theorem}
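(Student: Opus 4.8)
The plan is to read the statement off the structural facts assembled earlier in this section, so that the proof reduces to verifying the three itemized assertions for the bracket already in hand. The Lie-algebra structure on ${\frak f}_{\Sigma U}$, with the bracket sending the pair $([f^\gf_\Sigma],[f^{\gf'}_\Sigma])$ to $[f^{V_\gf V_{\gf'}}_\Sigma]$, is exactly the content of the preceding Lemma, and its infinite dimensionality will follow from the separation property (3) below, which manufactures arbitrarily many independent classes. The key technical input throughout is Lemma~\ref{lma:H-phi}: for each $\gf$ with $d\star d\gf=0$ the identity ${\sf d_v}F^\gf=-\iota_{{\sf j}V_\gf}\Omega_L$ holds off-shell with \emph{no} residual form, and $V_\gf\in{\frak F}^{\rm LH}_U$. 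I would also pass freely between the indexings $f^\gf_\Sigma$ and $f^V_\Sigma$, using that ${\sf j}V_\gf$ and ${\sf j}V$ agree along ${\sf j}\eta$ when $\gf={\sf j}\eta^*V$.

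The workhorse is assertion (2), the Hamilton equation, from which (1) and (3) follow by specializing the variation. For $w=\delta\eta$ encoded by $W\in{\frak F}_U$ as in (\ref{eqn:w}), I would expand the left side of (\ref{eqn:lie}) as $\mathscr{L}_w f^V_\Sigma(\eta)=\int_\Sigma{\sf j}\eta^*\mathscr{L}_{{\sf j}W}F^V=\int_\Sigma{\sf j}\eta^*\iota_{{\sf j}W}{\sf d_v}F^V$, using that $F^V$ has vertical degree zero so that the vertical Cartan formula collapses to a single term, as in Lemma~\ref{lma:sympl-product}. Substituting the off-shell identity of Lemma~\ref{lma:H-phi} gives $\mathscr{L}_w f^V_\Sigma(\eta)=-\int_\Sigma{\sf j}\eta^*\iota_{{\sf j}W}\iota_{{\sf j}V}\Omega_L=-\omega_{\Sigma L}[\eta](v,w)$, which is precisely (\ref{eqn:notation-Lie}) read through (\ref{eqn:omega}). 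Because the residual form vanishes identically here, no boundary correction intervenes and (2) holds for every admissible $\Sigma$.

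Assertion (1) is then the evaluation of this master formula on a gauge direction. For $x=\delta\eta$ induced by $X\in{\frak G}_U\subseteq\hat{\frak G}_U$, the formula yields $\mathscr{L}_x f^V_\Sigma(\eta)=\int_\Sigma{\sf j}\eta^*\iota_{{\sf j}V}\iota_{{\sf j}X}\Omega_L$, and the gauge degeneracy (\ref{eqn:gauge-d_h}) with $W=V$ rewrites the integrand on ${\cal E}_L$ as $d\big({\sf j}\eta^*(\iota_{{\sf j}V}\rho^X)\big)$, exactly as in the computation inside the proof of Proposition~\ref{prop:X-exact_2}. By Stokes this collapses to the integral of ${\sf j}\eta^*(\iota_{{\sf j}V}\rho^X)$ over $\partial\Sigma\subseteq\partial U$, which vanishes because ${\sf j}X|_{\partial U}=0$. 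Hence $\mathscr{L}_x f^V_\Sigma=0$; combined with the linearity of $\gf\mapsto F^\gf$ and the fact that the bracket already descends to ${\frak F}^{\rm LH}_U/{\frak G}_U$, this shows $[f^V_\Sigma]$ depends only on the class $[V]$.

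The substantive step, and the one I expect to be the main obstacle, is assertion (3), local separation. Given a non-gauge variation $v=\delta\eta$ modeled by $V\in{\frak F}_U$, so that $[V]\neq 0$ in ${\frak F}^{\rm LH}_U/{\frak G}_U$, I would carry out the construction sketched just before the statement. Working in exponential coordinates on a small ball $U'\subseteq U$, one must produce a linearized solution $\gf'$, $d\star d\gf'=0$, compactly supported inside $U'$, whose pairing ${\sf j}\eta^*(\iota_{{\sf j}W_{\gf'}}\iota_{{\sf j}V}\Omega_L)$ integrates to a nonzero value over a small $(n-1)$-ball $\Sigma'\subseteq U'$ with $\partial\Sigma'\subseteq\partial U'$. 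This is the delicate point: it rests on the nondegeneracy of $\Omega_L$ modulo gauge, namely that a genuinely non-gauge $V$ cannot have $\iota_{{\sf j}V}\Omega_L$ pull back to zero against all admissible compactly supported test fields, so that $\gf'$ can be tuned to detect it. Once $\gf'$ is chosen to vanish outside $U'$, the observable $f^{W_{\gf'}}_\Sigma$ for any admissible extension $\Sigma\supseteq\Sigma'$ receives no contribution away from $\Sigma'$, so $\mathscr{L}_v f^{W_{\gf'}}_\Sigma(\eta)$ equals, up to sign, the symplectic product observable $f^{V W_{\gf'}}_\Sigma(\eta)$ and reduces to the nonzero local integral. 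Controlling the support of $\gf'$ and the admissibility of the extended $\Sigma$ simultaneously is the only place requiring genuine care; everything else is a consequence of the master formula of (2).
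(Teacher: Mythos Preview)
Your proposal is correct and follows essentially the same route as the paper: the theorem is explicitly presented there as a summary of the results developed just before it, and you recapitulate precisely those ingredients---Lemma~\ref{lma:H-phi} for the Hamilton equation (2), the boundary condition ${\sf j}X|_{\partial U}=0$ together with the degeneracy (\ref{eqn:gauge-d_h}) and Stokes for gauge invariance (1), and the local nondegeneracy construction on a small ball $U'$ with a compactly supported test field $\gf'$ for separation (3). Your exposition is slightly more explicit in places (e.g.\ spelling out the Stokes step over $\partial\Sigma\subseteq\partial U$ for item (1), where the paper says only ``which follows from ${\sf j}X|_{\partial U}=0$''), but the logical skeleton is identical.
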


The following commutative diagram of Lie algebra morphisms and vertical exact sequences summarizes our results
\begin{equation}\label{eqn:diagram1}
\xymatrix{
	{\frak f}_{\Sigma U}
	&
	&
	{\frak F}_U^{\rm LH}/{\frak G}_U
		\ar@{->>}[ll]
		\ar@{^{(}->}[r]
	&
	\hat{\frak F}^{\rm LH}_U/\hat{\frak G}_U
	\\
	\{F^\gf\}
		\ar@{->>}[u]^{\int_\Sigma\cdot}
		\ar@{^{(}->}[rr]
	&
	&
	{\sf HOC}_U
		\ar[u]
		\ar@{^{(}->}[r]
	&
	\widehat{\sf HOC}_U
		\ar[u]
	\\
	\mathbb{R}
		\ar[u]
	&
	&
	{\sf C}
		\ar@{->>}[ll]
		\ar[u]
		\ar@{^{(}->}[r]
	&
	\hat{\sf C}_U
		\ar[u]
}\end{equation}
where ${\sf C}_U$ denote subset of the the constant observable currents
\[
	\hat{\sf C}_U:=
	\left\{
		F\in\widehat{\rm HOC}_U
		\,:\,
		({\sf d_v} F- {\sf d_h}\sigma^F)\vert_{{\frak F}_U,{\cal E}_L}=0
	\right\}
\]
with the additional boundary condition $({\sf d_v} \sigma^F- \lambda^F)\vert_{{\frak F}_U,{\cal E}_L}=0$, $ {\sf d_h} \lambda^F\vert_{{\frak F}_U,{\cal E}_L}=0$.

\begin{definition}[Poisson algebra]
Let $\Sigma$ be any admissible hypersurface $\Sigma\subseteq U$. The~{\em (polynomial) Poisson algebra of helicity Hamiltonian observables}, 
\[
	\left(
		\mathscr{P}\left({\frak f}_{\Sigma U}\right),
		\{\cdot,\cdot\}_\Sigma
	\right)
\]
is generated by the Lie algebra
$
	{\frak f}_{\Sigma U}
	=
	\{[f^\gf_\Sigma]:{\cal A}_U/{\cal G}_U\rightarrow\mathbb{R}\}.
$
\end{definition}

The proof of the following assertion follows from the fact that the space of boundary conditions of solutions, ${\cal L}_{\tilde{U}}\subseteq {\cal L}_{\partial U}$, is a Lagrangian subspace with respect to the symplectic form $\omega_{\partial U ,L}$, see~\cite{CMR1}.

\begin{proposition}
For a hypersurface $\Sigma\subseteq \partial U$ (such that $[\Sigma]=0\in H_{n-1}(U,\partial U))$ and for its complement, $\Sigma'=U- \Sigma\subseteq \partial U,$ the corresponding observables uniquely define an observable
\[
	f_{\partial\Sigma}^\gf:=f_\Sigma^\gf=-f^\gf_{\Sigma'}\in{\frak f}_{\Sigma U}
\]
associated to the oriented and closed $(n-2)$-dimensional boundary $\partial \Sigma\subseteq \partial U$.
\end{proposition}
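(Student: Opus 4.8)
The plan is to reduce the claimed equality $f_\Sigma^\gf=-f_{\Sigma'}^\gf$ to the vanishing of the total boundary integral $\int_{\partial U}{\sf j}\eta^*F^\gf$ for every solution $\eta\in{\cal A}_U$, and then to read off the asserted dependence on $\partial\Sigma$ alone. First I would fix orientations so that, as oriented $(n-1)$-chains, $\partial U=\Sigma+\Sigma'$ with both pieces carrying the boundary-of-$U$ orientation; since $\partial U$ is a closed manifold, $\partial(\partial U)=0$ forces $\partial\Sigma'=-\partial\Sigma$ as oriented $(n-2)$-cycles, which is the common boundary to which the observable will be attached.

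The heart of the argument is the identity $\int_{\partial U}{\sf j}\eta^*F^\gf=0$. I would obtain it from the established local invariance ${\sf d_h}F^\gf\vert_{{\cal E}_L}=0$: using ${\sf j}\eta^*\circ{\sf d_h}=d\circ{\sf j}\eta^*$ together with the fact that a solution satisfies ${\sf j}\eta(U)\subseteq{\cal E}_L$, Stokes' theorem gives
\[
	\int_{\partial U}{\sf j}\eta^*F^\gf
	=\int_U d\big({\sf j}\eta^*F^\gf\big)
	=\int_U{\sf j}\eta^*\big({\sf d_h}F^\gf\big)
	=0.
\]
Splitting $\partial U=\Sigma+\Sigma'$ then yields $f_\Sigma^\gf(\eta)+f_{\Sigma'}^\gf(\eta)=0$ for all $\eta\in{\cal A}_U$, i.e.\ $f_\Sigma^\gf=-f_{\Sigma'}^\gf$.

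This vanishing is exactly the content of the Lagrangian hint. Since $F^\gf=\iota_{{\sf j}V_\gf}\iota_{{\sf j}R}\Omega_L$ and the radial field $R$ restricts on shell to the solution itself, the integral $\int_{\partial U}{\sf j}\eta^*F^\gf$ equals, up to sign, the boundary presymplectic pairing $\omega_{\partial U,L}[\eta]$ of the two boundary data of $\gf$ and of $\eta-\eta_0$, both of which lie in ${\cal L}_{\tilde U}$ (recall that $\eta_0\in{\cal A}_U$ is itself a solution, so the reference shift is harmless). Isotropy of the Lagrangian subspace ${\cal L}_{\tilde U}\subseteq{\cal L}_{\partial U}$ forces this pairing to vanish. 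I would present the Stokes computation as the self-contained proof and remark that it coincides with this symplectic interpretation, following \cite{CMR1}.

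Finally, for well-definedness I would argue that the observable depends only on the oriented $(n-2)$-cycle $\partial\Sigma$: any other hypersurface $\Sigma_1\subseteq\partial U$ with $\partial\Sigma_1=\partial\Sigma$ differs from $\Sigma$ by a closed $(n-1)$-cycle contained in $\partial U$, over which $F^\gf$ integrates consistently by the same identity $\int_{\partial U}{\sf j}\eta^*F^\gf=0$; hence $f_\Sigma^\gf$ is unchanged and may legitimately be written $f_{\partial\Sigma}^\gf$. Because $\Sigma\subseteq\partial U$ has $[\Sigma]=0$ in $H_{n-1}(U,\partial U)$, the resulting object is compatible with membership in ${\frak f}_{\Sigma U}$. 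I expect the main obstacle to be the careful bookkeeping of orientations (so that $\partial U=\Sigma+\Sigma'$ and $\partial\Sigma'=-\partial\Sigma$) together with making the identification of $\int_{\partial U}{\sf j}\eta^*F^\gf$ with the boundary symplectic pairing fully precise, in particular verifying that the data extracted through $V_\gf$ and through $R$ genuinely land in ${\cal L}_{\tilde U}$ so that the Lagrangian property can be invoked.
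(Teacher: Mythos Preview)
Your proposal is correct. The paper does not give a detailed proof; it simply remarks that the assertion follows from the fact that ${\cal L}_{\tilde U}\subseteq{\cal L}_{\partial U}$ is a Lagrangian subspace for $\omega_{\partial U,L}$, citing~\cite{CMR1}. You recover this interpretation in your second paragraph, so your write-up subsumes the paper's sketch.

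Your primary argument via Stokes is a genuinely more elementary route than the one the paper points to. It uses only the on-shell closedness ${\sf d_h}F^\gf\vert_{{\cal E}_L}=0$ (already recorded in the text right after the definition of the helicity current) together with the standard identity ${\sf j}\eta^*\circ{\sf d_h}=d\circ{\sf j}\eta^*$ for sections, and it sidesteps precisely the delicate point you yourself flag: making sure the contribution coming from the radial field $R$ really produces boundary data in ${\cal L}_{\tilde U}$ so that the isotropy of the Lagrangian subspace can be invoked. In effect, your Stokes computation is what underlies the isotropy half of the Lagrangian property in the first place, so the two arguments are compatible; yours is just unpacked one level further and therefore self-contained. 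The orientation bookkeeping and the dependence on $\partial\Sigma$ alone that you outline are adequate for the smooth, cornerless boundary assumed throughout the paper.
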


The Lie algebra
\[
	{\frak f}_{\partial U}:=\{f_{\partial \Sigma}^\gf\,:\, \Sigma\subseteq \partial U\}/\mathbb{R}
\]
will suffice to separate boundary conditions of solutions, while the Lie algebras ${\frak f}_{\Sigma U}$ corresponding to $0\neq [\Sigma]\in H_{n-1}(U,\partial U)$ will be necessary if we want to separate solutions yielding the same boundary conditions, hence in the fibers of ${ r}_{U,\partial U}:{\cal L}_U\rightarrow  {\cal L}_{\tilde{U}} \subseteq {\cal L}_{\partial U}$. This happens when $H^1_{\rm dR}(U,\partial U)\neq 0$ according to Proposition~\ref{pro:fibration}. This also allows us to consider the fibers of ${r}_{U,\partial U}:{\cal L}_U\rightarrow {\cal L}_{\tilde{U}}$ as the symplectic leafs the coisotropic linear space ${\cal L}_U$. This image has been described in detail for the moduli space ${\cal A}_U/{\cal G}_U$ of non-abelian YM solutions in the two dimensional case, see for instance~\cite{Sengupta}.


\section{Gluing Observable~Currents}\label{sec:gluing}

Suppose that a region $U$ is obtained by gluing $U_1,U_2$ along the closed hypersurfaces $\Sigma_1\subseteq \partial U_1,\Sigma_2\subseteq \partial U_2$, to avoid corners case we suppose $\partial\Sigma_1=\emptyset=\partial \Sigma_2$. This includes an isometry of $\Sigma_1$ with $\Sigma_2$ together with the compatibility of  normal derivatives of the metric. We also suppose that the principal bundle ${\cal P}$ over $U$ is induced by the corresponding principal bundle  ${\cal P}_1,{\cal P}_2$ over $U_1,U_2$. From~the projection map $p:U_1\times U_2\rightarrow U$ we fix base points $\eta\in{\cal A}_{U}$ obtained by gluing $p^*\eta_i\in{\cal A}_{U_i},i=1,2$. 

Suppose that $V_i\in {\frak F}_{U_i},i=1,2$ satisfy the continuity gluing condition along $\Sigma_i$
\begin{equation}\label{eqn:gluing}
	{\sf j}_{\Sigma} \left(V_1\vert_{\Sigma_1}\right)
	=
	{\sf j}_{\Sigma} \left(V_2\vert_{\Sigma_2}\right)
\end{equation}
and denote those couples $(V_1,V_2)$ satisfying (\ref{eqn:gluing}) as ${\frak F}_{U_1}\#_\Sigma{\frak F}_{U_2}$, where $\Sigma=p(\Sigma_i)\subseteq U$. It is a Lie subalgebra of ${\frak F}_{U_1}\oplus {\frak F}_{U_2}$. The continuity gluing condition (\ref{eqn:gluing}) is trivially satisfied for the gauge Lie algebras so that ${\frak G}_{U_1}\#_\Sigma{\frak G}_{U_2}={\frak G}_{U_1}\oplus {\frak G}_{U_2}$, hence there is a well defined Lie algebra
\[
{\frak F}_{U_1}\#_\Sigma{\frak F}_{U_2}/
{\frak G}_{U_1}\#_\Sigma{\frak G}_{U_2}
\subseteq
{\frak F}_{U_1}\oplus{\frak F}_{U_2}/{\frak G}_{U_1}\#_\Sigma{\frak G}_{U_2}
\]

Let $\hat{\frak G}^{\Sigma}_{U_1}\subseteq \hat{\frak G}_{U_1}$ denote those gauge variations whose jet vanish along the boundary components of $\partial U_1$ except for $\Sigma_1$. Similarly define  $\hat{\frak G}_{U_2}^{\Sigma}$. If~we define
\[
	{\frak G}_\Sigma =
	\hat{\frak G}^{\Sigma}_{U_1}\#_\Sigma \hat{\frak G}^{\Sigma}_{U_2}
	/
	\left({\frak G}_{U_1}\#_\Sigma{\frak G}_{U_2}\right)
\]
then by an Isomorphism Theorem for Lie algebras,
\[
			{\frak F}_{U_1}\#_\Sigma{\frak F}_{U_2}/
			\hat{\frak G}^{\Sigma}_{U_1}\#_\Sigma \hat{\frak G}^{\Sigma}_{U_2}
			\simeq 
			\left({\frak F}_{U_1}\#_\Sigma{\frak F}_{U_2}/
			{\frak G}_{U_1}\#_\Sigma{\frak G}_{U_2}\right)
			/
			{\frak G}_\Sigma.
\]

There is a commutative diagram of linear maps as follows. Recall the gluing procedure for abelian YM, see~\cite{DM-O}. The~doted arrow is a Lie algebra morphism.
\[\xymatrix{
			\left({\frak F}_{U_1}\#_\Sigma{\frak F}_{U_2}/
			{\frak G}_{U_1}\#_\Sigma{\frak G}_{U_2}\right)
			/
			{\frak G}_\Sigma
		\ar[ddd]
	&&
	\\
	&
	{\frak F}_{U}/{\frak G}_{U}
			\ar[d]^{{\frak e}_{\eta}}	\ar[r]		\ar@{-->}[ul]
	&
		{\frak F}_{U_1}\#_\Sigma{\frak F}_{U_2}/
		{\frak G}_{U_1}\#_\Sigma{\frak G}_{U_2}
			\ar[d]^{{\frak e}_{\eta_1}\oplus{\frak e}_{\eta_2} }
			\ar[llu]
	\\
	&
	{\cal L}_{U}
		\ar[r]		\ar[dl]^{r_{U,\partial U}}
	&
	{\cal L}_{U_1}\oplus {\cal L}_{U_2}
		\ar[d]^{r_{U_1,\partial U_1}\oplus r_{U_2,\partial U_2}}
	\\
	{\cal L}_{\tilde{U}}
	&&
	{\cal L}_{\tilde{U}_1}\oplus 	{\cal L}_{\tilde{U}_2}
		\ar[ll]	
}\]

From the Lagrangian embedding of ${\cal L}_{\tilde{U}}$ with respect to the symplectic structure, $\omega_{\partial U,L},$ it follows that the Dirichlet conditions along $\Sigma_1$ and $\Sigma_2$ completely determine the Neumann conditions in $U_1$ and $U_2,$ respectively. Here we consider an axial gauge fixing for solutions in $\partial U$ satisfying also the Lorentz gauge fixing condition in $\partial U$, see Appendix in~\cite{DM-O}. This means that the continuous gluing condition (\ref{eqn:gluing}) will suffice to reconstruct modulo gauge the first variation $V_\gf=V_{\gf_1}\#V_{\gf_2}$ for $V_{\gf_i}\in {\frak F}_{U_i},i=1,2$ disregarding the normal derivatives along $\Sigma$. This proves the following~assertion

\begin{theorem}[Gluing of symmetries modulo gauge]
There is an isomorpmhism of Lie algebras
\[
	\left({\frak F}_{U_1}\#_\Sigma{\frak F}_{U_2}/
			{\frak G}_{U_1}\#_\Sigma{\frak G}_{U_2}\right)
			/
			{\frak G}_\Sigma
			\simeq 
	{\frak F}_U/{\frak G}_U.
\]

\end{theorem}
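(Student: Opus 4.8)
The plan is to exhibit mutually inverse Lie algebra morphisms and then invoke the commutative diagram preceding the statement. First I would define the \emph{restriction} morphism: for $V\in{\frak F}_U$ the pullbacks $V_i:=p^*V\vert_{U_i}\in{\frak F}_{U_i}$ are linearized solutions on each piece, and because $V$ is a single globally defined evolutionary field on $U$ its one-jet is continuous across $\Sigma$, so the pair $(V_1,V_2)$ automatically satisfies the continuity gluing condition (\ref{eqn:gluing}) and lands in ${\frak F}_{U_1}\#_\Sigma{\frak F}_{U_2}$. Since the Lie bracket of evolutionary vector fields is local (see~\cite{Vinogradov}), restriction is a Lie algebra homomorphism. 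A variation $X\in{\frak G}_U$ has ${\sf j}X\vert_{\partial U}=0$, hence each restriction $X\vert_{U_i}$ has vanishing one-jet on $\partial U_i\smallsetminus\Sigma_i$, i.e. $X\vert_{U_i}\in\hat{\frak G}^\Sigma_{U_i}$; therefore the composite with the projection onto the quotient by ${\frak G}_\Sigma$ kills ${\frak G}_U$ and descends to the dotted Lie algebra morphism
\[
	{\frak F}_U/{\frak G}_U
	\longrightarrow
	\left({\frak F}_{U_1}\#_\Sigma{\frak F}_{U_2}/{\frak G}_{U_1}\#_\Sigma{\frak G}_{U_2}\right)/{\frak G}_\Sigma
\]
appearing in the diagram.

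Next I would construct the inverse by gluing, through the linear-theory identifications of Section~\ref{sec:linear}. Using the coverings ${\frak e}_{\eta_i}$ and the boundary-condition maps $r_{U_i,\partial U_i}$, each class $[V_i]$ is recorded by its Dirichlet--Neumann datum in ${\cal L}_{\tilde U_i}\subseteq{\cal L}_{\partial U_i}$. The continuity condition (\ref{eqn:gluing}) matches the Dirichlet components along $\Sigma$. The decisive input is that ${\cal L}_{\tilde U_i}$ embeds as a Lagrangian subspace for $\omega_{\partial U_i,L}$, whence the Dirichlet datum along $\Sigma_i$ already determines the Neumann datum; consequently matching Dirichlet data forces the Neumann data to agree as well, and the two linearized solutions $V_{\gf_1},V_{\gf_2}$ fit together to a single $V_\gf=V_{\gf_1}\#V_{\gf_2}$ on $U$, after passing to axial gauge representatives satisfying the Lorentz condition in a tubular neighborhood of $\Sigma$ as in (\ref{eqn:diagram3}). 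Chasing the square relating ${\cal L}_U$, ${\cal L}_{U_1}\oplus{\cal L}_{U_2}$, ${\cal L}_{\tilde U}$ and ${\cal L}_{\tilde U_1}\oplus{\cal L}_{\tilde U_2}$ then shows this assignment is well defined modulo ${\frak G}_\Sigma$ and inverse to restriction.

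Finally I would read off bijectivity. Injectivity is immediate: if the image pair lies in $\hat{\frak G}^\Sigma_{U_1}\#_\Sigma\hat{\frak G}^\Sigma_{U_2}$ then each ${\sf j}V_i$ vanishes on $\partial U_i\smallsetminus\Sigma_i$, and since $\Sigma$ is interior to $U$ the glued one-jet ${\sf j}V$ vanishes on all of $\partial U$, so $V\in{\frak G}_U$ and $[V]=0$. Surjectivity is exactly the gluing construction of the previous paragraph. Because the restriction map is already a Lie algebra morphism and is bijective, its set-theoretic inverse is automatically a morphism, yielding the asserted Lie algebra isomorphism.

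The hardest step will be the smooth reconstruction across $\Sigma$: the gluing condition (\ref{eqn:gluing}) only prescribes continuity of the one-jet, whereas a genuine element of ${\frak F}_U$ must be smooth through $\Sigma$. Bridging this gap is precisely where the Lagrangian property of ${\cal L}_{\tilde U_i}$ is essential, since it upgrades agreement of Dirichlet data to agreement of the full boundary datum and hence, by the linearized equation $d\star d\gf=0$ and elliptic regularity, to a smooth glued solution modulo gauge. Verifying that the normal-derivative mismatch allowed a priori by (\ref{eqn:gluing}) is genuinely absorbed into ${\frak G}_\Sigma$, rather than producing a nonremovable obstruction, is the crux of the argument.
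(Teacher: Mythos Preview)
Your proposal is correct and follows essentially the same approach as the paper: the paper's proof is the paragraph immediately preceding the theorem, which sets up the commutative diagram, identifies the dotted arrow as a Lie algebra morphism, and then invokes the Lagrangian embedding of ${\cal L}_{\tilde U}$ with respect to $\omega_{\partial U,L}$ to conclude that Dirichlet data along $\Sigma_i$ determine the Neumann data, so the continuity condition~(\ref{eqn:gluing}) suffices to reconstruct $V_\gf=V_{\gf_1}\#V_{\gf_2}$ modulo gauge. You have simply made explicit the two directions (restriction and gluing) and the verification that restriction descends to the quotients, which the paper leaves implicit; the only minor point to tighten is that in your injectivity step you should also note that each $V_i\in\hat{\frak G}_{U_i}$ forces $d\eta_{V_i}=0$ (Lemma~\ref{lma:eta_X-closed}), so the glued $V$ lies in $\hat{\frak G}_U$ and not merely satisfies ${\sf j}V\vert_{\partial U}=0$.
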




\section{Outlook: Further~Problems}\label{sec:outlook}


We just remark that in further directions of research. In~the first place, it is highly desirable to see whether or not $f^{VV'}_\Sigma$ observables can be defined for non abelian (non-linear) YM equations and if it will suffice to separate solutions just as in Theorem~\ref{tma:darboux}. Extension of the variationa bicomplex ttreatment need to be extended to non-local first variations to get enough observables to separate solutions. The~existence of a Jacobi bracket needs also to be verified in this case. Gluing properties for observables need also to be developed and explained in detail. Namely the continuous gluing \mbox{of currents}
$
	{\sf HOC}_{U_1}\#_\Sigma {\sf HOC}_{U_2}
$
in relation to ${\sf HOC}_{U},$ as well as the gluing ${f}^{V_1}_{\Sigma'}\#_\Sigma{f}_{\Sigma''}^{V_2}$ for hypersurfaces $\Sigma'\subseteq U_1,\Sigma'\subseteq U''$ intersecting transversally the gluing boundary component $\Sigma_i,i=1,2$. Finally, considerations of Riemannian manifolds with corners may introduce further difficulties in the results we have established for the smooth boundary~case.

\vspace{6pt} 


\appendix
\section{Variational Bicomplex~Formalism}\label{VarBiC} 


For the convenience of the reader, in~this section we fix notation by recalling basic definitions of the variational formalism for variational PDEs taken from~\cite{Anderson, Olver, Deligne-Freed, Vinogradov, Krupka, Zuckerman}.

Let $M$ be an $n$-dimensional manifold, and~let $\pi:Y\rightarrow M$ be a fiber bundle with $m$-dimensional fiber $\mathcal{F}$. Denote its sections or {\em histories} as $\Gamma(Y\mid_U)$ where $U\subseteq M$ is a compact domain with piecewise smooth~boundary.

The $k$-jet bundle $\pi_{k,0}:{\sf J}^kY\rightarrow Y$, $k=1,2,\dots$. On~$\pi^{-1}(U)\subset Y$ take the local coordinates
\[
	\left(x;u^{(k)}\right):=
		(x_1,\dots,x_i,\dots,x_n;u^1,\dots,u^a,\dots,u^m;\dots,u^{a}_I,\dots)
			\in {\sf J}^k\pi^{-1}(U)
\]
where $ i=1,\dots, n;\, a=1,\dots,m;$ and $I=(i_1,\dots,i_n)$ denotes a {\em multiindex} of degree $|I|:=i_1+\dots+i_n=0,1,\dots,k$, $i_j\geq 0,i_j\in\mathbb{Z}$. For~$I=\emptyset $, we define $u_\emptyset^a=u^a$. We denote the projection of the $(k+1)$-jet onto the $k$-jet as $\pi_{k+1,k}:{\sf J}^{k+1}Y\rightarrow {\sf J}^kY$. For~a section $\phi:M\rightarrow Y$, we denote its $k$-jet as ${\sf j}^k\phi:M\rightarrow {\sf J}^kY$, where
\[
	{\sf j}^k\phi(x)=
		\left(\phi^1(x),\dots,\phi^m(x);\dots,
		\frac{\partial^{|I|} \phi^{a}}{\partial x_1^{i_1}\dots\partial x_n^{i_n}},\dots\right)
\]





Denote the space of $p$-forms on ${\sf J}^kY$ as ${\mathsf{\Omega}}^p({\sf J}^kY)$. For~the decomposition $p=r+s$, denote the space of $r$-horizontal and  $s$-vertical forms on ${\sf J}^kY$ as ${\mathsf{\Omega}}^{r,s}({\sf J}^kY)$, have as basis the $(r+s)$-forms $\vartheta^{a_1}_{I_1}\wedge\dots\wedge\vartheta^{a_s}_{I_s}\wedge dx_{j_1}\wedge\dots\wedge dx_{j_r}$, where

The Cartan distribution on ${\sf J}^kY$ is generated by the basis of contact $1$-forms (\ref{equation:vertical-basis})
\begin{equation}\label{equation:vertical-basis}
	\vartheta^a_I:=
		du^a_I-\sum_{j=1}^nu^a_{(I,j)}dx_j
			\in{\mathsf{\Omega}}^1\left(J^{|I|+1}Y\right),
				\, 
			|I|\leq k-1,a=1,\dots,m.
\end{equation}

The vertical differential $\mathsf{d_v}$ for $F\in{\mathsf{\Omega}}^0({\sf J}^kY)$ defined as
\[
	\mathsf{d_v}:{\mathsf{\Omega}}^{r,s}\left({\sf J}^kY\right)
		\rightarrow
		{\mathsf{\Omega}}^{r,s+1}\left({\sf J}^kY\right),
	\qquad
	\mathsf{d_v}F
	:=
		\sum_{0\leq|I|\leq k} \sum_{a=1}^m
		\frac{\partial F}{\partial u^a_I}\vartheta^a_I,
\]
then we are forced to consider the horizontal differential with range in the $(p+1)$-forms in ${\sf J}^{k+1}Y$
\[
	\mathsf{d_h}:{\mathsf{\Omega}}^{r,s}\left({\sf J}^kY\right)
		\rightarrow
		{\mathsf{\Omega}}^{r+1,s}\left({\sf J}^{k+1}Y\right),
	\qquad
	\mathsf{d_h}F:=\sum_{i=1}^n\frac{\mathsf{d}}{\mathsf{d}x_i}^{(k+1)}(F)\,dx_i.
\]
where
\[
	\left(\frac{\mathsf{d}}{\mathsf{d}x_i}\right)^{(k)}
		:=
		\frac{\partial}{\partial x_i}
		+
		\sum_{0\leq |J|\leq k-1}\sum_{a=1}^m
			u^a_{(J,i)}\frac{\partial}{\partial u^a_J}.
\]

The injective limit ${\sf \Omega}^{r,s}({\sf J}Y):=\varinjlim_{\pi_{k+1,k}^*} {\sf \Omega}^{r,s}({\sf J}^kY)$, models the $p$ forms in the infinite jet space ${\sf J}Y=\varprojlim_{\pi_{k+1,k}}{\sf J}^kY$. We have the identities
\[
	\mathsf{d_v}^2=0,\qquad\mathsf{d_h}^2=0,\qquad
	\mathsf{d_v}\mathsf{d_h}+\mathsf{d_h}\mathsf{d_v}=0.
\]

Hence, the~following diagram commutes
\[\xymatrix{
\vdots&\vdots&\vdots&
\\
{\mathsf{\Omega}}^{n-2,2}({\sf J}Y)		\ar[r]^{\mathsf{d_h}}		\ar[u]^{\mathsf{d_v}}
&
{\mathsf{\Omega}}^{n-1,2}({\sf J}Y)		\ar[r]^{\mathsf{d_h}}			\ar[u]^{\mathsf{d_v}}
&
{\mathsf{\Omega}}^{n,2}({\sf J}Y)		\ar[u]^{{\sf d_v}}	
\\
{\mathsf{\Omega}}^{n-2,1}({\sf J}Y)		\ar[r]^{\mathsf{d_h}}		\ar[u]^{\mathsf{d_v}}
&
{\mathsf{\Omega}}^{n-1,1}({\sf J}Y)		\ar[r]^{\mathsf{d_h}}	
\ar[u]^{\mathsf{d_v}}
&
{\mathsf{\Omega}}^{n,1}({\sf J}Y)	\ar[u]^{{\sf d_v}}
\\
{\mathsf{\Omega}}^{n-2,0}({\sf J}Y)		\ar[r]^{\mathsf{d_h}}		\ar[u]^{\mathsf{d_v}}
&
{\mathsf{\Omega}}^{n-1,0}({\sf J}Y)		\ar[r]^{\mathsf{d_h}}			\ar[u]^{\mathsf{d_v}}
&
{\mathsf{\Omega}}^{n,0}({\sf J}Y)	\ar[u]^{{\sf d_v}}
\\
{\mathsf{\Omega}}^{n-2}(M)		\ar[r]^{d_{dR}}		\ar@{^{(}->}[u]^{\pi^*}
&
{\mathsf{\Omega}}^{n-1}(M)		\ar@{^{(}->}[u]^{\pi^*}
&
H^{n-1}_{dR}(M)		\ar@{^{(}->}[u]^{\pi^*}
}\]

Derivations in the algebra of smooth functions on ${\sf \Omega}^0({\sf J}Y)$,
\[
		{V} =
			\sum_{i=1}^n
				a_i\left(x;u\right)\frac{\partial}{\partial x_i}
			+
			\sum_{a=1}^m
				V^{a}\left(x;u\right)\frac{\partial}{\partial u^a}
\]
 are in correspondence with sections $V\in \Gamma(\pi_{\infty,0}^*(\pi^{\sf v}))$, where $\pi_{\infty,0}^*(\pi^{\sf v})$ is the pullback under $\pi_{\infty,0}:{\sf J}Y\rightarrow Y$ of the vertical (vector) bundle, $\pi^{\sf v}: Y^{\sf v}\rightarrow Y$, whose fiber at each $(x,u)\in Y$ is consists of the vertical fibers $ Y_{(x,u)}^{\sf v}= T_{(x,u)}\pi^{-1}(x)$. In~fact, its {\em prolongations}
\begin{equation}\label{equation:prolongation}
		{\sf j}^kV=
			\sum_{i=1}^na_i\left(x;u\right)	\left(\frac{\mathsf{d}}{\mathsf{d}x_i}\right)^{(k)}+
			\sum_{0\leq |I| \leq k-1}
			\sum_{a=1}^mb^{a}_I\left(x;u^{(k)}\right)\frac{\partial}{\partial u^a_I}		
\end{equation}
where $b_I^a:= D_I^{(k)}V^a+\sum_{j=1}^nu^a_{jI}a_j$, $|I|\leq k-1$, act as infinitesimal symmetries of the Cartan distribution in ${\sf J}Y$ in the sense that
\begin{equation}\label{eqn:contactomorphism}
	\mathscr{L}_{{\sf j}V}\vartheta^a_I
	=
	0.
\end{equation}

Here the horizontal derivative operator $D_I^{(k)}$ equals $
D_{i_1}^{(k)}\circ D_{i_2}^{(k)}\circ \dots \circ D_{i_n}^{(k)},$ with
\[
	\,
		D_{i_s}^{(k)}
		=
		\left(\frac{\mathsf{d}}{\mathsf{d}x_s}\right)^{(k)}
                \circ\dots\circ
 	        \left(\frac{\mathsf{d}}{\mathsf{d}x_s}\right)^{(k)}
		=
		\left(
                \left(\frac{\mathsf{d}}{\mathsf{d}x_s}\right)^{(k)}
                \right)^{\circ i_s},
\]


We will assume that $V\in  \Gamma(\pi_{\infty,0}^*(\pi^{\sf v}))$ has no horizontal component. Hence $a_i=0$ and
\begin{equation}\label{equation:prolongation-phi}
		{\sf j}V
		=
                \sum_{a=1}^m\left(
                V^{a}\frac{\partial}{\partial u^a}
                +
                \sum_{i=1}^n
                \left(\frac{\mathsf{d}}{\mathsf{d}x_i}\right)^{(k)}
                (V^{a})\frac{\partial}{\partial u^a_i}
                +
		\sum_{2\leq |I| \leq k}
		D_I^{(k)}(V^{a})\frac{\partial}{\partial u^a_I}\right).	
\end{equation}

We call this space the space of {\em evolutionary vector fields},
\begin{equation}\label{eq:Phi}
	\mathfrak{Ev}({\sf J}Y)=
	\left\{
		V\in\Gamma\left(\pi_{\infty,0}^*(\pi^{\sf v} )\right)
		\,:\,
		\frac{\partial V^a}{\partial u^a_i}
	=
	\frac{\partial V^b}{\partial u^b_i},
	\,
	\frac{\partial V^b}{\partial u^a_i}=0,
	\,
	a\neq b
	\right\}
\end{equation}
where the functions $V^a$ are {\em local} in the sense that they depend on a finite number of derivatives of $u$.

For a first-order Lagrangian variational problem in a region $U\subseteq M$, the~{\em space of first variations of histories}, $v=\delta\phi$, for~a fixed $\phi\in\Gamma(Y\vert_U)$, can be modeled as
\[
		V
			\in
		\left\{
			V\in\Gamma\left(\pi_{1,0}^*(\pi )\vert_{\cal U}\right)
			\,:\,
			V\in \mathfrak{Ev}\left( {\cal U}\subseteq {\sf J}Y \right)
		\right\}
\]
 where $\mathcal{U}\subseteq {\sf J}Y,$ is a neighborhood of the graph ${{\sf j}\phi(U)}$. 






Let $\nu=dx_1\wedge\dots\wedge dx_n\in{\mathsf{\Omega}}^n(M)$ be a fixed volume $n$-form on $M$, and~consider the Lagrangian density, $L={\rm L}\cdot \nu\in {\mathsf{\Omega}}^{n,0}({\sf J}^1Y)$, with~Lagrangian
\[
	{\rm L}={\rm L}(x_i,u^a,u^a_i)\in{\mathsf{\Omega}}^0({\sf J}^1Y).
\]

Consider the action functional on $U\subset M$, 
\[
	{S}_{U}(\phi)=\int_U({\sf j}^1\phi)^*{L}
\]
if we take the vertical derivative
\begin{equation}\label{eq:0}
	\pi_{2,1}^*(\mathsf{d_v}{L})
	=
	\mathsf{d_h}\Theta_L + E({L})
\end{equation}
\[
	\Theta_L=
		-
		\sum_{i=1}^n\sum_{a=1}^m
			\frac{\partial {\rm L}}{\partial u_i^a}\vartheta^a\wedge\nu_i
				\in {\mathsf{\Omega}}^{n-1,1}\left({\sf J}^1Y\right),\qquad
					dx_i\wedge \nu_i=\nu
\]
and $
	E(L)\left(x;u^{(2)}\right)=
		\sum_{a=1}^mE_a({\rm L})\cdot \vartheta^a\wedge\nu
			\in {\mathsf{\Omega}}^{n,1}({\sf J}^2Y),$
then the Euler-Lagrange equations are
\[
	E_a({\rm L})	=
		\frac{\partial {\rm L}}{\partial u^a}-\sum_{i=1}^n\frac{\mathsf{d}}{\mathsf{d}x_i}^{(2)}\left(\frac{\partial {\rm L} }{\partial u^a_i}\right)	=
			0,
			\qquad a=1,\dots,m.
\]

Recall that $({\mathsf{d}}/{\mathsf{d}x_i})^{(2)}=\partial /\partial x_i+\sum_b \left(u^b_i\partial/\partial u^b+\sum_ju^b_{ij}\partial/\partial u_j^b\right)$. Another way of obtaining the Euler-Lagrange equations is by $E(L)={\sf I}({\sf d_v}L)$ where we use the integration by parts operator $
	{\sf I}:{\sf \Omega}^{n,s}({\sf J}^1 Y)\rightarrow {\sf \Omega}^{n,s}({\sf J}^2 Y)
$ for $s>0$, satisfying ${\sf I}\circ {\sf d_h}=0,$ ${\sf I}\circ{\sf I}={\sf I}$ and that $\alpha-{\sf I}\alpha$ is ${\sf d_h}$-exact. In~coordinates ${\sf I}$ it is given by 
\[
	{\sf I}(\alpha)
	=
	\frac{1}{s}\vartheta^a\wedge 
	\sum_{a=1}^m
	\left[
		\iota_{\frac{\partial}{\partial u^a}}\alpha
		-
		\sum_{j=1}^n
		\frac{\sf d}{{\sf d} x_j}\left(\iota_{\frac{\partial}{\partial u^a_j}}\alpha\right)
	\right]
.\]

Meanwhile, the~{\em locus of the Euler-Lagrange} PDEs
\[
	\mathcal{E}_L:=
		{\sf j}\left\{
			\left(x;u^{(2)}\right)\in {\sf J}^2Y		\,:\,
			E(L)\left(x;u^{(2)}\right)=0
		\right\}
			\subseteq {\sf J}Y.
\]

The {\em space of solutions} of the Euler-Lagrange equations
\[
	{\cal A}_U:=
		\left\{\phi\in\Gamma(Y)\,:
			\, {\sf j}\phi(M)\subseteq  \mathcal{E}_L
	 \right\}.
\]

On the other hand, if~we define the form $\Omega_L=-\mathsf{d_v}\Theta_L\in {\mathsf{\Omega}}^{n-1,2}\left({\sf J}^1Y\right)$, or~\[
		\Omega_L=
		\sum_{i=1}^n
		\left(
			\sum_{b,a=1}^m
			\left(
					\frac{\partial^2 {\rm L}}{\partial u^b\partial u^a}
					\vartheta^b\wedge\vartheta^a
			+
				\sum_{j=1}^n
					\frac{\partial^2{\rm L}}{\partial u^b_j\partial u^a_i}
					\vartheta_j^b\wedge\vartheta^a
			\right)
		\right)\wedge\nu_i.
\]

For a first variation $\delta \phi$ modeled by $V\in\mathfrak{Ev}({\sf J}Y\vert_U)$, let us consider the Cartan formula for \mbox{vertical derivation}
\begin{equation}\label{eqn:Cartan-vertical}
	\mathscr{L}_{V}b
		=
	\mathsf{d_v}\left(\iota_{{\sf j}V} b\right)+\iota_{{\sf j}V} \mathsf{d_v} b
\end{equation}
see~\cite{Anderson} Proposition {1.16}. Then
\begin{equation}\label{eqn:Cartan}	
	\pi_{2,1}^*\left(\mathscr{L}_{{\sf j}V}b\right)
	=
	{d}\left(
		\pi_{2,1}^*\left(
			\iota_{{\sf j}^1 V}b
		\right)
	\right)
	  	+
	\iota_{{\sf j}^1V}({d}(\pi_{2,1}^* b))
\end{equation}

Therefore,
\[
	\mathsf{d_h}\left(\iota_{{\sf j}^1V} b\right)
	=
	(d\circ\pi_{2,1}^*-\pi_{2,1}^*\circ\mathsf{d_v})	\left(\iota_{{\sf j}^1V} b\right)
	=
\]
\[
	-	\left(
		\iota_{{\sf j}^2V}
			(d\circ \pi_{2,1}^*-\pi_{2,1}^*\circ \mathsf{d_v})
		\right)
			b
	=
	-
	\iota_{{\sf j}^2V}	\left(\mathsf{d_h}b\right)
\]	
or
\begin{equation}\label{eqn:d_h-iota}
	\mathsf{d_h}\left(\iota_{{\sf j}^1V}(\cdot)\right)
	=
	-
	\iota_{{\sf j}^2V}	\left(\mathsf{d_h}(\cdot)\right).
\end{equation}

In particular $
	\mathsf{d_h}\left(\iota_{{\sf j}^1V}\Theta_L\right)
	=
	-
	\iota_{{\sf j}^2V}	\left(\mathsf{d_h}\Theta_L\right)$.
Hence the variation for the action is
\[
		\left.\frac{d}{d\geps}\right\vert_{\geps=0}
			{S}_{U}(\phi_\geps)
	=
		\left.\frac{d}{d\geps}\right\vert_{\geps=0}
			\int_U({\sf j}^1\phi_\geps)^*{{L}}
	=
\]
\[
		 \int_U{\sf j}^1\phi^*\left(\mathscr{L}_{{\sf j}^1V}{L}\right)
	=
		 \int_U{\sf j}^1\phi^*
		 	\left(
		 		\iota_{{\sf j}^1V} \mathsf{d_v}L
			\right)
	=
\]
\[
		 \int_U{\sf j}^2\phi^*
		 	\left(
		 		\iota_{{\sf j}^2V} \pi_{2,1}^*(\mathsf{d_v}{L})
			\right)
	=
\]
\[
		 \int_U{\sf j}^2\phi^*\left(\iota_{j^2X}\mathsf{d_h}\Theta_L
		 +
				\iota_{{\sf j}^2V} E(L)
		 \right)
	=
\]
\[
		-
		\int_{ U}{\sf j}^2\phi^*(\mathsf{d_h}+\pi_{2,1}^*\circ \mathsf{d_v})\left(\iota_{{\sf j}^1V}\Theta_L\right) 
			+ 
		 \int_U{\sf j}^1\phi^*\left(\iota_{{\sf j}^2V}E(L)\right)
\]
\[
		+
		 \int_U{\sf j}^1\phi^*\left(\mathsf{d_v}(\iota_{{\sf j}^1V}\Theta_L)\right)
\]
\[
	=
	-
		\int_{ U}{\sf j}^2\phi^*d\left(\iota_{{\sf j}^1V} \Theta_L\right) 
			+ 
		 \int_U{\sf j}^2\phi^*\left(\iota_{{\sf j}^2V} E(L)\right)
\]
\[
=	-\int_{ \partial U}{\sf j}^1\phi^*\left(\iota_{{\sf j}^1V}\Theta_L\right) 
			+ 
		 \int_U{\sf j}^2\phi^*\left(\iota_{{\sf j}^2V} E(L)\right).
\]

\begin{proposition}\label{pro:partial-EL}
Let $\hat{\Theta}_L = - \Theta_L + L$ be for the Poincar\`e-Cartan form, which is also the principal Lepage equivalent of $\Theta_L$, and~let $\phi$ be a section. The~following assertions are~equivalent
\begin{enumerate}
\item $\phi\in {\cal A}_U$.

\item For every vertical vector field, $V\in\mathfrak{Ev}({\sf J}Y\vert_U)$,  the~$n$-form $({\sf j}\phi)^*\iota_{{\sf j}V}\hat{\Omega}_L$ in $U\subseteq M$,  vanishes.

\item The Euler-Lagrange equations hold for every $x\in U$
\[
	\frac{\partial {\rm L}}{\partial u^a}({\sf j}\phi(x))
		-
	\sum_{i=1}^n 
	 	\frac{\partial^2 {\rm L}}{{\partial x_i}{\partial u^a_i}}({\sf j}\phi(x))
	=0
		,	\qquad
	\forall
	a=1,\dots, m.
\]
\end{enumerate}
\end{proposition}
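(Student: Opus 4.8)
The plan is to prove the chain $(1)\Leftrightarrow(3)$ and $(3)\Leftrightarrow(2)$: the first equivalence is essentially unwinding the definition of ${\cal A}_U$, while the second rests on a single algebraic identity for the Poincar\'e--Cartan form together with the fact that holonomic prolongations annihilate contact forms. I would establish that identity first, since it controls everything else.

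For $(1)\Leftrightarrow(3)$, recall that $\phi\in{\cal A}_U$ means ${\sf j}\phi(U)\subseteq{\cal E}_L$, i.e. $({\sf j}\phi)^*E(L)=0$. Writing $E(L)=\sum_a E_a({\rm L})\,\vartheta^a\wedge\nu$ and pulling back along the prolongation, the conditions collapse to the scalar equations $E_a({\rm L})({\sf j}\phi(x))=0$ for all $a$ and all $x\in U$, which in coordinates is exactly the Euler--Lagrange system displayed in $(3)$. No further work is needed here beyond reading off the coordinate expression for $E_a({\rm L})$ already recorded in the appendix.

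The substance is $(2)\Leftrightarrow(3)$. I would first record the identity $\hat\Omega_L=d\hat\Theta_L=\Omega_L+E(L)$: using $d=\mathsf{d_h}+\mathsf{d_v}$, the vanishing $\mathsf{d_h}L=0$ for degree reasons, the structure equation (\ref{eq:0}) in the form $\mathsf{d_v}L=\mathsf{d_h}\Theta_L+E(L)$, and $\Omega_L=-\mathsf{d_v}\Theta_L$, the two $\mathsf{d_h}\Theta_L$ contributions cancel and leave precisely $\Omega_L+E(L)$. Contracting with an evolutionary field gives $\iota_{{\sf j}V}\hat\Omega_L=\iota_{{\sf j}V}\Omega_L+\iota_{{\sf j}V}E(L)$, where the first summand lies in ${\sf \Omega}^{n-1,1}({\sf J}Y)$ and the second equals $\big(\sum_a V^a E_a({\rm L})\big)\nu$. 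Pulling back by ${\sf j}\phi$ annihilates the first summand, since it still carries one contact factor and prolongations satisfy $({\sf j}\phi)^*\vartheta^a_I=0$; hence
\[
	({\sf j}\phi)^*\big(\iota_{{\sf j}V}\hat\Omega_L\big)
	=
	\sum_{a=1}^m \big(V^a\,E_a({\rm L})\big)\circ{\sf j}\phi\;\,\nu .
\]
Since the component functions $V^a$ of an evolutionary field may be prescribed with arbitrary values at any point of $U$, this $n$-form vanishes for every $V\in\mathfrak{Ev}({\sf J}Y\vert_U)$ if and only if $E_a({\rm L})({\sf j}\phi(x))=0$ for all $a$ and all $x$, i.e. if and only if $(3)$ holds, closing the cycle.

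The main obstacle is bookkeeping rather than conceptual: keeping the bidegrees and jet orders straight when expanding $d\hat\Theta_L$, so that the horizontal $\mathsf{d_h}\Theta_L$ contributions genuinely cancel, and justifying that the $\Omega_L$-term drops under pullback, which is exactly the statement that holonomic prolongations annihilate the Cartan ideal. The one genuine hypothesis invoked in the ``only if'' direction of $(2)$ is the pointwise arbitrariness of the evolutionary coefficients $V^a$; this is the analogue of the fundamental lemma of the calculus of variations and is what upgrades a contraction identity into the Euler--Lagrange equations.
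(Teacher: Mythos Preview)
The paper does not supply a proof of this proposition; it merely states the result and refers the reader to Forger--Romero and Goldschmidt--Sternberg, so there is no ``paper's proof'' to compare against. Your treatment of $(2)\Leftrightarrow(1)$ is the standard and correct one: the decomposition $\hat\Omega_L=d\hat\Theta_L=\Omega_L+E(L)$ follows from the structure equation~(\ref{eq:0}) exactly as you say, and the bidegree argument that kills $({\sf j}\phi)^*\iota_{{\sf j}V}\Omega_L$ while leaving $\sum_a V^a E_a({\rm L})\circ{\sf j}\phi\,\nu$ is clean.

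Where you are too quick is the equivalence $(1)\Leftrightarrow(3)$. You write that this is ``no further work beyond reading off the coordinate expression for $E_a({\rm L})$'', but the paper itself, in the remark immediately following the proposition, stresses that the two are \emph{not} the same expression: $E_a({\rm L})$ involves the total horizontal derivative ${\sf d}/{\sf d}x_i$, whereas the equations in item~(3) are written with the partial derivative $\partial/\partial x_i$. Concretely,
\[
\frac{\sf d}{{\sf d}x_i}\frac{\partial {\rm L}}{\partial u^a_i}
=\frac{\partial^2{\rm L}}{\partial x_i\partial u^a_i}
+u^b_i\frac{\partial^2{\rm L}}{\partial u^b\partial u^a_i}
+u^b_{ij}\frac{\partial^2{\rm L}}{\partial u^b_j\partial u^a_i},
\]
so the two versions differ by terms that do not obviously vanish. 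Whatever the intended reading of the partial-derivative expression in~(3) --- most plausibly $\frac{\partial}{\partial x_i}\big[\tfrac{\partial {\rm L}}{\partial u^a_i}\circ{\sf j}\phi\big]$, which agrees with $E_a({\rm L})\circ{\sf j}^2\phi$ by the chain rule --- your write-up should make this step explicit rather than declare it immediate. That distinction is precisely the content the proposition is meant to record, and it is the one place where a short computation is genuinely required.
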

\vspace{-6pt}

Notice that in the Euler-Lagrange equations $E_a({\rm L})=0$ arising from ${\sf j}^2\phi^*E(L)=0$, the~total horizontal derivations ${\sf d}/{\sf d}x_i$ are involved. Meanwhile, the Euler-Lagrange equations mentioned in Proposition \ref{pro:partial-EL} deal with partial horizontal derivations, $\partial/\partial x_i$, see~\cite{Forger, Sternberg-G}.

In general for an $(n-1)$-dimensional manifold $\Sigma\subseteq U$, we can define the $1$-form
\[
	(\theta_{L\Sigma})_\phi(v)
	:=
	\int_{\Sigma}{\sf j}\phi^*\left(\iota_{{\sf j}V}\Theta_L\right),
		\qquad
		\forall	\phi, 	v=\delta{\phi},
\]
where the variation $v=\delta\phi$ corresponds to $V\in \mathfrak{Ev}({\sf J}Y\vert_U)$. For~$\phi\in{\cal A}_U$,
\[
	\left(\mathrm{d}S_{U}\vert_{{\cal A}_U}\right)_\phi(v)
	=
	-\left(\left.\theta_{L,\partial U}\right\vert_{{\cal A}_U}\right)_\phi(v)
\]

Define the {\em presymplectic structure}  $\omega_{L\Sigma}:=-\mathrm{d}\theta_{L\Sigma}$, $\forall \phi,$ so that $
	\forall v=\delta\phi,v'=(\delta\phi)'$ we have
\[
	(\omega_{L\Sigma})_\phi(v,v')
	=
	\int_{\Sigma}{\sf j}\phi^*\left(\iota_{{\sf j}V'}\iota_{{\sf j}V}\Omega_L\right).
\]

From $\mathsf{d_v}\Omega_L=0$ it follows that $\mathrm{d}\omega_{L\Sigma}=0$.


\section*{Acknowledgements}
{This research was funded by CONACYT-MEXICO, grant number~58132.}

{The author thanks J. A. Zapata since most of the results of this article arise as a particular application of the results sketched in a joint work~\cite{DM-Z} see also~\cite{Zapata(2019)}. Along the review of~\cite{DM-Z} many clarifications came from correspondence with Luca Vitagliano and notes from an anonymous referee who provided many clarifications for the difficulties arising in non-linear field~theories.} {The author declare no conflict of~interest.} 


The following abbreviations are used in this manuscript:\\

\noindent 
\begin{tabular}{@{}ll}
PDE & Partial differential equation\\
YM & Yang-Mills\\
BVP & Boundary value problem\\
\end{tabular}}



\label{last}
\end{document}